\newif\ifFULL
\providecommand{\B}{\operatorname{B}}
\newtheorem{lemma}{Lemma}
\newtheorem*{lemma*}{Lemma}
\newtheorem{theorem}{Theorem}
\newtheorem*{theorem*}{Theorem}
\newtheorem{definition}{Definition}
\newtheorem{corollary}{Corollary}
\newtheorem*{corollary*}{Corollary}
\newtheorem*{claim*}{Claim}
\numberwithin{theorem}{section}
\numberwithin{lemma}{section}
\newtheorem*{definition*}{Definition}
\newlength\myindent
\newcommand\bindent{%
\begingroup
 \setlength{\itemindent}{\myindent}
 \addtolength{\algorithmicindent}{\myindent}
}
\newcommand\eindent{\endgroup}
\newcommand{\INPUT}{\item[{\bf Input:}]}
\renewcommand{\algorithmiccomment}[1]{\bgroup\hfill\footnotesize~#1\egroup}
\newcommand{\prob}[2][]{\textnormal{Pr}\ifthenelse{\not\equal{}{#1}}{_{#1}}{}\!\left(#2\right)}
\newcommand{\expect}[2][]{\text{\bf E}\ifthenelse{\not\equal{}{#1}}{_{#1}}{}\!\left[#2\right]}
\newcommand{\var}[2][]{\text{\bf Var}\ifthenelse{\not\equal{}{#1}}{_{#1}}{}\!\left[#2\right]}
\newcommand{\given}{\, : \,}
\newcommand{\ba}{\mathbf{a}}
\newcommand{\bx}{\mathbf{x}}
\newcommand{\by}{\mathbf{y}}
\newcommand{\bz}{\mathbf{z}}
\DeclareMathOperator{\argmax}{argmax}
\newcommand{\PC}{\mathcal{P}}   
\newcommand{\UC}{\mathcal{U}}   
\newcommand{\DC}{\mathcal{D}}   
\title{The Limitations of Optimization from Samples}
\author{Eric Balkanski\footnote{School of Engineering and Applied Sciences, Harvard University. ericbalkanski@g.harvard.edu.  This research was supported by Smith Family Graduate Science and Engineering Fellowship and by NSF grant CCF-1301976, CAREER CCF-1452961.} 
\qquad Aviad Rubinstein\footnote{Department of Electrical Engineering and Computer Sciences, UC Berkeley. aviad@eecs.berkeley.edu. 
This research was supported by Microsoft Research PhD Fellowship, NSF grant CCF1408635, and Templeton Foundation grant 3966, and done in part at the Simons Institute for the Theory of Computing.} 
\qquad Yaron Singer\footnote{School of Engineering and Applied Sciences, Harvard University. yaron@seas.harvard.edu.  This research was supported by NSF grant CCF-1301976, CAREER CCF-1452961.}}
\date{}
\begin{document}

\thispagestyle{empty}\maketitle\setcounter{page}{0}

\begin{abstract}
In this paper we consider the following question: can we optimize objective functions from the training data we use to learn them?  We formalize this question through a novel framework we call \emph{optimization from samples} (\texttt{OPS}). In \texttt{OPS}, we are given sampled values of a function drawn from some distribution and the objective is to optimize the function under some constraint.  

While there are interesting classes of functions that can be optimized from samples,  our main result is an impossibility.  We show that there are classes of functions which are statistically learnable and optimizable, but for which no reasonable approximation for optimization from samples is achievable.  In particular, our main result shows that there is no constant factor approximation for maximizing coverage functions under a cardinality constraint using polynomially-many samples drawn from any distribution.
%

We also show tight approximation guarantees for maximization under a cardinality constraint of several interesting classes of functions including unit-demand, additive, and general monotone submodular functions, as well as a constant factor approximation for monotone submodular functions with bounded curvature.
\end{abstract}
\newpage
\section{Introduction}

The traditional approach in optimization typically assumes there is an underlying model known to the algorithm designer, and the goal is to optimize an objective function defined through the model.  In a routing problem, for example, the model can be a weighted graph which encodes roads and their congestion, and the objective is to select a route that minimizes expected travel time from source to destination.  In influence maximization, we are given a weighted graph which models the likelihood of individuals forwarding information, and the objective is to select a subset of nodes to spread information and maximize the expected number of nodes that receive information~\cite{kempe2003maximizing}.  

In many applications like influence maximization or routing, we do not actually know the objective functions we wish to optimize since they depend on the behavior of the world generating the model.  In such cases, we gather information about the objective function from past observations and use that knowledge to optimize it.  A reasonable approach is to learn a surrogate function that approximates the function generating the data (e.g.~\cite{GLK10,DSGZ13,DGSS14,DLBS14a,DBS14b,NPS15}), and optimize the surrogate.  In routing, we may observe traffic, fit  weights to a graph that represents congestion times, and optimize for the shortest path on the weighted graph learned from data.  In influence maximization, we can observe information spreading in a social network, fit weights to a graph that encodes the influence model, and optimize for the $k$ most influential nodes.  But what are the guarantees we have?


One problem with optimizing a surrogate learned from data is that it may be inapproximable.  For a problem like influence maximization, for example, even if a surrogate $\widetilde{f}:2^N \to \mathbb{R}$ approximates a submodular influence function $f:2^N\to \mathbb{R}$ within a factor of $(1\pm \epsilon)$ for sub-constant $\epsilon>0$, in general there is no polynomial-time algorithm that can obtain a reasonable approximation to $\max_{S:|S|\leq k}\widetilde{f}(S)$ or $\max_{S:|S|\leq k}{f}(S)$~\cite{HS15}.   A different concern is that the function learned from data may be approximable (e.g. if the surrogate remains submodular), but its optima are very far from the optima of the function generating the data.  In influence maximization, even if the weights of the graph are learned within a factor of $(1\pm \epsilon)$ for sub-constant $\epsilon>0$ the optima of the surrogate may be a poor approximation to the true optimum~\cite{NPS15,HK16}.  The sensitivity of optimization to the nuances of the learning method therefore raises the following question:

\begin{center}
\emph{Can we actually optimize objective functions from the training data we use to learn them?}
\end{center}


%



\paragraph{Optimization from samples.}  In this paper we consider the following question: given an unknown objective function $f:2^N\to \mathbb{R}$ and samples $\{S_i,f(S_{i})\}_{i=1}^m$ where $S_i$ is drawn from some distribution $\mathcal{D}$ and $m \in \poly (|N|)$, is it possible to solve $\max_{S:|S|\leq k}f(S)$? More formally:

\begin{definition*}
A class of functions $\mathcal{F}:2^N\to \mathbb{R}$ is \textbf{\emph{$\alpha$-optimizable in $\mathcal{M}$ from samples over  distribution $\mathcal{D}$}}  if there exists a (not necessarily polynomial time) algorithm whose input is a set of samples $\{S_{i},f(S_i)\}_{i=1}^{m}$, where $f \in \mathcal{F}$ and $S_i$ is drawn i.i.d. from $\mathcal{D}$, and returns $S \in \mathcal{M}$ s.t.:
$$\Pr_{S_1,\dots,S_m \sim \mathcal{D}} 
\left[ \expect{f(S)} \geq \alpha \cdot \max_{T\in  \mathcal{M}}f(T) \right] \geq 1 - \delta,$$
where the expectation is over the decisions of the algorithm, $m \in \poly (|N|)$, $\delta \in [0 , 1)$ is a constant.  
\end{definition*}

An algorithm with the above guarantees is an $\alpha$-\texttt{OPS} algorithm. In this paper we focus on the simplest constraint, where $\mathcal{M} = \{S\subseteq N : |S|\leq k\}$ is a cardinality constraint.  For a class of functions $\mathcal{F}$ we say that optimization from samples is \emph{possible} when there exists some constant $\alpha \in (0,1]$ and any distribution $\mathcal{D}$ s.t. $\mathcal{F}$ is $\alpha$-optimizable from samples over $\mathcal{D}$ in $\mathcal{M} = \{S:|S|\leq k\}$.    

Before discussing what is achievable in this framework, the following points are worth noting:
\begin{itemize}
\item Optimization from samples is defined per distribution.  Note that if we demand optimization from samples to hold on all distributions, then trivially no function would be optimizable from samples (e.g. for the distribution which always returns the empty set);  

\item Optimization from samples seeks to approximate the global optimum.  
In learning, we evaluate a hypothesis on the same distribution we use to train it since it enables making a prediction about events that are similar to those observed.  For optimization it is trivial to be competitive against a sample by simply selecting the feasible solution with maximal value from the set of samples observed.  Since an optimization algorithm has the power to select any solution, the hope is that polynomially many samples contain enough information for optimizing the function.  In influence maximization, for example, we are interested in selecting a set of influencers, even if we did not observe a set of highly influential individuals that initiate a cascade together.
\end{itemize}

As we later show,  there are interesting classes of functions and distributions that indeed allow us to approximate the global optimum well, in polynomial-time using polynomially many samples.  The question is therefore not whether optimization from samples is possible, but rather  which function classes are optimizable from samples.  









%

\subsection{Optimizability and learnability}
Optimization from samples is particularly interesting when functions are \emph{learnable} and \emph{optimizable}.




\begin{itemize}
\item \textbf{Optimizability.}  We are interested in functions $f:2^N \to \mathbb{R}$ and  constraint $\mathcal{M}$ such that given access to a \emph{value oracle} (given $S$ the oracle returns $f(S)$), there exists a constant factor approximation algorithm for $\max_{S\in \mathcal{M}}f(S)$.   For this purpose, monotone submodular functions are a convenient class to work with, where the canonical problem is $\max_{|S|\leq k}f(S)$.  It is well known that there is a $1-1/e$ approximation algorithm for this problem~\cite{NWF78} and that this is tight using polynomially many value queries~\cite{feige1998threshold}.  Influence maximization is an example of maximizing a monotone submodular function under a cardinality constraint~\cite{kempe2003maximizing}.  
%
\item \textbf{PMAC-learnability.} The standard framework in the literature for learning set functions is   \emph{Probably Mostly Approximately Correct} ($\alpha$-\texttt{PMAC}) learnability due to Balcan and Harvey~\cite{BH11-PMAC}.  This framework nicely generalizes Valiant's notion of \emph{Probably Approximately Correct} (\texttt{PAC}) learnability~\cite{Valiant84-PAC}.  Informally, PMAC-learnability guarantees that after observing polynomially many samples of sets and their function values, one can construct a surrogate function that is likely to, $\alpha$-approximately, mimic 
the behavior of the function observed from the samples (see Appendix~\ref{s:learningmodels} for formal definitions).  Since the seminal paper of  Balcan and Harvey, there has been a great deal of work on learnability of submodular functions~\cite{FK14-coverage, BCIW12-valuations,  BDFKNR12-sketches, FV13-submodular_juntas, feldman2015tight, Balcan15-AAMAS}. 
\end{itemize}
 


Are functions that are learnable and optimizable also optimizable from samples?

\subsection{Main result}
Our main result is an impossibility.  We show that there is an interesting class of functions that is \texttt{PMAC}-learnable and optimizable but not optimizable from samples.  This class is coverage functions.

\begin{definition*}  A function is called \textbf{\emph{coverage}} if there exists a family of sets $T_{1},\ldots, T_{n}$ that covers subsets of a universe $U$  with weights $w(a_j)$ for $a_j \in U$ such that for all $S$,  $f(S) = \sum_{a_j \in \cup_{i \in S}T_i}w(a_j)$.  A coverage function is \emph{\textbf{polynomial-sized}} if the universe is of polynomial size in $n$.  Influence maximization is a generalization of maximizing coverage functions under a cardinality constraint. 
\end{definition*}

Coverage functions are a canonical example of monotone submodular functions and are hence optimizable.  In terms of learnability, for any constant $\epsilon>0$, coverage functions are $(1-\epsilon)$-\texttt{PMAC} learnable over any distribution~\cite{BDFKNR12-sketches}, unlike monotone submodular functions which are generally not \texttt{PMAC} learnable~\cite{BH11-PMAC}.  Somewhat surprisingly, coverage functions are not optimizable from samples.



\begin{theorem*}
No algorithm can obtain an approximation better than $2^{-\Omega(\sqrt{\log n})}$ for maximizing a polynomial-sized coverage function under a cardinality constraint, using polynomially many samples drawn from any distribution.
\end{theorem*}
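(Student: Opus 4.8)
The plan is to construct a distribution over coverage functions that "looks the same" on polynomially many samples drawn from $\mathcal{D}$, but whose optima differ by a factor of $2^{\Omega(\sqrt{\log n})}$. Concretely, I would build a family of instances on a universe $U$ partitioned into blocks, where each coverage function in the family hides a small set of "good" elements $G$ of size $k$ that jointly cover a large-weight region of $U$, while every other $k$-subset covers only a $2^{-\Omega(\sqrt{\log n})}$ fraction of that weight. The sets $T_1,\dots,T_n$ are designed so that, unless a sample $S_i$ happens to contain essentially all of $G$, the value $f(S_i)$ is determined by a "generic" component that is identical across the whole family — so the samples carry no information about which $G$ was planted. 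The cardinality bound $k$ and the structure are tuned (this is the role of the $\sqrt{\log n}$ loss, analogous to constructions for $k$-junta-type hardness) so that the probability a polynomial number of samples pins down $G$ is negligible. Then by a union bound / Yao-style argument, any (even computationally unbounded) algorithm receiving $m \in \poly(n)$ samples cannot identify a $k$-set whose expected value beats $2^{-\Omega(\sqrt{\log n})} \cdot f(G)$ with probability $\ge 1-\delta$.

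The key steps, in order, are: (1) fix the block structure of $U$ and the combinatorial design of the $T_i$'s — the main degrees of freedom are the number of blocks $r$, the block size, and how the planted good set interacts with blocks; I would aim for $r \approx 2^{\sqrt{\log n}}$ so that $k = r$ or so, and the "gap" between $f(G)$ and the best non-revealing set is $\Theta(r)$. (2) Show that for any fixed $\mathcal{D}$, the distribution of $(S, f(S))$ for $S \sim \mathcal{D}$ is (statistically) independent of the planted $G$ unless the rare "revealing event" $\{G \subseteq S\}$ or a near-variant occurs; bound the probability of this event over a single sample by something like $n^{-\omega(1)}$, using that $|G| = k$ is mildly superconstant and $G$ is chosen adversarially/randomly from a large pool. (3) Take a union bound over $m = \poly(n)$ samples to conclude that with probability $1 - o(1)$ the algorithm's entire input is independent of $G$. (4) Condition on that event and argue that the algorithm's output $S$, being (distributionally) independent of $G$, satisfies $\expect{f(S)} \le 2^{-\Omega(\sqrt{\log n})} \max_T f(T)$ — here I use that a random/adversarial planted $G$ makes any fixed $k$-set bad in expectation. (5) Verify the constructed functions are genuinely coverage and polynomial-sized (universe size $\poly(n)$), so the theorem applies to the stated class.

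The main obstacle I expect is step (1)–(2) together: designing the $T_i$'s so that the function is simultaneously (a) a legitimate polynomial-sized coverage function, (b) has a single dominant optimum $G$ with a large multiplicative gap to all other feasible sets, and (c) reveals nothing about $G$ on any set that does not nearly contain $G$. Requirement (c) is in tension with (b): a large gap means $G$'s marginal contributions are big, and one must ensure those big marginals only materialize when almost all of $G$ is present — this is exactly why one cannot get a constant-factor gap and instead settles for $2^{-\Omega(\sqrt{\log n})}$, governed by a tradeoff between $|G|$ (must be large enough that no $\poly(n)$ samples contain it, yet at most $k$) and the achievable gap. A secondary technical point is making the argument robust to arbitrary $\mathcal{D}$: the adversary must choose $G$ (or a distribution over $G$'s) after $\mathcal{D}$ is fixed, so I would plant $G$ uniformly at random from an exponentially large family of candidate good sets and argue that for any fixed $\mathcal{D}$, the expected (over $G$) probability that $\mathcal{D}$ frequently reveals $G$ is tiny, then fix a good $G$ by averaging.
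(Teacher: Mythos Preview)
Your high-level template (plant hidden structure, show samples are uninformative, average over the planted choice, conclude via Yao) matches the paper's, but the concrete mechanism you propose cannot work for coverage functions, and the paper's actual construction is quite different from what you sketch.

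The central problem is your ``revealing event.'' You want $f(S)$ to be generic unless $S$ contains essentially all of $G$, so that big marginals of $G$ ``only materialize when almost all of $G$ is present.'' But coverage functions are submodular: marginal contributions are \emph{largest} when few elements are present and only shrink as you add more. If $G$ has large aggregate value, then a single element of $G$ already has marginal contribution at least $f(G)/|G|$ to the empty set, and this is detectable from any sample containing that one element. There is no way to build a coverage function where the value of $G$ is invisible until $G$ is nearly complete; that would require \emph{super}modular behavior. You flag exactly this tension in your last paragraph, but the resolution you gesture at (tuning $|G|$ against the gap) does not escape it.

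The paper resolves the tension by a different mechanism entirely. Rather than one planted good set, the ground set is partitioned into $r$ parts $T_1,\dots,T_r$ (plus a masking part $M$), and the hidden information is \emph{which part $T_i$ is the good one}. The good function $g$ and the bad function $b$ are engineered to be \emph{exactly equal} on every set of size at most $\ell=\log\log n$; this is achieved by writing $g$ and $b$ as affine combinations of primitive coverage functions $C_p(y)=p(1-(1-1/p)^y)$ and solving a linear system $A\mathbf{x}=\mathbf{y}$ so that $g(y)=b'(y)$ for all $y\le\ell$. For small samples ($|S|\le t$), concentration ensures $|S\cap T_j|\le\ell$ for every $j$, so $f^{P,i}(S)$ is literally independent of $i$. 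For large samples, a separate masking construction ensures $m(S\cap M)=1$ with high probability, again making $f^{P,i}(S)$ independent of $i$. The gap comes from the linear term in $g$ dominating at $|S|=k$, while $b$ saturates. None of this is ``hide $G$ until it is fully present''; it is ``make good and bad parts algebraically indistinguishable on small intersections, and blanket everything on large samples.''

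A second gap in your proposal is step~(5). Getting the universe to be polynomial-sized is not a verification afterthought; it is a substantial part of the paper's work. The primitives $C_p$ naturally have $2^{k}$ children. The paper replaces them by polynomial-sized $c_p$ that agree with $C_p$ on all sets of size $\le\ell$, using an explicit $\ell$-wise independent construction (degree-$(\ell-1)$ polynomials over $\mathbb{Z}_k$) to build symmetric coverage gadgets $\zeta^z$. This is also why $k$ is lowered to $2^{\sqrt{\log n}}$ and the final bound is $2^{-\Omega(\sqrt{\log n})}$ rather than the $n^{-1/5+\epsilon}$ obtained for exponential-sized coverage.
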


Coverage functions are heavily used in machine learning~\cite{swaminathan2009essential, yue2008predicting, guestrin2005near, krause2007near, antonellis2012dynamic, lin2011class, takamura2009text}, data-mining~\cite{chierichetti2010max, du2014learning, saha2009maximum, singer2012win, dasgupta2007discoverability, gomez2010inferring}, mechanism design~\cite{dobzinski2006improved, lehmann2001combinatorial, dughmi2015limitations, dughmi2011convex,buchfuhrer2010computation,dughmi2011truthful}, privacy~\cite{gupta2013privately, FK14-coverage}, as well as influence maximization~\cite{kempe2003maximizing, seeman2013adaptive, borgs2014maximizing}.  In many of these applications, the functions are learned from data and the goal is to optimize the function under a cardinality constraint.  In addition to learnability and optimizability, coverage functions have many other desirable properties (see Section~\ref{s:discussion}).  One important fact is that they are \emph{parametric}: if the sets $T_1,\ldots,T_n$ are known, then the coverage function is completely defined by the weights $\{w(a) \ : \ a \in U \}$.  Our impossibility result holds even in the case where the sets $T_1,\ldots,T_n$ are known. 

\paragraph{Technical overview.}  In the value query model, information theoretic impossibility results use functions defined over a partition of the ground set \cite{MSV08, Vondrak13,EneVW13}. The hardness then arises from hiding all the information about the partition from the algorithm. Although the constructions in the \texttt{OPS} model also rely on a partition, the techniques are different since the impossibility is quasi-polynomial and not constant. In particular, the algorithm may learn the entire partition, and the hardness arises from hiding which parts of the partition are ``good" or ``bad". We begin by describing a framework which reduces the problem of showing hardness results to constructing good and bad functions which satisfy certain properties. The desired good and bad functions must have equal value on small sets of equal sizes and a large gap in value on large sets. Interestingly, a main technical difficulty is to simultaneously satisfy these two simple properties, which we do with novel techniques for constructing coverage functions. Another technical part is the use of tools from pseudorandomness  to obtain coverage functions of polynomial size.

\subsection{Algorithms for OPS}  

There are classes of functions and distributions for which optimization from samples is possible.  Most of the algorithms use a simple technique that consists of estimating the expected marginal contribution of an element to a random sample.  For general submodular functions, we show an essentially tight bound using a non-trivial analysis of  an algorithm that uses such estimates.

\begin{theorem*}
There exists an $\tilde{\Omega}(n^{-1/4})$-\texttt{OPS} algorithm over a distribution $\DC$ for monotone submodular functions.  Furthermore, this approximation ratio is essentially tight.  
\end{theorem*}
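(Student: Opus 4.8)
The plan is to split the theorem into two parts: an algorithmic upper bound showing $\tilde\Omega(n^{-1/4})$ is achievable, and a matching information-theoretic lower bound showing no $\omega(n^{-1/4}\cdot\mathrm{polylog})$-\texttt{OPS} algorithm exists for monotone submodular functions over the worst-case distribution.

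\medskip

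\textbf{The algorithm.} First I would pick a convenient distribution $\DC$ — the natural choice is the product (i.i.d.)\ distribution where each element is included independently with probability $p$ (tuned below, likely $p=\Theta(k/n)$ or a constant). From polynomially many samples drawn from $\DC$, for each element $a\in N$ one can estimate, up to small additive error and with high probability (Chernoff plus union bound over $n$ elements), the quantity $v_a:=\expect[f(R\cup\{a\})-f(R)]$, the expected marginal contribution of $a$ to a random set $R\sim\DC$. The algorithm then outputs the $k$ elements with the largest estimated $v_a$ (ties broken arbitrarily), or possibly the better of this set and the best feasible set actually seen among the samples. The analysis has two ingredients: (i) by submodularity, $f$ of the top-$v_a$ set is not much smaller than $\sum$ of its marginals, and each of the top $k$ marginals is at least the average marginal of the true optimum $S^\star$, which relates to $f(S^\star)$ via the random set $R$; (ii) a careful accounting of how much value can be "hidden" in the correlations that a product distribution washes out — this is where the $n^{-1/4}$, rather than a constant, enters. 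I expect the argument to balance two loss terms, one scaling like $\sqrt{k/n}$-ish and one like the reciprocal, optimized at the stated bound; the "non-trivial analysis" the excerpt promises is exactly making this balancing rigorous for general monotone submodular $f$.

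\medskip

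\textbf{The lower bound.} For tightness I would construct a family of monotone submodular functions together with a single distribution $\DC$ on which no algorithm can beat $\tilde O(n^{-1/4})$. Following the partition-based recipe hinted at in the technical overview, partition $N$ into blocks; define a "good" function that rewards concentrating the budget $k$ inside one designated good block, and a family of "bad" functions (one per choice of which block is good) that on the support of $\DC$ — sets that are spread thinly across blocks — are information-theoretically indistinguishable from each other. Choosing block sizes and $k$ so that a set spread over $\sqrt n$ blocks (as typical samples are) reveals essentially nothing about which block is designated, while the optimum (all $k$ elements in the good block) has value a factor $n^{1/4}\cdot\mathrm{polylog}$ larger than anything the algorithm can guarantee, yields the matching bound. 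The key checks are submodularity and monotonicity of the constructed functions, and a clean indistinguishability argument: conditioned on the samples, the posterior over "which block is good" is nearly uniform, so any algorithm's output is, in expectation over the planted instance, a factor $\tilde O(n^{-1/4})$ from optimal.

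\medskip

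\textbf{Main obstacle.} The hardest part is the algorithmic direction: proving that estimating expected marginals to a random product set suffices to extract an $n^{-1/4}$ fraction of the optimum for \emph{every} monotone submodular function. Unlike coverage or additive functions, general submodular functions can route most of their value through a single "surprise" set that a product distribution almost never samples intact, so one must argue that such adversarial concentration is itself limited by submodularity — quantitatively enough to lose only an $n^{1/4}$ factor and no more. Getting the exponent exactly right (and matching it in the construction) is the crux; the estimation/concentration machinery and the submodular greedy bound are comparatively routine.
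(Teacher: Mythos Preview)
Your plan has the right shape but misses the key algorithmic idea, and the lower-bound sketch has a real gap.

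\textbf{Upper bound.} Returning the top-$k$ elements by estimated marginal contribution to a random product set does \emph{not} suffice. The paper's distribution is not a product distribution but a mixture of the uniform distributions over sets of size $k$, $\sqrt{n}$, and $\sqrt{n}+1$; the choice of $\sqrt{n}$ is exactly what makes the exponent $1/4$ appear. More importantly, the algorithm is not ``take the $k$ largest $\hat v_i$''. It first bins elements by $\hat v_i$ into $O(\log n)$ geometric buckets, and then with probability $1/2$ returns the best size-$k$ sample, and with probability $1/2$ returns a random size-$\min(k,|B_j|)$ subset of a \emph{random} bin $B_j$. The analysis then splits on whether $\expect[S\sim\DC_{\sqrt n}]{f(S)}\le f(S^\star)/4$. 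If yes, optimal elements land in some high-marginal bin $B^\star$ of unknown size $t$, and one proves two bounds on a random subset of $B^\star$: one of order $t/(k\sqrt n)$ and one of order $k/t$; whichever $t$ the adversary picks, one of these is $\tilde\Omega(n^{-1/4})$. If no, the best size-$k$ sample already achieves $\max(k/\sqrt n,\,1/k)\ge n^{-1/4}$. Your ``balance two loss terms'' intuition is correct, but the terms being balanced are $t/(k\sqrt n)$ versus $k/t$ (and $k/\sqrt n$ versus $1/k$), and extracting them requires the binning-plus-random-subset device, not a deterministic top-$k$ pick. Without binning, an adversary can plant $\Theta(\sqrt n)$ elements with identical marginals to the optimal ones, and top-$k$ has no way to choose among them.

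\textbf{Lower bound.} Your partition sketch is in the right spirit, but you assume samples from $\DC$ are ``spread thinly across blocks''. The theorem claims inapproximability over \emph{every} distribution, so you must also handle distributions that produce large, concentrated samples which could reveal the good block. The paper handles this with an explicit masking layer: functions $m$ and $m^+$ such that once $|S\cap M|\ge \sqrt n$ the masking fraction $m(S\cap M)=1$ and $f^{P,i}(S)=m^+(S\cap M)$ becomes independent of $i$. Concretely the paper takes $g(S)=|S|$, $b(S)=\min(|S|,\log n)$, $m(S)=\min(1,|S|/\sqrt n)$, $m^+(S)=n^{-1/4}\min(\sqrt n,|S|)$, with $k=n^{1/4-\epsilon/2}$ and $r=n^{1/4}$, and verifies monotonicity and submodularity directly. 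Your sketch omits this masking mechanism entirely, so as written it only rules out algorithms against one convenient $\DC$, not all of them.
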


For unit-demand and  additive functions, we give near-optimal  optimization from samples results.  The result for unit-demand is particularly interesting as it shows one can easily optimize a function from samples even when recovering it is impossible (see Section \ref{s:recoverability}). For monotone submodular functions with curvature $c$, we obtain a $((1-c)^2-o(1))$-\texttt{OPS} algorithm.

\subsection{Paper Organization}

We begin with the hardness result in Section~\ref{s:impossibility}. The \texttt{OPS} algorithms are presented in Section~\ref{s:algorithms}.  We discuss the notion of recoverability in Section \ref{s:recoverability} and additional related work in Section~\ref{s:addrelatedwork}. The proofs are deferred to the appendix.

\section{Impossibility of Optimization from Samples}
\label{s:impossibility}

We show that optimization from samples is in general impossible, over any distribution $\mathcal{D}$, even when the function is learnable and optimizable. Specifically, we show that there exists no constant $\alpha$ and distribution $\mathcal{D}$ such that coverage functions are $\alpha$-optimizable from samples, even though they are $(1 - \epsilon)$-\texttt{PMAC} learnable over any distribution $\mathcal{D}$ and can be maximized under a cardinality constraint within a factor of $1 - 1/e$. In Section~\ref{s:framework}, we construct a framework which reduces the problem of proving information theoretic lower bounds to constructing functions that satisfy certain properties. We then construct coverage functions that satisfy these properties in Section~\ref{s:lbcoverage}.

\subsection{A Framework for OPS Hardness}
\label{s:framework}

The framework we introduce partitions the ground set of elements into \emph{good, bad,} and \emph{masking} elements. We derive two conditions on the values of these elements so that samples do not contain enough information to distinguish good and bad elements with high probability. We then give two additional conditions so that if an algorithm cannot distinguish good and bad elements, the solution returned by this algorithm has low value compared to the optimal set consisting of the good elements. We begin by defining the partition.

\begin{definition*}
The collection of partitions $\mathcal{P}$ contains all partitions $P$ of the ground set $N$ in $r$ parts $T_1, \ldots, T_r$ of $k$  elements and a part $M$ of remaining $n - rk$  elements, where $n = |N|$.
\end{definition*}

The elements in $T_i$ are called the \emph{good} elements, for some $i \in [r]$. The \emph{bad} and \emph{masking} elements are the elements in $T_{-i} := \cup_{j = 1, j \neq i}^{r}T_j$ and $M$ respectively. Next, we define a class of functions $ \mathcal{F}(g,b,m,m^+)$ such that $f \in \mathcal{F}(g,b,m,m^+)$ is defined in terms of good, bad, and masking  functions $g$, $b$, and $m^+$, and a masking fraction $m \in [0,1]$.\footnote{The notation $m^+$ refers to the role of this function, which is to maintain monotonicity of masking elements. These four functions are assumed to be normalized such that $g(\emptyset) = b(\emptyset) = m(\emptyset) = m^+(\emptyset) = 0$.} 

\begin{definition*}
Given functions $g,b,m,m^{+}$, the class of  functions $\mathcal{F}(g,b,m,m^+)$ contains  functions $f^{P,i}$,  where $P \in \mathcal{P}$ and $i \in [r]$, defined as
 $$f^{P,i}(S) := (1 - m(S \cap M)) \Big(g(S \cap T_i) + b(S \cap T_{-i})\Big) + m^+(S \cap  M).$$
\end{definition*}
 We use probabilistic arguments over the partition $P \in \mathcal{P}$ and the integer $i \in [r]$ chosen uniformly at random to show that for any distribution $\mathcal{D}$ and any algorithm, there exists a function in $\mathcal{F}(g,b,m,m^+)$  that the algorithm optimizes poorly given samples from $\DC$.   The functions $g,b,m,m^{+}$ have desired properties that are parametrized below. At a high level, the identical on small samples and masking on large samples properties imply that the samples do not contain enough information to learn $i$, i.e. distinguish good and bad elements, even though the partition $P$ can be learned. The gap and curvature property imply that if an algorithm cannot distinguish good and bad elements, then the algorithm performs poorly.

\begin{definition*}
The class of functions $\mathcal{F}(g,b,m,m^{+})$ has an \textbf{$(\alpha, \beta)$-gap} if the following conditions are satisfied for some $t$, where $\mathcal{U}(\PC)$ is the uniform distribution over $\PC$.
\begin{enumerate}
\item  \textbf{Identical on small samples.} For a fixed $S:|S| \leq t$, with probability $1 - n^{-\omega(1)}$ over partition $P \sim \mathcal{U}(\PC)$, $g(S \cap T_{i}) + b(S \cap T_{- i})$ is independent of $i$;

\item \textbf{Masking on large samples.} For a fixed $S :|S| \geq  t$, with probability $1 - n^{-\omega(1)}$ over partition $P \sim \mathcal{U}(\PC)$, the masking fraction is $m(S \cap M) = 1$;  
\item \textbf{$\alpha$-Gap.} Let $S: |S| =k$, then $g(S) \geq \max\{\alpha  \cdot b(S),\alpha \cdot m^+(S)\}$;
\item \textbf{$\beta$-Curvature.} Let $S_1:|S_1| = k$ and $S_2: |S_2| = k / r$, then
$g(S_1) \geq (1 - \beta) \cdot r \cdot g(S_2)$. 
\end{enumerate}
\end{definition*}

The following lemma reduces the problem of showing an impossibility result to constructing $g, b,m,$ and  $m^+$ which satisfy the above properties.

\begin{restatable}{rThm}{tgap}
\label{t:gap}
Assume the functions $g,b,m,m^{+}$ have an $(\alpha, \beta)$-gap, then  $\mathcal{F}(g,b,m,m^{+})$  is not $2\max(  1 /(r ( 1- \beta)), 2/\alpha)$-optimizable from samples over any distribution $\mathcal{D}$.
\end{restatable}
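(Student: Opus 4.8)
The plan is an information-theoretic indistinguishability argument over a uniformly random member of $\mathcal{F}(g,b,m,m^{+})$. I draw the hidden instance by picking $P \sim \mathcal{U}(\mathcal{P})$ and $i \sim \mathcal{U}([r])$ independently and forming $f^{P,i}$, set $\rho := 2\max\{\frac{1}{r(1-\beta)}, \frac{2}{\alpha}\}$ and $\mathrm{OPT}_0 := g(T_i)$ (a quantity independent of $i$ since $|T_i| = k$), and assume toward a contradiction that some algorithm is $\rho$-\texttt{OPS} over a distribution $\mathcal{D}$ with success probability $\ge 1-\delta$. The first task is to show the samples are uninformative about $i$: fixing the realized sample sets $S_1, \dots, S_m \sim \mathcal{D}$ (drawn independently of $P, i$), the \emph{identical on small samples} property handles each $S_\ell$ with $|S_\ell| \le t$ and the \emph{masking on large samples} property (which collapses $f^{P,i}(S_\ell)$ to $m^{+}(S_\ell \cap M)$) handles each $S_\ell$ with $|S_\ell| \ge t$, so that $f^{P,i}(S_\ell)$ is independent of $i$ with probability $1 - n^{-\omega(1)}$ over $P$; a union bound over the $m \in \poly(n)$ samples shows that with probability $1 - n^{-\omega(1)}$ over $(P, S_1, \dots, S_m)$ --- call such a pair \emph{good} --- the whole transcript $\{(S_\ell, f^{P,i}(S_\ell))\}_\ell$ is the same for every $i$, hence conditioned on a good pair the algorithm's output distribution $\mu$ is independent of $i$.

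The second task is to bound what the algorithm can extract from a good pair. For any feasible $S$, since $1 - m(S \cap M) \le 1$ we have $f^{P,i}(S) \le g(S \cap T_i) + b(S \cap T_{-i}) + m^{+}(S \cap M)$; the \emph{$\alpha$-gap} property together with monotonicity of $b, m^{+}$ bounds each of the last two terms by $\mathrm{OPT}_0/\alpha$. For the good term I fix $S$ and average over the uniform $i$: since the parts $T_1, \dots, T_r$ are interchangeable and $\sum_{j} |S \cap T_j| \le k$, concavity of the submodular $g$ in the cardinality of its argument gives $\mathbb{E}_i[g(S \cap T_i)] = \frac{1}{r}\sum_j g(S \cap T_j) \le g(\text{set of size } k/r)$, which is at most $\mathrm{OPT}_0/(r(1-\beta))$ by the \emph{$\beta$-curvature} property. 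Hence $\mathbb{E}_{i, S \sim \mu}[f^{P,i}(S)] \le (\frac{1}{r(1-\beta)} + \frac{2}{\alpha})\mathrm{OPT}_0 \le \rho\,\mathrm{OPT}_0$, using $x + y \le 2\max\{x,y\}$.

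Finally, $\mathrm{OPT}^{P,i} := \max_{|T|\le k} f^{P,i}(T)$ is independent of $i$ by symmetry and is at least $f^{P,i}(T_i) = \mathrm{OPT}_0$, so on a good pair the algorithm's expected value, averaged over the uniform $i$, is at most $\rho\,\mathrm{OPT}^{P,i}$; combining this with Markov over $i$ (being $\rho$-competitive on $f^{P,i}$ forces $g(S \cap T_i) \ge (\rho - \frac{2}{\alpha})\mathrm{OPT}_0 \ge \frac{\mathrm{OPT}_0}{r(1-\beta)}$, essentially the mean of $g(S \cap T_i)$) and with the fact that good pairs occur with probability $1 - n^{-\omega(1)}$, a probabilistic-method argument over $(P,i)$ yields a fixed $f^{P,i} \in \mathcal{F}(g,b,m,m^{+})$ on which the algorithm is not $\rho$-competitive with probability exceeding $\delta$ --- the desired contradiction. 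I expect this last step to be the crux: the steps establishing indistinguishability and the value bound are routine given the four $(\alpha,\beta)$-gap conditions, but converting ``poor on average over the interchangeable indices $i$'' into ``fails with large probability on one fixed instance'' is precisely where the constant must be $2\max\{\frac{1}{r(1-\beta)}, \frac{2}{\alpha}\}$, and where the curvature loss and the gap loss have to be traded off against each other and against the slack allowed by $\delta$.
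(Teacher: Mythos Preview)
Your proposal is correct and follows essentially the same approach as the paper: indistinguishability of $i$ via the small-sample and masking properties, the value decomposition $f^{P,i}(S) \le g(S\cap T_i)+b(S\cap T_{-i})+m^+(S\cap M)$, the $\alpha$-gap for the last two terms, and the $\beta$-curvature (combined with concavity of $g$ in the cardinality) for the first.

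The one structural difference worth noting is the order of the averaging. You keep $(P,i)$ jointly random throughout and defer the probabilistic-method step to the very end, which is why your last paragraph becomes the crux and needs the Markov-over-$i$ maneuver you sketch. The paper instead spends one short argument up front: it switches the order of summation over $P$ and $S$ to deduce that there \emph{exists} a fixed partition $P$ for which a fresh sample $S\sim\mathcal D$ satisfies the independence-of-$i$ event with probability $1-n^{-\omega(1)}$, and then works with that single $P$ for the rest of the proof. Once $P$ is fixed, a union bound over the $m$ samples shows that with probability $1-n^{-\omega(1)}$ over the samples the algorithm's output distribution is \emph{literally} the same for every $i$; under that event, $\mathbb E_{i,S}[f^{P,i}(S)]\le \rho\, f^{P,i}(T_i)$ holds pointwise, and an averaging over $i$ (no Markov needed) immediately yields a bad $i$. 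This ordering makes the concluding step a one-liner rather than the trade-off you anticipated, so you may find it cleaner to fix $P$ early rather than carry it to the end.
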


Consider a distribution $\DC$. The proof of this result consists of three parts. 
\begin{enumerate}
\item Fix a set $S$. With probability  $1 - n^{-\omega(1)}$ over  $P \sim \mathcal{U}(\PC)$, $f^{P,i}(S)$ is independent of $i$, by the identical on small samples and masking on large samples properties.
\item There exists a partition $P \in \mathcal{\PC}$ such that with probability $1- n^{-\omega(1)}$ over polynomially many samples $\mathcal{S}$ drawn from $\DC$, $f^{P,i}(S)$ is independent of $i$ for all $S \in \mathcal{S}$.  Thus,  given samples $\{(S_j,  f^{P,i}(S_j))\}_j$ for such a partition $P$, the decisions of the algorithm are independent of $i$. 
\item There exists  $f^{P,i}$ such that the algorithm does not obtain a $2\max(  1 /(r ( 1- \beta)), 2/\alpha)$ approximation for  $f^{P,i}$ with samples from $\DC$. This holds by a second probabilistic argument, this time over $i \in \UC([r])$, and by the gap and curvature properties.
\end{enumerate}

\subsection{OPS Hardness of Coverage Functions}
\label{s:lbcoverage}

We use this framework to show that there exists no constant $\alpha$ and distribution $\mathcal{D}$ such that coverage functions are $\alpha$-optimizable from samples over $\mathcal{D}$. We first state a  definition of coverage functions that is equivalent to the traditional definition and that is used through this section.
\begin{definition}
A function $f : 2^N \rightarrow \mathbb{R}$ is coverage if there exists a bipartite graph $G = (N \cup \{a_j\}_j, E)$ between elements $N$ and children $\{a_j\}_j$ with weights $w(a_j)$ such that the value of a set $S$ is the sum of the weights of the children covered by $S$, i.e., for all $S \subseteq N$, $f(S) = \sum_{a_j: \exists e a_j \in E, e \in S} w(a_j)$. A coverage function is polynomial-sized if the number of children is polynomial in $n = |N|$.
\end{definition}

The construction of good and bad coverage functions $g$ and $b$ that combine the identical on small samples property  and a large $\alpha$-gap on large sets as needed by the framework is a main technical challenge. The bad function $b$ needs to increase slowly (or not at all) for large sets to obtain a large $\alpha$-gap, which requires a non-trivial overlap in the children covered by bad elements (this is related to coverage functions being {\em second-order supermodular} \cite{KMZ15-online_SWM}). The overlap in children covered by good elements then must be similar  (identical on small samples) while the good function still needs to grow quickly for large sets (large gap), as illustrated in Figure~\ref{f:goodbad}. We consider the cardinality constraint $k = n^{2/5 - \epsilon}$ and a number of parts $r = n^{1/5 - \epsilon}$.   At a high level, the proof follows three main steps.

 
 \begin{figure}
\centering
 \includegraphics[width=.33\linewidth]{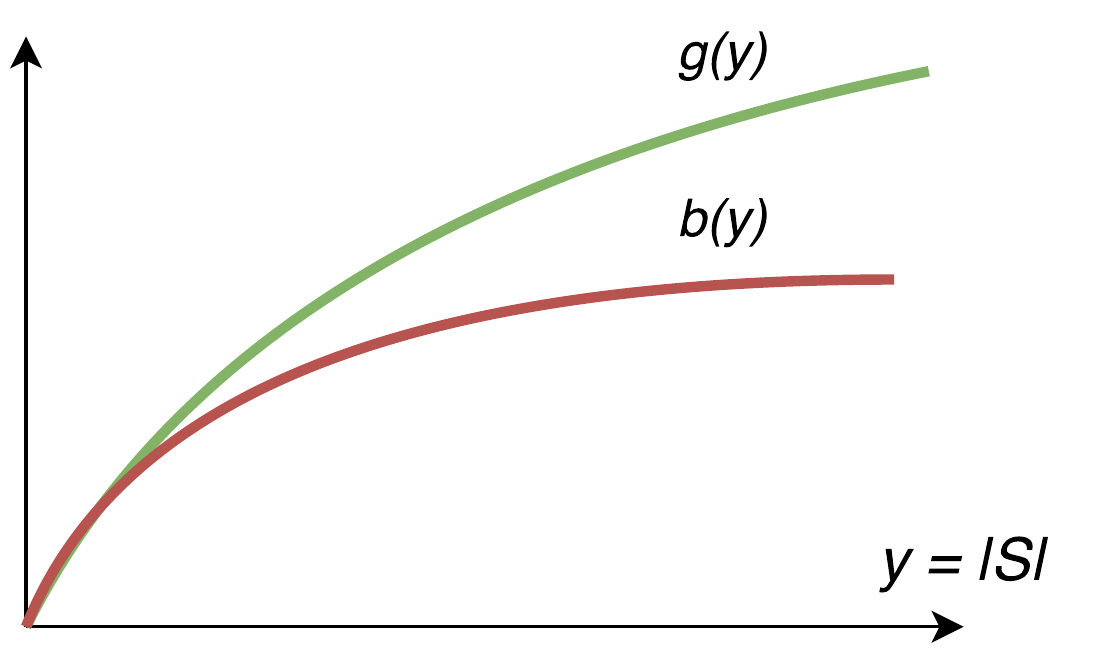}
\caption{Sketches of the desired $g(S)$ and $b(S)$ in the simple case where they only depend on $y = |S|$.}
\label{f:goodbad}
\end{figure}

 
\begin{enumerate}
\item \textbf{Constructing the good and bad functions.}  In  Section~\ref{s:exp-construction}, we construct the good and bad functions whose values are identical on small samples for $t = n^{3/5 + \epsilon}$, have gap $\alpha = n^{1/5-\epsilon}$, and curvature $\beta = o(1)$.  These good and bad functions are affine combinations of primitives $\{C_{p}\}_{p \in \mathbb{N}}$ which are coverage functions with desirable properties;
  
\item \textbf{Constructing the masking function.} In Section~\ref{s:masking}, we construct $m$ and $m^+$ that are masking on large samples for $t = n^{3/5 + \epsilon}$ and that have a gap  $\alpha = n^{1/5}$. In this construction, masking elements cover the children from functions $g$ and $b$ such that $t$ masking elements cover all the children, but $k$ masking elements only covers an $n^{-1/5}$ fraction of them.
\item \textbf{From exponential to polynomial-sized coverage functions.}  Lastly in Section~\ref{s:succinct} we prove the hardness result for polynomial-sized coverage functions. This construction relies on constructions of $\ell$-wise independent variables to reduce the number of children.
\end{enumerate}

\subsubsection{Constructing the Good and the Bad Coverage Functions} \label{s:exp-construction}
In this section we describe the construction of good and bad functions that are identical on small samples for $t = n^{3/5 + \epsilon}$, with a gap $\alpha = n^{1/5-\epsilon}$ and curvature $\beta = o(1)$.  To do so, we introduce a class of primitive functions $C_{p}$, through which we express the good and bad functions.  For symmetric functions $h$ (i.e. whose value only depends on the size of the set), we abuse notation and simply write $h(y)$ instead of $h(S)$ for a set $S$ of size $y$.

%
%

\paragraph{The construction.}  We begin by describing the primitives we use for the good and bad functions.  These primitives are the family $\{C_{p}\}_{p\in \mathbb{N}}$, which are symmetric, and defined as: 
$$C_{p}(y) = p \cdot \left( 1 - (1 - 1/p)^y \right).$$
%


 These are \emph{coverage} functions defined over an \emph{exponential} number of children.

\begin{restatable}{rClaim}{cct}
\label{c:c1t}
Consider the coverage function over ground set $N$ where for each set $S$, there is a child $a_S$ that is covered by exactly $S$, and child $a_S$ has weight $w(a_S) = p \cdot \prob{S \sim \B(N, 1/p)}$ where the binomial distribution $\B (N, 1/p)$ picks each element in $N$ independently with probability $1/p$, then this coverage function is $C_{p}$.
\end{restatable}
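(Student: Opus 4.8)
The goal is to prove Claim~\ref{c:c1t}, which identifies the symmetric function $C_p(y) = p(1 - (1-1/p)^y)$ with a specific coverage function whose children are indexed by subsets $S \subseteq N$, where $a_S$ is covered exactly by the elements of $S$ and carries weight $w(a_S) = p \cdot \Pr[S \sim \B(N,1/p)]$. The plan is a direct computation: unwind the definition of the coverage value $f(T) = \sum_{a_S : S \cap T \neq \emptyset} w(a_S)$ for an arbitrary $T \subseteq N$ of size $y$, and show it equals $C_p(y)$.

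First I would write $f(T) = \sum_{S : S \cap T \neq \emptyset} p \cdot \Pr[S' \sim \B(N,1/p)]$, where $S'$ denotes the random set. Since the weights over all $S \subseteq N$ sum to $p \cdot \sum_S \Pr[S' = S] = p$, it is cleaner to compute the complement: the children \emph{not} covered by $T$ are exactly those $a_S$ with $S \cap T = \emptyset$, i.e. $S \subseteq N \setminus T$. Thus $f(T) = p - \sum_{S \subseteq N \setminus T} p \cdot \Pr[S' = S] = p\bigl(1 - \Pr[S' \subseteq N \setminus T]\bigr)$. Now $\Pr[S' \subseteq N \setminus T]$ is the probability that the binomial process $\B(N,1/p)$ picks no element of $T$; since each of the $|T| = y$ elements is excluded independently with probability $1 - 1/p$, this equals $(1-1/p)^y$. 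Hence $f(T) = p\bigl(1 - (1-1/p)^y\bigr) = C_p(y)$, as desired. I should also note the normalization $f(\emptyset) = 0$ holds, and that $f$ is indeed a valid coverage function (children with nonnegative weights, each covered by a fixed subset of $N$) with $2^{|N|}$ children, explaining the ``exponential number of children'' remark.

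I do not anticipate a serious obstacle here — the argument is essentially a one-line manipulation once the complement trick is used. The only mild subtlety worth stating carefully is the interpretation of the child $a_\emptyset$: it is covered by no element, so it contributes $0$ to every $f(T)$ including $f(\emptyset)$, and it can be kept or discarded without changing the function. One should also confirm that the total weight $\sum_S w(a_S) = p$ is finite and that the weights are nonnegative (both immediate since $p > 0$ and probabilities are in $[0,1]$), so the coverage function is well-defined despite the exponential support. Everything else is routine.
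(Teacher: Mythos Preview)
Your proposal is correct and follows essentially the same argument as the paper: both compute the coverage value by passing to the complement, writing $f(T) = p\bigl(1 - \Pr_{S\sim\B(N,1/p)}[S\cap T=\emptyset]\bigr)$ and then using independence to get $(1-1/p)^{|T|}$. The only difference is that you add a few extra remarks about well-definedness and the $a_\emptyset$ child, which are harmless.
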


For a given $\ell \in [n]$, we construct $g$ and $b$ as affine combinations of $\ell$ coverage functions $C_{p_j}(y)$ weighted by variables $x_j$ for $j \in [\ell]$:

\begin{itemize}
\item \textbf{The good function} is defined as: 
$$g(y) := y + \sum_{j \given x_j < 0} (-x_j) C_{p_j}(y)$$
\item \textbf{The bad function}  is defined as $b(S) = \sum_{j=1, j \neq i}^r b'(S \cap T_j)$, with
$$b'(y) := \sum_{j \given x_j > 0} x_j C_{p_j}(y)$$ 
\end{itemize}

\paragraph{Overview of the analysis of the good and bad functions.}  Observe that if $g(y) = b'(y)$ for all $y \leq \ell$ for some sufficiently large $\ell$, then we obtain the identical on small samples property. The main idea is to express these $\ell$ constraints  as a system of linear equations $A \bx = \by$ where $A_{ij} := C_{p_j}(i)$ and $y_j := j$, with $i,j \in [\ell]$.  We prove that this matrix has two crucial properties:  
\begin{enumerate}
\item \textbf{$A$ is invertible.}  In Lemma~\ref{lem:invertible} we show that there exists $\{p_j\}_{j=1}^{\ell}$ such that the matrix $A$ is invertible by interpreting its entries defined by $C_{p_j}$ as  non-zero polynomials of degree $\ell$.  This implies that the system of linear equations $A\cdot \bx = \by$ can be solved and that there exists a coefficient $\bx^*$ needed for our construction of the good and the bad functions; 

\item \textbf{$||\bx^{\star}||_{\infty}$ is bounded.}  In Lemma~\ref{l:boundCoeff} we use Cramer's rule and Hadamard's inequality to prove that the entries of $\bx^{\star}$ are bounded.   This implies that the linear term $y$ in $g(y)$ dominates $x^{\star}_j \cdot C_{p_j}(y)$ for large $y$ and all $j$.  This then allows us to prove the curvature and gap properties.
\end{enumerate}

These properties of $A$ imply the desired properties of $g$ and $b$ for $\ell = \log \log n$.

\begin{restatable}{rLem}{labconstruction}
\label{l:abconstruction}
For every constant $\epsilon>0$, there exists coverage functions $g,b$ such that the identical on small samples property holds for $t = n^{3/5 + \epsilon}$, with gap $\alpha = n^{1/5-\epsilon}$ and  curvature $\beta = o(1)$.
\end{restatable}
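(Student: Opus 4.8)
The plan is to realize $g$ and the per-part bad function $b'$ (with $b(S)=\sum_{j\ne i}b'(S\cap T_j)$) as combinations of the primitives $C_{p_1},\dots,C_{p_\ell}$, with $\ell=\log\log n$, plus a pure linear term inside $g$, and to pick the coefficients so that $g$ and $b'$ agree on every argument $y\in\{0,1,\dots,\ell\}$. Collecting the $C_{p_j}$-terms of the two functions into a single signed vector $\bx$, one has $g(y)-b'(y)=y-\sum_{j=1}^{\ell}x_j C_{p_j}(y)$, so this agreement is exactly the linear system $A\bx=\by$ with $A_{ij}=C_{p_j}(i)$ and $\by=(1,\dots,\ell)^{\!\top}$; its solution $\bx^\star$ then dictates which primitives go into $g$ (indices $j$ with $x^\star_j<0$, used with coefficient $-x^\star_j>0$) and which into $b'$ (indices with $x^\star_j>0$). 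Since each $C_{p_j}$ is coverage (Claim~\ref{c:c1t}), $y\mapsto y$ is coverage, and nonnegative combinations and ``disjoint unions'' of coverage functions are coverage, the resulting $g$ and $b$ are coverage functions (over exponentially many children; the reduction to polynomial size is deferred to Section~\ref{s:succinct}).

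First I would show $A$ is invertible (Lemma~\ref{lem:invertible}): expanding $C_p(i)=i-\binom i2 p^{-1}+\binom i3 p^{-2}-\cdots$ factors $A=B\,V$, where $V_{mj}=p_j^{-(m-1)}$ is a Vandermonde matrix in the variables $1/p_j$ and $B$ is a fixed lower-triangular matrix of signed binomial coefficients with $\pm1$ on its diagonal, so $\det A=\pm\prod_{j<j'}\bigl(p_{j'}^{-1}-p_j^{-1}\bigr)\ne0$ whenever the $p_j$ are distinct, and $\bx^\star=A^{-1}\by$ exists. Then I would bound $\|\bx^\star\|_\infty$ (Lemma~\ref{l:boundCoeff}): by Cramer's rule $x^\star_j=\det(A_j)/\det A$ with $A_j$ the matrix obtained by replacing the $j$-th column of $A$ by $\by$; Hadamard's inequality bounds $|\det(A_j)|$ by a product of Euclidean column norms, the explicit Vandermonde factor lower-bounds $|\det A|$, and choosing the $\ell$ scales $p_j$ suitably (distinct, spread so $A$ is not too ill-conditioned, and all small enough that $b'(y)\le n^{\epsilon}$ for every $y$) makes the resulting bound on $\|\bx^\star\|_\infty$ subpolynomial in $n$.

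With $\bx^\star$ in hand, the three properties of the $(\alpha,\beta)$-gap follow. \emph{Identical on small samples:} by construction $g(y)=b'(y)$ for all $y\le\ell$, so on any $S$ with $\max_{j\in[r]}|S\cap T_j|\le\ell$ one gets $g(S\cap T_i)+b(S\cap T_{-i})=\sum_{j=1}^{r}b'(S\cap T_j)$, which is independent of $i$; and for a fixed $S$ with $|S|\le t=n^{3/5+\epsilon}$, a concentration bound on the hypergeometric variables $|S\cap T_j|$ together with a union bound over the $r$ parts give $\Pr_{P\sim\mathcal{U}(\mathcal{P})}\bigl[\max_j|S\cap T_j|>\ell\bigr]=n^{-\omega(1)}$ for the stated $k,r,t,\ell$. \emph{$\alpha$-gap and $\beta$-curvature:} since $\|\bx^\star\|_\infty=n^{o(1)}$ and each $C_{p_j}(y)\le p_j=n^{o(1)}$ for all $y$, the linear term of $g$ dominates on sets of size $k=n^{2/5-\epsilon}$ and of size $k/r$, so $g(k)=(1-o(1))k$, whereas $\sup_{|S|=k}b(S)\le r\cdot n^{\epsilon}=n^{1/5}$, giving gap $\alpha=n^{1/5-\epsilon}$ (absorbing lower-order factors into $\epsilon$); and because the linear part of $g$ is exactly additive, $k=r\cdot(k/r)$, with the concave $C_{p_j}$-corrections of lower order, $g(k)\ge(1-o(1))\,r\,g(k/r)$, i.e. $\beta=o(1)$.

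The main obstacle is the tension flagged in the overview, now concrete: ``identical on small samples'' forces $g$ and $b'$ to coincide on the entire segment $\{0,\dots,\ell\}$, and since $b'$ is assembled only from bounded, quickly-saturating primitives $C_{p_j}$, this constrains the shape of $g$ there too; yet the gap and curvature require $g$ to remain essentially linear all the way out to $k\gg\ell$. The signed combination over $\ell$ distinct saturation scales plus a pure linear term is exactly what reconciles these: invertibility of $A$ gives the freedom to match on $\{1,\dots,\ell\}$, and the Cramer/Hadamard bound forces $\|\bx^\star\|_\infty$ (the worst-case correction coefficient) to be $n^{o(1)}$, so the linear term wins past $\ell$. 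The delicate step is securing that norm bound while keeping the $\ell$ scales diverse enough for $A$ to be adequately conditioned, which is what pins $\ell$ down to $\log\log n$.
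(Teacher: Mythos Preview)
Your proposal follows essentially the same architecture as the paper: the same construction of $g$ and $b'$ from the primitives $C_{p_j}$ with $\ell=\log\log n$, the same linear system $A\bx=\by$, the same use of Cramer's rule and Hadamard's inequality to bound $\|\bx^\star\|_\infty$, and the same concentration/gap/curvature arguments (Lemmas~\ref{lem:skewed-concentration}, \ref{l:issp}, \ref{l:gap}, \ref{l:curv}).

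The one substantive difference is your proof of invertibility. The paper (Lemma~\ref{lem:invertible}) argues inductively that one can choose integers $j\le p_j\le j(j+1)$ so that the rows of $A$ are independent, by viewing the last row as a polynomial of degree $\ell$ in $p_\ell$ and counting roots; it then clears denominators to obtain an integer matrix, so that the nonzero determinant is automatically at least $1$ in absolute value. Your factorization $A=BV$, with $B$ the lower-triangular signed binomial matrix and $V$ Vandermonde in $1/p_1,\dots,1/p_\ell$, is cleaner: it gives invertibility for \emph{any} distinct $p_j$ and simultaneously furnishes an explicit lower bound $|\det A|=\prod_{j<j'}|p_{j'}^{-1}-p_j^{-1}|\ge \ell^{-O(\ell^2)}$ when the $p_j$ are distinct integers below $\ell^2$, which feeds directly into the Cramer/Hadamard step (and in fact yields the slightly better $\|\bx^\star\|_\infty\le\ell^{O(\ell^2)}$ versus the paper's $\ell^{O(\ell^4)}$). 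Either route suffices, since both bounds are $n^{o(1)}$ once $\ell=\log\log n$.
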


Lemma \ref{l:issp} shows the identical on small samples property. It uses Lemma~\ref{lem:skewed-concentration} which shows that if $|S| \leq n^{3/5 + \epsilon}$, then with probability $1 - n^{-\omega(1)}$, $|S \cap T_j| \leq \log \log n$ for all $j$. The property then follows from the system of linear equations.  The gap  and curvature properties are proven in Lemmas~\ref{l:gap} and \ref{l:curv} using the fact that the term $y$ in $g$ dominates the other terms in $g$ and $b$.

\subsubsection{Constructing the Masking Function}
\label{s:masking}

Masking elements allow the indistinguishability of good and bad elements from large samples.

\paragraph{The masking elements.}  The construction of the coverage functions $g$ and $b$ defined in the previous section is generalized  so that we can add masking elements with desirable properties.  For each child $a_i$ in the coverage function defined by $g + b$, we divide $a_i$ into $n^{3/5}$ children $a_{i,1}, \ldots, a_{i,n^{3/5}}$ with equal weights $w(a_{i,j}) = \frac{w(a_i)}{n^{3/5}}$ for all $j$.  Each element covering $a_i$ according to $g$ and $b$ now covers children $a_{i,1}, \ldots, a_{i,n^{3/5}}$.  Note that the value of $g(S)$ and $b(S)$ remains unchanged with this new construction and thus, the previous analysis still holds.  Each masking elements in $ M$ is  defined by drawing $j \sim \UC{[n^{3/5}]}$   and having this element cover children $a_{i,j}$ for all $i$.   

The masking function $m^+(S)$ is the total weight covered by masking elements $S$ and the masking fraction $m(S)$ is the fraction of $j \in [n^{3/5}]$ such that $j$ is drawn for at least one  element in $S$. 

\paragraph{Masking properties.}  Masking elements cover children that are already covered by good or bad elements. A large number of masking elements mask the good and bad elements, which implies that good and bad elements are indistinguishable.  

\begin{itemize}
\item In Lemma~\ref{l:masking} we prove that the masking property holds for $t = n^{3/5 + \epsilon}$.  
\item We show a gap $\alpha = n^{1/5}$ in Lemma~\ref{l:gapmasking}.  For any $S:|S|\leq k$, we have $g(S)\geq n^{1/5}\cdot m^+(S)$.
\end{itemize}

\paragraph{An impossibility result for exponential size coverage functions.} We have the four properties for a $(n^{1/5 - \epsilon}, o(1))$-gap. The functions $f^{P,i}$ are coverage since $g,b,m^+$ are coverage and $m^+$ is the fraction of overlap between children from $g,b,$ and $m^+$

\begin{restatable}{rClaim}{cexpo}
\label{c:expo}
Coverage functions are not $n^{-1/5 + \epsilon}$-optimizable from samples over any distribution $\mathcal{D}$, for any constant $\epsilon > 0$.
\end{restatable}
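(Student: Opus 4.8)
The plan is to apply Lemma~\ref{t:gap} with the good, bad, and masking functions whose existence and properties have been established in the two preceding subsections, and then simply compute what the lemma yields. Concretely, I would first invoke Lemma~\ref{l:abconstruction} to obtain coverage functions $g$ and $b$ satisfying the identical-on-small-samples property for $t = n^{3/5 + \epsilon}$, with gap $\alpha = n^{1/5 - \epsilon}$ and curvature $\beta = o(1)$. Next I would invoke the results of Section~\ref{s:masking} (Lemmas~\ref{l:masking} and~\ref{l:gapmasking}) to obtain $m$ and $m^+$ which, under the refinement of the construction that splits each child $a_i$ into $n^{3/5}$ equally-weighted copies, are masking on large samples for the same threshold $t = n^{3/5+\epsilon}$ and also satisfy $g(S) \ge n^{1/5} \cdot m^+(S)$ for all $S$ with $|S| \le k$. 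Since the child-splitting leaves $g(S)$ and $b(S)$ unchanged, the identical-on-small-samples, gap, and curvature properties for $g$ and $b$ carry over verbatim.

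The second step is to check that the combined tuple $(g,b,m,m^+)$ has an $(\alpha,\beta)$-gap in the precise sense of the relevant definition, with $\alpha = n^{1/5-\epsilon}$ and $\beta = o(1)$, and with a common threshold $t = n^{3/5+\epsilon}$. Condition~1 (identical on small samples) and condition~4 ($\beta$-curvature) come from Lemma~\ref{l:abconstruction}; condition~2 (masking on large samples) comes from Lemma~\ref{l:masking}; and condition~3 ($\alpha$-gap) requires $g(S) \ge \max\{\alpha \cdot b(S), \alpha \cdot m^+(S)\}$ for $|S| = k$ — the bound against $b(S)$ is the gap from Lemma~\ref{l:abconstruction}, the bound against $m^+(S)$ is Lemma~\ref{l:gapmasking}, and since $n^{1/5-\epsilon} \le n^{1/5}$ both hold with $\alpha = n^{1/5-\epsilon}$. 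I should also note, as the paragraph before the claim already does, that each $f^{P,i}$ is genuinely a coverage function: $g$, $b$, and $m^+$ are all coverage, and the definition of $f^{P,i}$ as a convex-combination-with-overlap is realized by letting the masking elements cover copies of the children already covered by the good and bad elements, so $m^+$ records exactly the fractional overlap.

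The third step is the arithmetic. Lemma~\ref{t:gap} then tells us that $\mathcal{F}(g,b,m,m^+)$ is not $2\max\big(1/(r(1-\beta)),\ 2/\alpha\big)$-optimizable from samples over any distribution. We plugged in $r = n^{1/5-\epsilon}$, $\beta = o(1)$ so $1-\beta = 1-o(1)$, and $\alpha = n^{1/5-\epsilon}$. Then $1/(r(1-\beta)) = (1+o(1)) \, n^{-1/5+\epsilon}$ and $2/\alpha = 2 n^{-1/5+\epsilon}$, so the whole expression is $\Theta(n^{-1/5+\epsilon})$, which is at most $n^{-1/5+\epsilon}$ after adjusting $\epsilon$ by a sub-polynomial amount (or, more cleanly, by replacing $\epsilon$ with $\epsilon/2$ in the construction and absorbing the constant factor into $n^{\epsilon/2}$ for $n$ large). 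Since coverage functions contain $\mathcal{F}(g,b,m,m^+)$, hardness for the latter implies hardness for coverage functions, giving the claimed bound.

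I do not expect any genuine obstacle in this particular claim — it is a bookkeeping assembly of the machinery developed in Sections~\ref{s:exp-construction} and~\ref{s:masking} through the reduction of Section~\ref{s:framework}. The only points requiring a moment of care are (i) confirming that the threshold $t = n^{3/5+\epsilon}$ is the same in all the component lemmas so that conditions~1 and~2 of the gap definition are compatible (they are, by design), (ii) confirming that child-splitting for the masking construction does not disturb the earlier properties of $g$ and $b$ (it does not, since weights are preserved additively), and (iii) absorbing the constant factor $2$ and the $(1+o(1))$ into the exponent — a routine adjustment of $\epsilon$. The genuinely hard work — constructing $g$ and $b$ with simultaneously matching small-set values and a polynomial gap on large sets (Lemma~\ref{l:abconstruction}) and designing the masking elements (Section~\ref{s:masking}) — is done in the earlier sections and assumed here. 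The remaining reduction to polynomial-sized coverage functions is deferred to Section~\ref{s:succinct} and is not part of this claim.
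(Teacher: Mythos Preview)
Your proposal is correct and follows precisely the paper's approach: combine Lemma~\ref{l:abconstruction} with Lemmas~\ref{l:masking} and~\ref{l:gapmasking} to verify the $(n^{1/5-\epsilon}, o(1))$-gap, observe that the resulting $f^{P,i}$ are coverage functions, and apply Theorem~\ref{t:gap} (note it is a theorem, not a lemma) with $r = n^{1/5-\epsilon}$. The paper's own proof is the one-line version of exactly what you wrote; your additional care about the common threshold $t$, the child-splitting preserving $g$ and $b$, and absorbing the constant into $\epsilon$ is all appropriate and routine.
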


\subsubsection{From Exponential to Polynomial Size Coverage Functions} \label{s:succinct}
The construction above relies on the primitives $C_{p}$ which are defined with exponentially many  children.  In this section we modify the construction to use primitives $c_{p}$ which are coverage with \emph{polynomially} many children.  The function class $\mathcal{F}(g,b,m,m^+)$ obtained are then coverage functions with polynomially many children. The functions $c_{p}$ we construct satisfy $c_{p}(y) = C_{p}(y)$ for all $y \leq \ell$, and thus the matrix $A$ for polynomial size coverage functions is identical to the general case.  We lower the cardinality constraint to $k = 2^{\sqrt{\log n}} = |T_j|$ so that the functions $c_{p}(S \cap T_j)$ need to be defined over only $2^{\sqrt{\log n}}$ elements. We also lower the number of parts to $r = 2^{\sqrt{\log n}/2} $.

\paragraph{Maintaining symmetry via $\ell$-wise independence.}  The technical challenge in defining a coverage function with polynomially many children is in maintaining the symmetry of non-trivial size sets.  To do so, we construct coverage functions $\{\zeta^z\}_{z\in [k]}$ for which the elements that cover a random child are approximately $\ell$-wise independent.  The next lemma reduces the problem to that of constructing coverage functions $\zeta^z$ that satisfy certain properties.


\begin{restatable}{rLem}{lzeta}
\label{l:zeta}
Assume there exist symmetric (up to sets of size $\ell$) coverage functions $\zeta^z$ with $\poly(n)$ children that are each covered by $z\in [k]$ parents. Then, there exists coverage functions $c_{p}$ with $\poly(n)$ children that satisfy $c_{p}(S) = C_{p}(y)$ for all $S$ such that $|S|= y \leq \ell$,  and $c_{p}(k) = C_{p}(k)$.
\end{restatable}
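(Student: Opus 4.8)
Since $C_p$ is only ever evaluated on subsets of a single part $T_j$, which has exactly $k$ elements, it suffices to build a polynomial-sized coverage function $c_p$ on the $k$-element ground set $T_j$ that agrees with $C_p$ on all sets of size at most $\ell$ and on the full set. The strategy is to first rewrite $C_p$ restricted to $T_j$ as a \emph{nonnegative} combination of the ``canonical'' degree-$z$ symmetric coverage functions $D_z$ (those with one child per $z$-subset of $T_j$), and then replace each $D_z$ by the polynomial-sized $\zeta^z$ supplied by the hypothesis.

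\textbf{Step 1: a finite symmetric form for $C_p$ on $T_j$.} Starting from the bipartite-graph description of $C_p$ in Claim~\ref{c:c1t} and merging all children $a_S$ with the same trace $S\cap T_j=R$, the merged child $a_R$ receives weight $\sum_{S'\subseteq N\setminus T_j} p(1/p)^{|R|+|S'|}(1-1/p)^{n-|R|-|S'|}=p(1/p)^{|R|}(1-1/p)^{k-|R|}=:\mu_{|R|}\ge 0$, because the binomial weights over $S'\subseteq N\setminus T_j$ sum to $1$. Thus $C_p$ on $T_j$ is the symmetric coverage function with one child of weight $\mu_z$ per $z$-subset of $T_j$; grouping children by $z$ gives, for $|S|=y$,
$$C_p(y) \;=\; \sum_{z=1}^{k}\mu_z\!\left(\tbinom{k}{z}-\tbinom{k-y}{z}\right) \;=\; \sum_{z=1}^{k}\mu_z\tbinom{k}{z}\,D_z(y),$$
where $D_z(y):=1-\binom{k-y}{z}/\binom{k}{z}$ is the probability that a uniformly random $z$-subset of $T_j$ meets a fixed $y$-set (the $z=0$ child contributes $0$ everywhere and is discarded).

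\textbf{Step 2: substituting $\zeta^z$ for $D_z$.} By hypothesis, for each $z\in[k]$ there is a coverage function $\zeta^z$ with $\poly(n)$ children, each covered by exactly $z$ parents, that is symmetric up to sets of size $\ell$; let $W_z$ be its total weight. A short double-counting argument (count, in two ways, the total $\zeta^z$-weight of children whose parent set is contained in a fixed set of size $k-y$, invoking symmetry up to size $\ell$) shows $\zeta^z(y)=W_z\,D_z(y)$ for every $y\le\ell$, and $\zeta^z(T_j)=W_z$ since every child of $\zeta^z$ has a nonempty parent set. Define
$$c_p \;:=\; \sum_{z=1}^{k}\frac{\mu_z\tbinom{k}{z}}{W_z}\;\zeta^z .$$
This is a nonnegative combination of coverage functions, hence coverage (disjoint-union the children of the $\zeta^z$ and scale the $z$-th family's weights by $\mu_z\binom{k}{z}/W_z$), with $\sum_{z=1}^{k}\poly(n)=k\cdot\poly(n)=\poly(n)$ children since $k=2^{\sqrt{\log n}}=n^{o(1)}$. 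For $|S|=y\le\ell$, Step~1 yields $c_p(S)=\sum_z \mu_z\binom{k}{z}D_z(y)=C_p(y)$, and for $S=T_j$ we get $c_p(k)=\sum_z \mu_z\binom{k}{z}=\sum_z\mu_z\big(\binom{k}{z}-\binom{0}{z}\big)=C_p(k)$.

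\textbf{Main obstacle.} The only substantive point is Step~1: confirming that projecting the exponential-size graph of $C_p$ onto $T_j$ produces exactly the symmetric weights $\mu_z$, so that the expansion into the $D_z$ has nonnegative coefficients (negative coefficients would destroy coverage). Once that is in hand, Step~2 is a direct substitution and a count of children; all the genuine work — producing $\zeta^z$ that are simultaneously polynomial-sized, symmetric up to size $\ell$, and fully covered by $T_j$ — has been deferred to the hypothesis of the lemma and is discharged separately via the approximate $\ell$-wise-independence construction. Note in particular that exact agreement at $y=k$ is not implied by agreement on $|S|\le\ell$ alone and is secured precisely by the full-coverage property $\zeta^z(T_j)=W_z$.
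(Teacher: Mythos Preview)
Your proof is correct and follows essentially the same strategy as the paper's: decompose $C_p$ (restricted to the $k$-element ground set $T_j$) into its degree-$z$ layers $C_p^z$ and replace each layer by a suitably normalized copy of $\zeta^z$. The only difference is in how you verify that the normalized $\zeta^z$ agrees with $C_p^z$ on sets of size at most $\ell$. The paper does this by induction on $y$: from symmetry up to size $\ell$ it derives the recurrence $\zeta^z(y)=\zeta^z(y-1)+z\bigl(\zeta^z(k)-\zeta^z(y-1)\bigr)/(k-y+1)$ and observes that $C_p^z$ satisfies the identical recurrence with the same boundary values. You instead compute the closed form $D_z(y)=1-\binom{k-y}{z}/\binom{k}{z}$ directly and show $\zeta^z(y)=W_z D_z(y)$ by double counting. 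Both routes prove the same fact; yours is arguably a bit cleaner since it makes the formula explicit. One small imprecision: your double-counting sentence should really sum over all size-$y$ sets $S$ (equivalently all size-$(k-y)$ complements), not ``a fixed set of size $k-y$''; the symmetry hypothesis then lets you replace the average by the value at any single $S$, yielding $W_z-\zeta^z(y)=W_z\binom{k-z}{y}/\binom{k}{y}=W_z\binom{k-y}{z}/\binom{k}{z}$.
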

The proof is constructive. We obtain $c_{p}$ by replacing, for all $z \in [k]$,  all children in $C_{p}$ that are covered by $z$ elements with children from $\zeta^z$ with weights normalized that sum up $C_{p}(k)$.   Next, we construct such $\zeta^z$. Assume without loss that $k$ is prime (otherwise pick some prime close to $k$).  Given $\ba \in [k]^{\ell}$, and $x \in [z]$,  let
$$h_{\ba}(x) := \sum_{i \in [\ell]} a_i x^i \mod k$$ 
The children in $\zeta^z$ are $U = \{ \ba \in [k]^{\ell} \given h_{\ba}(x_1) \neq h_{\ba}(x_2) \text{ for all distinct } x_1, x_2 \in [z]\}$. The $k$ elements are $\{j \given 0 \leq j < k\}$. Child $\ba$ is covered by elements $\{h_{\ba}(x) \given x \in [z]\}$. Note that $|U| \leq k^{\ell} = 2^{\ell \sqrt{\log n}}$ and we pick $\ell = \log \log n$ as previously. The next lemma shows that we obtain the desired properties for $\zeta^z$.
\begin{restatable}{rLem}{lsymm}
\label{l:symm}
The coverage function $\zeta^z$ is symmetric for all sets of size at most $\ell$.
\end{restatable}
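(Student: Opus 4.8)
The plan is to show that for every $S\subseteq\{0,1,\dots,k-1\}$ with $|S|\le\ell$, the number of children of $\zeta^z$ covered by $S$ depends only on $|S|$; since all children of $\zeta^z$ carry the same weight, this gives the claimed symmetry. For $\ba\in U$ write $\mathrm{Im}(\ba):=\{h_\ba(x):x\in[z]\}$ for the $z$-element set of parents of child $\ba$. A child is covered by $S$ iff $\mathrm{Im}(\ba)\cap S\neq\emptyset$, so by inclusion--exclusion the number of children \emph{not} covered by $S$ equals $\sum_{T\subseteq S}(-1)^{|T|}N(T)$, where $N(T):=|\{\ba\in U: T\subseteq\mathrm{Im}(\ba)\}|$. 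Hence it suffices to prove that $N(T)$ depends only on $|T|$ for all $T$ with $|T|\le\ell$.

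Next I would reduce $N(T)$ to a single evaluation pattern. Fix $T$ with $|T|=t\le\ell$. Because $h_\ba$ is injective on $[z]$ for every $\ba\in U$, each $\ba$ counted by $N(T)$ determines a \emph{unique} injection $\phi:T\hookrightarrow[z]$ with $h_\ba(\phi(v))=v$ for all $v\in T$, and conversely; therefore $N(T)=\big(\#\text{ of injections }T\hookrightarrow[z]\big)\cdot M$, once one shows that $M:=|\{\ba\in U: h_\ba(\phi(v))=v\text{ for all }v\in T\}|$ is the same for every injection $\phi$. The number of injections $T\hookrightarrow[z]$ depends only on $t$, so everything comes down to proving $M$ is independent of $\phi$.

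To analyze $M$, I would combine linear algebra with the $\ell$-wise independence of the hash family. For a fixed $\phi$, the equations $h_\ba(\phi(v))=v$ $(v\in T)$ are $t$ linear constraints on $\ba=(a_1,\dots,a_\ell)$ with coefficient matrix $\big(\phi(v)^i\big)_{v\in T,\,i\in[\ell]}$; pulling the nonzero factor $\phi(v)$ out of each row leaves a $t\times\ell$ Vandermonde matrix with distinct nodes, hence of full row rank $t$ --- the same invertibility argument as for the matrix $A$ (cf.\ Lemma~\ref{lem:invertible}). So the solution set is an affine subspace of $[k]^\ell$ of size $k^{\ell-t}$. The identical reasoning shows $\{h_\ba\}_{\ba\in[k]^\ell}$ is $\ell$-wise independent: evaluated at any $j\le\ell$ distinct (nonzero) points it is uniform on $[k]^j$. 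Conditioning on the prescribed values at the $t$ points $\phi(T)$, the values of $h_\ba$ at the remaining $z-t$ points of $[z]$ are then distributed like $z-t$ independent uniform elements of $[k]$, so the fraction of the above affine subspace lying in $U$ --- i.e.\ the fraction making these $z-t$ values pairwise distinct and disjoint from $T$ --- equals $\prod_{j=t}^{z-1}(1-j/k)$, which depends only on $t$ (and on $z$, $k$). Hence $M=k^{\ell-t}\prod_{j=t}^{z-1}(1-j/k)$ is independent of $\phi$, and the lemma follows.

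The main obstacle is this last step: one must argue that fixing $h_\ba$ at $t\le\ell$ points leaves its behavior on the other points of $[z]$ symmetric enough that the number of injective completions does not depend on \emph{which} points or \emph{which} target values were prescribed. This is exactly what $\ell$-wise independence of the polynomial family buys, and it is immediate when $z\le\ell$. The only other ingredient --- invertibility of the generalized Vandermonde matrices that arise --- is handled precisely as in the analysis of the matrix $A$.
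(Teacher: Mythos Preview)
Your inclusion--exclusion reduction to showing that $N(T)=|\{\ba\in U:T\subseteq\mathrm{Im}(\ba)\}|$ depends only on $|T|$ is correct and matches the paper. The gap is in how you establish this. You decompose $N(T)=\sum_\phi M_\phi$ over injections $\phi:T\hookrightarrow[z]$ and claim that each $M_\phi$ equals the same constant $M$. That claim is false once $z>\ell$ --- and since $z$ ranges up to $k=2^{\sqrt{\log n}}$ while $\ell=\log\log n$, this is the main regime, not a corner case.

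Here is a concrete counterexample. Take $k=5$, $\ell=3$, $z=4$, so $h_\ba(x)=a_0+a_1x+a_2x^2$ over $\mathbb{F}_5$ on the domain $\{0,1,2,3\}$, and let $T=\{0,1,2\}$, so $t=\ell=3$. Since $t=\ell$, the constraints $h_\ba(\phi(v))=v$ determine $\ba$ uniquely, and $M_\phi\in\{0,1\}$ according to whether that $\ba$ lies in $U$. For $\phi$ sending $0,1,2\mapsto 0,1,2$ the unique solution is $h(x)=x$, which is injective on $\{0,1,2,3\}$, so $M_\phi=1$. For $\phi'$ sending $0,1,2\mapsto 0,1,3$ the unique solution to $h(0)=0,\ h(1)=1,\ h(3)=2$ is $h(x)=2x+4x^2$, and $h(2)=4+16\equiv 0=h(0)\pmod 5$; hence $\ba\notin U$ and $M_{\phi'}=0$. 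So the summands $M_\phi$ are not all equal.

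The error originates in your step~4: $\ell$-wise independence controls at most $\ell$ evaluations jointly. After fixing $t\le\ell$ values, the remaining $z-t$ evaluations are linear functions of only $\ell-t$ free coefficients, so for $z>\ell$ they are far from being $z-t$ independent uniform variables, and the fraction of the affine slice that lands in $U$ genuinely depends on which points and which target values were prescribed. Your caveat that the step ``is immediate when $z\le\ell$'' is correct but addresses only the trivial regime.

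The paper follows the same inclusion--exclusion reduction but handles the passage from $[k]^\ell$ to $U$ differently: rather than attempting a direct count of $M_\phi$, it isolates this step as Lemma~\ref{l:indep} and argues there via the symmetry of the \emph{discarded} polynomials (those $\ba$ with $h_\ba$ non-injective on $[z]$), combining that with the unconditional $\ell$-wise independence over $[k]^\ell$ to obtain the conditional statement for $\ba$ uniform on $U$. That symmetry-of-the-complement argument is what your direct computation of $M$ is missing. (A minor aside: the Vandermonde invertibility you need is elementary and has nothing to do with Lemma~\ref{lem:invertible}, which concerns the unrelated matrix $A_{ij}=C_{p_j}(i)$.)
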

At a high level, the proof uses Lemma~\ref{l:indep} which shows that the parents of a random child $\ba$ are approximately $\ell$-wise independent. This follows  from $h_{\ba}(x)$ being a polynomial of degree $\ell - 1$, a standard construction for $\ell$-wise independent random variables. Then, using inclusion-exclusion over subsets $T$ of a set $S$ of size at most $\ell$, the probability that $T$ is the parents of a child $\ba$ only depends on $|T|$ by Lemma~\ref{l:indep}. Thus,  $\zeta^z(S)$ only depends on $|S|$. We are now ready to show the properties for $g,b,m,m^+$ with polynomially many children,

\begin{restatable}{rLem}{lpoly}
\label{l:poly}
There exists polynomial-sized coverage functions $g, b, m,$ and $m^+$ that satisfy an $(\alpha = 2^{\Omega(\sqrt{\log n})}, \beta= o(1))$-gap with $t = n^{3/5 + \epsilon}$.
\end{restatable}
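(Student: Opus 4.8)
The plan is to combine the three ingredients that have been set up: the exponential-size good/bad functions from Lemma~\ref{l:abconstruction}, the exponential-size masking functions from Section~\ref{s:masking}, and the polynomial-size primitives $c_p$ from Lemma~\ref{l:zeta}. The key observation is that Lemma~\ref{l:zeta} gives primitives $c_p$ with $\poly(n)$ children satisfying $c_p(S) = C_p(y)$ whenever $|S| = y \le \ell$, \emph{and} $c_p(k) = C_p(k)$. So if I rebuild $g$ and $b$ using exactly the same affine coefficients $\bx^\star$ as before, but with $c_{p_j}$ in place of $C_{p_j}$, the matrix $A$ is unchanged (its entries only involve sizes $i \le \ell$), hence $\bx^\star$ still solves $A\bx = \by$, and the identical-on-small-samples property is inherited verbatim. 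The only parameter change is that we must work with the smaller cardinality constraint $k = 2^{\sqrt{\log n}}$ and $r = 2^{\sqrt{\log n}/2}$, because each $c_{p}(S \cap T_j)$ now lives on a part $T_j$ of size $k = 2^{\sqrt{\log n}}$ and $|U| \le k^\ell = 2^{\ell\sqrt{\log n}}$, which is polynomial in $n$ precisely when $\ell = \log\log n$.

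First I would redo the identical-on-small-samples argument: since $k = 2^{\sqrt{\log n}}$ and $t = n^{3/5+\epsilon}$, Lemma~\ref{lem:skewed-concentration} (or its analogue) still gives that for $|S| \le t$, with probability $1 - n^{-\omega(1)}$ over $P \sim \UC(\PC)$ we have $|S \cap T_j| \le \ell = \log\log n$ for every $j$ — this is even easier than before because the parts are now much smaller. On those parts $c_{p_j}$ agrees with $C_{p_j}$, so $g(S\cap T_i) + b(S \cap T_{-i})$ is computed by the same symmetric formula as in the exponential construction and is independent of $i$. Next I would re-examine the gap and curvature bounds (Lemmas~\ref{l:gap}, \ref{l:curv}). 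These were proven from the fact that the linear term $y$ in $g(y)$ dominates $x^\star_j C_{p_j}(y)$ for large $y$; here the relevant ``large'' value is $y = k = 2^{\sqrt{\log n}}$, where $c_{p_j}(k) = C_{p_j}(k)$ exactly, so the same domination estimates go through, now yielding gap $\alpha = 2^{\Omega(\sqrt{\log n})}$ (it was $n^{1/5-\epsilon}$ with the old $k = n^{2/5-\epsilon}$; here $\alpha$ is essentially $k / \mathrm{polylog} = 2^{\Omega(\sqrt{\log n})}$) and curvature $\beta = o(1)$.

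Then I would handle the masking functions. The construction in Section~\ref{s:masking} only splits each existing child $a_i$ of $g+b$ into $n^{3/5}$ equal-weight copies and assigns each masking element a random index $j \in [n^{3/5}]$; this multiplies the number of children by $n^{3/5}$, so starting from a $\poly(n)$-sized $g+b$ we still get $\poly(n)$ children overall, and $g, b$ are literally unchanged. The masking property (Lemma~\ref{l:masking}) and the masking gap (Lemma~\ref{l:gapmasking}, $g(S) \ge n^{1/5} m^+(S)$ for $|S| \le k$) depend only on the combinatorics of the $n^{3/5}$ index buckets and on the already-established growth of $g$, so they carry over unchanged; in particular the masking gap $n^{1/5}$ is still $\ge 2^{\Omega(\sqrt{\log n})}$. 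Assembling: $g, b, m^+$ are all coverage with $\poly(n)$ children, $m$ is a genuine masking fraction, and all four properties of the $(\alpha,\beta)$-gap definition hold with $\alpha = 2^{\Omega(\sqrt{\log n})}$, $\beta = o(1)$, $t = n^{3/5+\epsilon}$.

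I expect the main obstacle to be bookkeeping the parameter rescaling consistently — verifying that with $k = 2^{\sqrt{\log n}}$ and $r = 2^{\sqrt{\log n}/2}$ we simultaneously keep (i) $rk \le n$ so $\PC$ is well-defined, (ii) $\ell = \log\log n$ small enough that $|U| \le k^\ell = 2^{\log\log n \cdot \sqrt{\log n}} = n^{o(1)}$ is polynomial, (iii) the concentration bound placing $|S \cap T_j| \le \ell$ for all $|S| \le t = n^{3/5+\epsilon}$ still holding with the new $k$ and $r$, and (iv) re-deriving that both the good/bad gap and the masking gap remain $2^{\Omega(\sqrt{\log n})}$ rather than silently degrading. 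Each of these is routine but must be checked against the statements of Lemmas~\ref{l:abconstruction}, \ref{l:masking}, \ref{l:gapmasking}, \ref{l:zeta}, and \ref{l:symm}; once they are, Lemma~\ref{l:poly} follows by just collecting the four verified properties, and feeding it into Theorem~\ref{t:gap} will give the final $2^{-\Omega(\sqrt{\log n})}$ inapproximability for polynomial-sized coverage functions.
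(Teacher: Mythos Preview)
Your proposal is correct and follows essentially the same approach as the paper: replace $C_p$ by $c_p$ (using Lemmas~\ref{l:zeta} and~\ref{l:symm}), observe that the matrix $A$ and coefficients $\bx^\star$ are unchanged so the identical-on-small-samples property carries over, note that the masking construction and masking-on-large-samples property go through verbatim, and invoke the versions of Lemmas~\ref{l:gap} and~\ref{l:curv} already stated for the parameters $k=2^{\sqrt{\log n}}$, $r=2^{\sqrt{\log n}/2}$. The paper's proof is in fact terser than yours; the parameter bookkeeping you flag as a potential obstacle is exactly what it sweeps into the references to those earlier lemmas.
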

We construct $g,b,m,m^+$ as in the general case but in terms of primitives $c_{p}$ instead of $C_{p}$. By Lemmas~\ref{l:zeta} and \ref{l:symm}, we obtain the same matrix $A$ and coefficients $\bx^{\star}$ as in the general case, so the identical on small samples property holds. The masking on large samples and curvature property hold almost identically as previously. Finally, since $k$ is reduced, the gap $\alpha$ is reduced to $2^{\Omega(\sqrt{\log n})}$.
\paragraph{OPS Hardness for Coverage Functions.} We get our main result by combining Theorem~\ref{t:gap} with  this $(\alpha = 2^{\Omega(\sqrt{\log n})}, \beta= o(1))$-gap.

\begin{restatable}{rThm}{tcover}
\label{t:cover}
For every constant $\epsilon > 0$, coverage functions are not $n^{-1/5 + \epsilon}$-optimizable from samples over any distribution $\DC$. In addition, polynomial-sized coverage functions are not $2^{-\Omega(\sqrt{\log n})}$-optimizable from samples over any distribution $\DC$.
\end{restatable}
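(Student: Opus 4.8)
The plan is to assemble Theorem~\ref{t:cover} purely by combining ingredients that are already in hand: the reduction lemma (Theorem~\ref{t:gap}) and the two gap constructions (the exponential-size one summarized right before Claim~\ref{c:expo}, and the polynomial-size one of Lemma~\ref{l:poly}). So the ``proof'' is really a bookkeeping exercise in plugging parameters into Theorem~\ref{t:gap}, and the bulk of the real work has been discharged in Sections~\ref{s:exp-construction}--\ref{s:succinct}.

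\textbf{First claim (exponential-size coverage functions).} I would first recall that by Lemma~\ref{l:abconstruction} (for $g,b$) together with Lemmas~\ref{l:masking} and \ref{l:gapmasking} (for $m,m^+$), the class $\mathcal{F}(g,b,m,m^+)$ built in Sections~\ref{s:exp-construction}--\ref{s:masking} has an $(\alpha,\beta)$-gap with $\alpha = n^{1/5-\epsilon}$ and $\beta = o(1)$, for $t = n^{3/5+\epsilon}$, and that every $f^{P,i}$ in this class is coverage (since $g$, $b$, $m^+$ are coverage and $m^+$ is realized as the overlap among their children, as noted before Claim~\ref{c:expo}). Then I apply Theorem~\ref{t:gap}: $\mathcal{F}(g,b,m,m^+)$ is not $2\max(1/(r(1-\beta)),\,2/\alpha)$-optimizable from samples over any $\mathcal{D}$. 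With $r = n^{1/5-\epsilon}$ and $\beta = o(1)$ we have $1/(r(1-\beta)) = (1+o(1)) n^{-1/5+\epsilon}$, and $2/\alpha = 2 n^{-1/5+\epsilon}$, so the threshold is $\Theta(n^{-1/5+\epsilon})$; absorbing constants into $\epsilon$ (replace $\epsilon$ by $\epsilon/2$ at the outset, say) gives the stated bound $n^{-1/5+\epsilon}$. This reproves Claim~\ref{c:expo} as a corollary and establishes the first sentence of the theorem.

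\textbf{Second claim (polynomial-size coverage functions).} The only change is to feed Theorem~\ref{t:gap} the gap from Lemma~\ref{l:poly} instead, where the cardinality constraint is lowered to $k = 2^{\sqrt{\log n}}$ and the number of parts to $r = 2^{\sqrt{\log n}/2}$, yielding an $(\alpha = 2^{\Omega(\sqrt{\log n})}, \beta = o(1))$-gap with the same $t = n^{3/5+\epsilon}$, and where the resulting $f^{P,i}$ are coverage with $\poly(n)$ children (this is exactly the content of Sections~\ref{s:succinct} via Lemmas~\ref{l:zeta} and \ref{l:symm}). Plugging in, the non-approximability threshold from Theorem~\ref{t:gap} is $2\max(1/(r(1-\beta)),\,2/\alpha) = 2\max\big((1+o(1))2^{-\sqrt{\log n}/2},\, 2\cdot 2^{-\Omega(\sqrt{\log n})}\big) = 2^{-\Omega(\sqrt{\log n})}$, which is the claimed bound. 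One should check that the hidden polynomial-in-$n$ number of samples and the constant failure probability $\delta$ in the definition of $\alpha$-optimizability are respected — but that is precisely what the ``with probability $1 - n^{-\omega(1)}$'' statements in the identical-on-small-samples and masking-on-large-samples properties buy us inside the proof of Theorem~\ref{t:gap}, so nothing new is needed here.

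\textbf{Main obstacle.} For this final statement there is essentially no obstacle: it is a one-line consequence of Theorem~\ref{t:gap} plus the constructed gaps, and the only care needed is constant-chasing in the exponents (making sure the factor-$2$'s and the $(1-\beta)^{-1} = 1+o(1)$ terms are swept into the $\epsilon$ in the first part and into the $\Omega(\cdot)$ in the second part) and confirming the coverage property is preserved by the $\mathcal{F}(g,b,m,m^+)$ construction. The genuine difficulty of the paper lies upstream — in proving Theorem~\ref{t:gap} itself and, especially, in the good/bad coverage construction of Lemma~\ref{l:abconstruction} (invertibility and coefficient bounds for the matrix $A$ with entries $C_{p_j}(i)$) and in the $\ell$-wise-independence trick of Lemmas~\ref{l:zeta}--\ref{l:symm} that compresses the children down to polynomial size — all of which I am permitted to assume here.
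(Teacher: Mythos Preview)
Your proposal is correct and matches the paper's own proof, which is literally the one-liner ``combining Claim~\ref{c:expo}, Lemma~\ref{l:poly}, and Theorem~\ref{t:gap}.'' Your parameter-checking (absorbing the factor $2$ and the $(1-\beta)^{-1}=1+o(1)$ into $\epsilon$ or into the $\Omega(\cdot)$) is exactly the bookkeeping the paper leaves implicit.
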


\section{Algorithms for OPS}\label{s:algorithms} 

In this section we describe \texttt{OPS}-algorithms. Our algorithmic approach is not to learn a surrogate function and to then optimize this surrogate function. Instead, the algorithms estimate the expected marginal contribution of elements to a random sample directly from the samples (Section~\ref{s:tool}) and solve the optimization problem using these estimates. The marginal contribution of an element $e$ to a set $S$ if $f_S(e) := f(S \cup\{e\}) - f(S)$. If these marginal contributions are decreasing, i.e., $f_S(e) \geq f_T(e)$ for all $e \in N$ and $S \subseteq T \subseteq N$, then $f$ is submodular. If they are positive, i.e., $f_S(e) \geq 0$ for all $e \in N$ and $S \subseteq N$, then $f$ is monotone.

 This simple idea turns out to be quite powerful; we   use these estimate to develop an $\tilde{\Omega}(n^{-1/4})$ \texttt{OPS}-algorithm for monotone submodular functions in Section~\ref{s:submodular}. This approximation is essentially tight with a hardness result for general submodular functions shown in Appendix~\ref{s:lbsubmodular} that uses the framework from the previous section.  In Section~\ref{s:curv} we show that when samples are generated from a product distribution, there are interesting classes of functions that are amenable to optimization from samples.  
%
%
%

\subsection{OPS via Estimates of Expected Marginal Contributions}
\label{s:tool}
A simple case in which the expected marginal contribution $\expect[S \sim \DC|e_i \not \in S]{f_S(e_i)}$ of an element $e_i$ to a random set $S \sim \DC$ can be estimated arbitrarily well is that of product distributions.  We now show a simple algorithm we call \textsc{EEMC} which estimates the expected marginal contribution of an element when the distribution $\DC$ is a product distribution. This estimate is simply the difference between the average value of a sample containing $e_i$ and the average value of a sample not containing $e_i$.
 \begin{algorithm}[H]
\caption{\textsc{EEMC} Estimates the Expected Marginal Contribution $\expect[S \sim \DC|e_i \not \in S]{f_S(e)}$.}
\begin{algorithmic}
    	\INPUT $\mathcal{S} = \{S_j \given (S_j, f(S_j)) \text{ is a sample} )\}$
    	\FOR {$i \in [n]$ }
     \STATE $\mathcal{S}_i \leftarrow \{S \given S \in  \mathcal{S}, e_i \in S\}$
     \STATE $\mathcal{S}_{-i} \leftarrow \{S \given S \in  \mathcal{S}, e_i \not \in S\}$
    	\STATE $\hat{v}_i \leftarrow  \frac{1}{|\mathcal{S}_i|} \sum_{S \in \mathcal{S}_i} f(S) - \frac{1}{|\mathcal{S}_{-i}|} \sum_{S \in \mathcal{S}_{-i}} f(S)$
    	\ENDFOR
    	\RETURN $(\hat{v}_1, \ldots, \hat{v}_n)$
  \end{algorithmic}
\end{algorithm}
 
 \begin{restatable}{rLem}{lest}
\label{l:est}
Let $\mathcal{D}$ be a product distribution with bounded marginals.\footnote{The marginals are bounded if for all $e$, $e \in S\sim \DC$ and $e \not \in S\sim \DC$  w.p. at least $1/ \poly(n)$ and at most $1 - 1/\poly(n)$.} Then, with probability at least $1 - O(e^{-n})$, the estimations $\hat{v}_i$ are $\epsilon$ accurate, for any $\epsilon \geq f(N) / \poly(n)$ and for all $e_i$,  i.e., 
$$ | \hat{v}_i - \expect[S \sim \DC|e_i \not \in S]{f_S(e_i)} | \leq \epsilon.$$
\end{restatable}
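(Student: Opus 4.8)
The plan is to analyze the estimator $\hat v_i$ produced by \textsc{EEMC} as a difference of two empirical means and to bound each mean's deviation from its population counterpart using a concentration inequality, then argue that the population difference is exactly the desired expected marginal contribution.

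First I would fix an element $e_i$ and observe that, because $\DC$ is a product distribution, conditioning on $e_i \in S$ (resp. $e_i \notin S$) does not change the distribution of $S$ on the coordinates $N \setminus \{e_i\}$. Hence $\expect[S \sim \DC]{f(S) \mid e_i \in S}$ equals $\expect[S \sim \DC \mid e_i \notin S]{f(S \cup \{e_i\})}$ and, subtracting, the population quantity $\expect{f(S)\mid e_i \in S} - \expect{f(S)\mid e_i\notin S}$ is precisely $\expect[S\sim\DC\mid e_i\notin S]{f_S(e_i)}$. This identity is what makes the naive difference-of-averages estimator unbiased in the limit; I would state it explicitly as the first step.

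Next I would handle the sampling error. Since the marginals are bounded, $\Pr[e_i \in S], \Pr[e_i\notin S] \ge 1/\poly(n)$, so with $m \in \poly(n)$ samples (taking the polynomial large enough relative to the marginal bound) a Chernoff bound gives $|\mathcal{S}_i|, |\mathcal{S}_{-i}| = \Omega(m/\poly(n))$ except with probability $e^{-\Omega(m/\poly(n))} = e^{-\Omega(n)}$ after choosing $m$ appropriately. Conditioned on these cardinalities, each of $\mathcal{S}_i$ and $\mathcal{S}_{-i}$ is an i.i.d. sample from the corresponding conditional distribution, and $f$ takes values in $[0, f(N)]$ by monotonicity, so by Hoeffding's inequality each empirical mean is within $\epsilon/2$ of its expectation except with probability $2\exp(-2 (\epsilon/2)^2 |\mathcal{S}_i| / f(N)^2)$; since $\epsilon \ge f(N)/\poly(n)$ and $|\mathcal{S}_i| = \Omega(n\cdot\poly(n))$ after a final adjustment of the sample-size polynomial, this is $e^{-\Omega(n)}$. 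A union bound over the two means and over all $n$ elements keeps the total failure probability at $O(n\cdot e^{-\Omega(n)}) = O(e^{-n})$, and on the good event $|\hat v_i - \expect[S\sim\DC\mid e_i\notin S]{f_S(e_i)}| \le \epsilon$ for all $i$.

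The main obstacle is bookkeeping the interplay of the polynomial factors: the marginal lower bound $1/\poly(n)$ shrinks the effective sample sizes, the accuracy requirement $\epsilon \ge f(N)/\poly(n)$ fixes how much Hoeffding concentration we need, and we must choose the sample-count polynomial $m \in \poly(n)$ large enough that both the "enough samples in each bucket'' event and the "each bucket's mean concentrates'' event fail only with probability $e^{-\Omega(n)}$, so that the union bound over $n$ coordinates still yields $O(e^{-n})$. None of this is deep, but it requires care to state the chain of polynomial dependencies so that a single $m = \poly(n)$ suffices uniformly; I would set the constants at the end after seeing exactly what each step demands.
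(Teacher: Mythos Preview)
Your proposal is correct and follows essentially the same approach as the paper: establish the identity $\expect{f(S)\mid e_i\in S}-\expect{f(S)\mid e_i\notin S}=\expect[S\sim\DC\mid e_i\notin S]{f_S(e_i)}$ via the product-distribution property, then bound each empirical mean by Hoeffding after ensuring (via the bounded-marginal assumption and enough samples) that $|\mathcal S_i|$ and $|\mathcal S_{-i}|$ are polynomially large. Your treatment is in fact slightly more careful than the paper's in that you explicitly carry out the union bound over all $n$ coordinates.
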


The proof consists of the following two steps. First note that
$$\expect[S \sim \DC|e_i \not \in S]{f_S(e_i)} = \expect[S \sim \DC|e_i \not \in S]{f(S \cup e_i)} - \expect[S \sim \DC|e_i \not \in S]{f(S)} = \expect[S \sim \DC|e_i  \in S]{f(S)} - \expect[S \sim \DC|e_i \not \in S]{f(S)}$$where the second equality is since $\DC$ is a product distribution.
 Then, from standard concentration bounds, the average value  $(\sum_{S \in \mathcal{S}_i} f(S)) / |\mathcal{S}_i|$ of a set containing $e_i$ estimates $\expect[S \sim \DC|e_i  \in S]{f(S)}$ well. Similarly, $(\sum_{S \in \mathcal{S}_{-i}} f(S)) / |\mathcal{S}_{-i}|$ estimates $\expect[S \sim \DC|e_i \not \in S]{f(S)}$.

 \subsection{A Tight Approximation for Submodular Functions}\label{s:submodular}
We develop an $\tilde{\Omega}(n^{-1/4})$-\texttt{OPS} algorithm over $\DC$ for monotone submodular functions, for some distribution $\DC$.  This bound is essentially tight since submodular functions are not $n^{-1/4 + \epsilon}$-optimizable from samples over \emph{any} distribution (Appendix~\ref{s:lbsubmodular}).    We first describe the distribution for which the approximation holds. Then we describe the algorithm, which builds upon estimates of expected marginal contributions.


\paragraph{The distribution.} Let $\mathcal{D}_i$ be the uniform distribution over all sets of size $i$. Define the distribution $\mathcal{D}^{sub}$ to be the distribution which draws from $\mathcal{D}_k$, $\mathcal{D}_{\sqrt{n}}$, and $\mathcal{D}_{\sqrt{n}+1}$ at random.  In Lemma~\ref{l:est2} we generalize Lemma~\ref{l:est} to estimate $\hat{v}_i \approx \expect[S \sim \DC_{\sqrt{n} }|e_i \not \in S]{f_S(e_i)}$ with samples from $\mathcal{D}_{\sqrt{n} }$ and $\mathcal{D}_{\sqrt{n}+1}$.

\paragraph{The algorithm.}  We begin by computing the expected marginal contributions of  all elements.  We then place the elements in $3 \log n$ bins according to their estimated expected marginal contribution $\hat{v}_i $. The algorithm then simply returns either the best sample of size $k$ or a random subset of size $k$ of a random bin. Up to logarithmic factors, we can restrict our attention to just one bin.  We give a formal description below.

 \begin{algorithm}[H]
\caption{An $\tilde{\Omega}(n^{-1/4})$-optimization from samples algorithm over $\mathcal{D}^{sub}$ for monotone submodular functions.}
\label{alg:ubsub}
\begin{algorithmic}
    	\INPUT $\mathcal{S} = \{S_i \given (S_i, f(S_i)) \text{ is a sample} )\}$
    	\STATE With probability $\frac{1}{2}$:
    	\bindent 
    	\RETURN $\argmax_{S\in \mathcal{S}\given |S| =k} f(S)$ \COMMENT{best sample of size $k$}
    	\eindent
    	\STATE With probability $\frac{1}{2}$:
    	\bindent 
    	\STATE $(\hat{v}_1, \ldots, \hat{v}_n) \leftarrow \textsc{EEMC}(\mathcal{S})$
    	\STATE $\hat{v}_{max} \leftarrow \max_i \hat{v}_i$
    	\FOR {$j \in [3 \log n]$}
    	\STATE $B_j\leftarrow \left\{i \given \frac{\hat{v}_{max}}{2^{j_1-1}}\leq \hat{v}_i < \frac{\hat{v}_{max}}{2^{j_1}}\right\}$
    	\ENDFOR
    	\STATE Pick $j \in [3 \log n]$ u.a.r.
    	\RETURN $S$, a subset of $B_j$ of size $\min\{|B_j|, k\}$ u.a.r. \COMMENT{a random set from a random bin}
    	\eindent
  \end{algorithmic}
\end{algorithm}

 \paragraph{Analysis of the algorithm.}  The main crux of this result is in the analysis of the algorithm.

 \begin{restatable}{rThm}{tubsubmodular}
\label{t:ubsubmodular} Algorithm~\ref{alg:ubsub} is an $\tilde{\Omega}(n^{-1/4})$-\texttt{OPS} algorithm over $\mathcal{D}^{sub}$ for monotone submodular functions.
\end{restatable}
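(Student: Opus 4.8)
The plan is to show that on every monotone submodular $f$ at least one of the two branches of Algorithm~\ref{alg:ubsub} recovers an $\tilde{\Omega}(n^{-1/4})$ fraction of $\text{OPT}:=f(O)$, where $O$ is an optimal feasible set. As a preliminary, Lemma~\ref{l:est} and its generalization Lemma~\ref{l:est2}, together with the crude bound $f(N)\le (n+1)\cdot\text{OPT}$ (both $\emptyset$ and every singleton is feasible, and $f(N)-f(\emptyset)\le\sum_e f(\{e\})$ by submodularity), let me assume the estimates are exact, $\hat v_i = v_i := \expect[S\sim\DC_{\sqrt{n}}|e_i\notin S]{f_S(e_i)}$; the additive slack of $\text{OPT}/\poly(n)$ is negligible in everything below. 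The starting point is the submodularity inequality $\text{OPT}=f(O)\le f(O\cup S)\le f(S)+\sum_{e\in O}f_S(e)$, valid for every $S$, which on averaging over $S\sim\DC_{\sqrt{n}}$ and using $\expect[S\sim\DC_{\sqrt{n}}]{f_S(e)}\le v_e$ (since $f_S(e)=0$ when $e\in S$) gives
\[
\text{OPT}\;\le\;\expect[S\sim\DC_{\sqrt{n}}]{f(S)}\;+\;\sum_{e\in O}v_e .
\]
Hence one of the two terms is $\ge\text{OPT}/2$, splitting the analysis into a ``spread'' regime (first term dominant) and a ``concentrated'' regime (second dominant).

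For the first branch (best sample of size $k$) I would use two facts. First, for monotone submodular $f$ the map $g(\ell):=\expect[|S|=\ell]{f(S)}$ is concave with $g(0)=0$: revealing a uniformly random set one element at a time, the expected marginal gains are non-increasing by submodularity. This gives $g(k)\ge\min\{1,k/\sqrt{n}\}\cdot g(\sqrt{n})$, and also $g(k)\ge(k/n)f(N)\ge(k/n)\text{OPT}$ directly (the first $k$ of the $n$ non-increasing marginal gains sum to at least a $k/n$ fraction of their total). Second, since $0\le f(S)\le(n+1)\text{OPT}$, a uniformly random size-$k$ set attains at least half its mean with probability $1/\poly(n)$, so among polynomially many $\DC_k$-samples the best one has value $\ge g(k)/2$ with high probability, and the branch returns it with probability $\tfrac12$. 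Consequently this branch recovers $\tilde{\Omega}(k/n)\cdot\text{OPT}$ unconditionally, and $\tilde{\Omega}(\min\{1,k/\sqrt{n}\})\cdot\text{OPT}$ in the spread regime; in particular it is already $\tilde{\Omega}(n^{-1/4})$ when $k=\Omega(n^{3/4})$, and when $k=\Omega(\sqrt{n})$ in the spread regime, and it also disposes of constant $k$, where a polynomial number of $\DC_k$-samples hits $O$ itself. The remaining regimes are $k=o(n^{3/4})$, handled through the second branch.

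For the second branch, in the concentrated regime $\sum_{e\in O}v_e\ge\text{OPT}/2$, I would bucket the ground set by $v_i$ into $\Theta(\log n)$ geometric bins $B_j$ (elements with $v_i<v_{\max}/\poly(n)$ contribute only $\text{OPT}/\poly(n)$ in total and are dropped), so some bin $B_{j^*}$ with common value level $\approx\theta$ satisfies $\sum_{e\in O\cap B_{j^*}}v_e=\tilde{\Omega}(\text{OPT})$, i.e.\ $\kappa:=|O\cap B_{j^*}|\le k$ and $\kappa\theta=\tilde{\Omega}(\text{OPT})$. With probability $\tilde{\Omega}(1)$ the algorithm picks $B_{j^*}$ and returns a uniformly random subset $S$ of it of size $m:=\min\{|B_{j^*}|,k\}\ge\kappa$, and the crux is to lower bound $\expect{f(S)}$ in terms of $\theta,\kappa,|B_{j^*}|,k$ and $\sqrt{n}$. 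The argument I have in mind combines: the concavity of $\ell\mapsto\expect[|U|=\ell,\, U\subseteq B_{j^*}]{f(U)}$, to transfer a bound from size $\sqrt{n}$ down to $m$; the submodularity fact that a uniformly random set of size at most $\sqrt{n}$ has $f$-marginal to each $e\in B_{j^*}$ at least $v_e\approx\theta$, which can be turned into a statement about marginals inside $B_{j^*}$ when $|B_{j^*}|$ is not too large compared with $\sqrt{n}$ (so that a uniformly random size-$\sqrt{n}$ subset of $N$ already carries a representative $\approx|B_{j^*}|\sqrt{n}/n$ elements of $B_{j^*}$); and, when $|B_{j^*}|$ is large, the fact that a random size-$m$ subset then contains $\Omega(m\kappa/|B_{j^*}|)$ of the $\kappa$ good elements. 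Balancing the resulting bound, which is strong for small $k$, against the first-branch bounds, which are strong for large $k$, makes the overall guarantee $\tilde{\Omega}(n^{-1/4})$, with the worst case occurring near $k\approx\sqrt{n}$.

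I expect the main obstacle to be this lower bound on $\expect{f(S)}$ for a uniformly random subset $S$ of the heavy bin $B_{j^*}$: the quantities $v_e$ we control are expected marginals to a set of a fixed and different size ($\sqrt{n}$) drawn from the whole ground set, whereas $S$ lives inside $B_{j^*}$ and is thus over-represented on $B_{j^*}$ relative to $\DC_{\sqrt{n}}$, so submodularity alone does not directly convert one into the other; this forces a case split on $|B_{j^*}|$ versus $\sqrt{n}$. A second, related difficulty is to make the first- and second-branch bounds combine to $\tilde{\Omega}(n^{-1/4})$ over the full range $k=o(n^{3/4})$ --- in particular in the intermediate range $k\in[\omega(1),\,o(\sqrt{n})]$ in the spread regime, where neither ``the best $\DC_k$-sample equals $O$'' nor ``$k=\Omega(\sqrt{n})$'' applies --- and it is exactly the bookkeeping here (including the polylogarithmic losses from the bins) that pins the exponent at $1/4$ rather than something weaker. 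The concavity of $\ell\mapsto\expect[|S|=\ell]{f(S)}$, the bound $g(k)\ge(k/n)f(N)$, and the concentration that makes the empirical best sample close to its mean are comparatively routine.
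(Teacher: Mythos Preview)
Your overall skeleton---the dichotomy $\text{OPT}\le \expect[S\sim\DC_{\sqrt{n}}]{f(S)}+\sum_{e\in O}v_e$, handling the two regimes via the two branches, and binning by $v_e$---matches the paper's proof. But the two places you flag as ``difficulties'' are not bookkeeping; each hides a concrete idea you are missing.

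\textbf{Spread regime, intermediate $k$.} Your only tool here is the concavity bound $g(k)\ge(k/\sqrt{n})\,g(\sqrt{n})$, which yields a $k/\sqrt{n}$-approximation via the best sample; for $k<n^{1/4}$ this is worse than $n^{-1/4}$, and ``a $\DC_k$-sample hits $O$'' only works for constant $k$. The paper closes exactly this range with a one-line observation you do not have: by submodularity some single element $e^{\star}$ satisfies $f(\{e^{\star}\})\ge f(S^{\star})/k$, and with polynomially many size-$k$ samples one of them contains $e^{\star}$ w.h.p., so the best sample is \emph{always} a $1/k$-approximation (Lemma~\ref{l:1/k}). Together with $k/(4\sqrt{n})$ (Lemma~\ref{l:kn1/2}), one of the two is at least $n^{-1/4}/4$.

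\textbf{Concentrated regime, the $k/|B_{j^*}|$ direction.} Your two ideas here are genuinely complementary, but the second does not go through as you state it. When $|B_{j^*}|$ is large you propose to count how many of the $\kappa$ elements of $O\cap B_{j^*}$ land in the random size-$m$ subset. Membership in $O$ gives no per-element value guarantee---their marginals inside that subset need not be large---so this does not lower-bound $f(S)$. The paper instead picks the ``heavy'' bin differently, as $B^{\star}=\argmax_{j\le j'} f(S^{\star}\cap B_j)$ among the bins with $v\ge f(S^{\star})/(4k)$, and then uses the low-value assumption $\expect[S\sim\DC_{\sqrt{n}}]{f(S)}\le f(S^{\star})/4$ to show that elements of $S^{\star}$ outside these bins contribute at most $(3/4+o(1))f(S^{\star})$; hence $f(B^{\star})\ge f(S^{\star}\cap B^{\star})=\tilde{\Omega}(\text{OPT})$ (Lemma~\ref{l:k/t}), and Lemma~\ref{l:k/n} gives the $k/|B^{\star}|$ factor. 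The other direction (Lemma~\ref{l:t/kn1/2}) is essentially what you gestured at: for $S\sim\DC_{\sqrt{n}}$ one has $(S\cap B^{\star})\setminus e\subseteq S\setminus e$, so $f_{(S\cap B^{\star})\setminus e}(e)\ge f_{S\setminus e}(e)$, whose expectation is $\ge v_e\ge f(S^{\star})/(4k)$; summing over the $\approx|B^{\star}|/\sqrt{n}$ elements of $S\cap B^{\star}$ gives the $|B^{\star}|/(4k\sqrt{n})$ bound. The product of the two bounds is $1/(4\sqrt{n})$, so their maximum is $\ge n^{-1/4}/2$.
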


The analysis is divided in two cases, depending if a random set $S \sim \DC_{\sqrt{n} }$ of size $\sqrt{n}$ has low value or not. Let $S^{\star}$ be the optimal solution.
 
\begin{itemize}
\item   Assume that  $\expect[S \sim \mathcal{D}_{\sqrt{n}}]{f(S)} \leq f(S^{\star}) / 4$. Thus, optimal elements have large estimated expected marginal contribution $\hat{v}_i$  by submodularity.  Let $B^{\star}$ be the bin with the largest value among the bins with contributions $\hat{v} \geq f(S^{\star}) / (4k)$. We argue that a random subset of $B^{\star}$ of size $k$ performs well. Lemma~\ref{l:t/kn1/2}  shows that a random subset of $B^{\star}$ is a $|B^{\star}|  / (4k\sqrt{n})$-approximation. At a high level, a random subset $S$ of size $\sqrt{n}$ contains $|B^{\star}| /\sqrt{n}$ elements from bin $B^{\star}$ in expectation, and these $|B^{\star}|  /\sqrt{n}$ elements $S_{B^{\star}}$ have contributions at least $f(S^{\star}) / (4k)$ to $S_{B^{\star}}$. Lemma~\ref{l:k/t} shows that a random subset of $B^{\star}$ is an $\tilde{\Omega}(k/|B^{\star}| )$-approximation to $f(S^{\star})$. The proof first shows that $f( B^{\star})$ has high value by the assumption that a random set $S \sim \mathcal{D}_{\sqrt{n}}$ has low value, and then uses the fact that a subset of $B^{\star}$ of size $k$ is a $k/|B^{\star}| $ approximation to $B^{\star}$. Note that either  $|B^{\star}|   / (4k\sqrt{n})$ or $\tilde{\Omega}(k/|B^{\star}| )$ is at least  $\tilde{\Omega}(n^{-1/4})$.

\item Assume that $\expect[S \sim \mathcal{D}_{\sqrt{n}}]{f(S)} \geq f(S^{\star}) / 4$. We argue that the best sample of size $k$ performs well. Lemma~\ref{l:kn1/2} shows that, by submodularity, a random set of size $k$ is a $k / (4\sqrt{n})$ approximation since a random set of size $k$ is a fraction $k / (\sqrt{n})$ smaller than a random set from $\DC_{\sqrt{n}}$ in expectation. Lemma~\ref{l:1/k} shows that the best sample of size $k$ is a $1/k$-approximation since it contains the elements with the highest value with high probability. Note that either $k / (4\sqrt{n})$ or $1/k$ is at least $n^{-1/4}$.
\end{itemize}

\subsection{Bounded Curvature and Additive Functions}
\label{s:curv}
A simple $((1-c)^2 - o(1))$-\texttt{OPS} algorithms for monotone submodular functions with curvature $c$ over product distributions follows immediately from estimating expected marginal contributions. This result was recently improved to $(1-c)/(1+c-c^2)$, which was shown to be tight \cite{BS16}.  An immediate corollary is that additive (linear) functions, which have curvature $0$, are $(1 - o(1))$-\texttt{OPS} over product distributions.  The curvature $c$ of a function measures how far this function is to being additive. 

\begin{definition*} 
The \emph{curvature} $c$ of a submodular function $f$ is  $c := 1 - \min_{e \in N, S \subseteq N} f_{S \setminus e}(e) / f(e).$
\end{definition*}
This definition implies that $f_S(e) \geq (1 - c) f(e) \geq (1-c)f_T(e)$ for all $S,T$ and all $e \not \in S \cup T$ since  $f(e) \geq f(T \cup e) - f(T) = f_T(e)$ where the first inequality is by submodularity. The algorithm simply returns the $k$ elements with the highest expected marginal contributions.

 \begin{algorithm}[H]
\caption{\textsc{MaxMargCont}: A $((1-c)^2-o(1))$-optimization from samples algorithm for monotone submodular functions with curvature $c$.}
\label{a:curv}
\begin{algorithmic}
		\INPUT $\mathcal{S} = \{S_i \given (S_i, f(S_i)) \text{ is a sample} )\}$
    	\STATE $(\hat{v}_1, \ldots, \hat{v}_n) \leftarrow  \textsc{EEMC}(\mathcal{S})$ 
	\RETURN $S \leftarrow \argmax_{|T| = k} \sum_{i \in T} \hat{v}_i$ 
  \end{algorithmic}
\end{algorithm}

\begin{restatable}{rThm}{tcurv}
\label{t:curv} Let $f$ be a monotone submodular function with curvature $c$ and $\mathcal{D}$ be a product distribution with bounded marginals. Then \textsc{MaxMargCont} is a $((1-c)^2-o(1))$-\texttt{OPS} algorithm.
\end{restatable}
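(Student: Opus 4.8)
The plan is to show that the $k$ elements with the largest estimated expected marginal contributions already recover a $(1-c)^2$ fraction of the optimum, up to a vanishing additive error coming from the estimation step. Write $v_i := \expect[S \sim \DC | e_i \not\in S]{f_S(e_i)}$ for the true expected marginal contribution of $e_i$, let $S^{\star}$ be an optimal set which by monotonicity we may take to have size exactly $k$, and let $S$ be the set returned by \textsc{MaxMargCont}. By Lemma~\ref{l:est}, with probability $1 - O(e^{-n}) \geq 1 - \delta$ (for $n$ large) we may assume $|\hat v_i - v_i| \leq \epsilon$ for all $i$, and we are free to take $\epsilon = f(N)/n^{c'}$ for any constant $c'$. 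Since the algorithm is deterministic given the samples, it then suffices to show $f(S) \geq ((1-c)^2 - o(1)) f(S^{\star})$ on this high-probability event.

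First I would record the two-sided consequence of bounded curvature already noted in the excerpt: for every $i$ and every $T$ with $e_i \not\in T$ we have $(1-c) f(e_i) \leq f_T(e_i) \leq f(e_i)$, the lower bound being the definition of $c$ and the upper bound submodularity. Taking expectations over $T \sim \DC$ conditioned on $e_i \not\in T$ gives $(1-c) f(e_i) \leq v_i \leq f(e_i)$. This is exactly where the two factors of $(1-c)$ in the statement originate: one when lower-bounding the value of the returned set, one when upper-bounding $f(S^{\star})$.

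Then come the two structural inequalities. Ordering the elements of $S$ arbitrarily and telescoping, $f(S) = \sum_{i \in S} f_{S_{<i}}(e_i) \geq (1-c)\sum_{i \in S} f(e_i) \geq (1-c)\sum_{i \in S} v_i$, where $S_{<i}$ denotes the elements of $S$ preceding $e_i$. For the optimum, submodularity gives $f(S^{\star}) \leq \sum_{i \in S^{\star}} f(e_i) \leq \frac{1}{1-c}\sum_{i \in S^{\star}} v_i$, i.e.\ $\sum_{i \in S^{\star}} v_i \geq (1-c) f(S^{\star})$. Finally, since $S = \argmax_{|T| = k} \sum_{i \in T} \hat v_i$ and the estimates are $\epsilon$-accurate, $\sum_{i \in S} v_i \geq \sum_{i \in S} \hat v_i - k\epsilon \geq \sum_{i \in S^{\star}} \hat v_i - k\epsilon \geq \sum_{i \in S^{\star}} v_i - 2k\epsilon$. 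Chaining these three facts yields $f(S) \geq (1-c)\big(\sum_{i \in S^{\star}} v_i - 2k\epsilon\big) \geq (1-c)^2 f(S^{\star}) - 2k\epsilon$.

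It remains to argue that $2k\epsilon$ is $o(1) \cdot f(S^{\star})$, which is the only step requiring care: Lemma~\ref{l:est} controls $\epsilon$ relative to $f(N)$, not relative to $f(S^{\star})$, so I need $f(N) = O\big((n/k) f(S^{\star})\big)$. This follows from submodularity by covering $N$ with $\lceil n/k \rceil$ sets of size at most $k$: $f(N) \leq \lceil n/k \rceil \, f(S^{\star}) \leq (2n/k) f(S^{\star})$. Taking $c' = 2$ then gives $2k\epsilon \leq 2k f(N)/n^2 \leq 4 f(S^{\star})/n = o(f(S^{\star}))$, hence $f(S) \geq ((1-c)^2 - o(1)) f(S^{\star})$ on the good event, which is precisely the \texttt{OPS} guarantee. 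The main obstacle is therefore not any individual inequality but ensuring the estimation error is measured against $f(S^{\star})$ rather than against $f(N)$; the remainder is a short chain of curvature and submodularity bounds.
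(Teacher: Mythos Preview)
Your proof is correct and follows essentially the same route as the paper: telescope $f(S)$ into marginal contributions, use curvature once to pass from each $f_{S_{<i}}(e_i)$ to the expected marginal $v_i$, use the algorithm's argmax property together with the $\epsilon$-accuracy of the estimates to swap $S$ for $S^{\star}$, and use curvature a second time to pass from $\sum_{i\in S^{\star}} v_i$ back to $f(S^{\star})$. The paper does this element-by-element via the inequality $f_S(e)\ge(1-c)f_T(e)$ applied twice in a single chain, whereas you route through the singleton values $f(e_i)$; these are equivalent formulations. Your treatment of the error term is in fact more careful than the paper's: the paper simply writes $k f(N)/n^2 = o(1)\cdot f(S^{\star})$ without justification, while you supply the missing step $f(N)\le \lceil n/k\rceil f(S^{\star})$ from submodularity and optimality of $S^{\star}$.
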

The proof follows almost immediately from the definition of curvature. Let $S$ be the set returned by the algorithm and $S^{\star}$ be the optimal solution, then $f(S)$ and $f(S^{\star})$ are sums of marginal contributions of elements in $S$ and $S^{\star}$ which are each at most a factor $1-c$ away from their estimated expected marginal contribution by curvature.  A $1 - o(1)$ approximation follows immediately for additive functions since they have curvature $c = 0$. A function $f$ is additive if  $f(S) = \sum_{e_i \in S} f(\{e_i\})$.
\begin{corollary}
\label{c:add}Let $f$ be an additive function and $\mathcal{D}$ be a product distribution with bounded marginals. Then \textsc{MaxMargCont} is a $(1-o(1))$-\texttt{OPS} algorithm.
\end{corollary}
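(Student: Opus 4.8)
The plan is to obtain this corollary as the degenerate case $c = 0$ of Theorem~\ref{t:curv}. First I would check that an additive function $f(S) = \sum_{e_i \in S} f(\{e_i\})$ (with nonnegative singleton values, so that it is a legitimate objective for maximization) satisfies the structural hypotheses of Theorem~\ref{t:curv}: it is monotone because each $f(\{e_i\}) \ge 0$, and it is submodular because for any $S \subseteq T$ and $e \notin T$ we have $f_S(e) = f(\{e\}) = f_T(e)$, so marginal contributions are (weakly) decreasing. Next I would compute the curvature directly from the definition $c = 1 - \min_{e \in N,\, S \subseteq N} f_{S \setminus e}(e)/f(e)$: for an additive function every marginal contribution equals the singleton value, $f_{S \setminus e}(e) = f(\{e\}) = f(e)$, so the ratio is identically $1$ and hence $c = 0$. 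Plugging $c = 0$ into the guarantee of Theorem~\ref{t:curv} shows that \textsc{MaxMargCont} is a $((1-c)^2 - o(1)) = (1 - o(1))$-\texttt{OPS} algorithm over any product distribution with bounded marginals, which is exactly the claim.

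There is no genuine obstacle here; the only points requiring care are confirming that additive functions really do fall inside the monotone submodular class and that their curvature is exactly $0$ (not merely small), both of which are immediate from the observation that marginal contributions of an additive function are constant.

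Alternatively, I would note that the corollary admits a self-contained proof straight from Lemma~\ref{l:est}, bypassing the curvature machinery. For additive $f$ the expected marginal contribution $\expect[S \sim \DC \mid e_i \notin S]{f_S(e_i)}$ equals $f(\{e_i\})$ exactly, independent of $S$, so with probability $1 - O(e^{-n})$ the estimates returned by \textsc{EEMC} satisfy $|\hat v_i - f(\{e_i\})| \le \epsilon$ for every $i$, where we may take $\epsilon = f(N)/n^{10}$. If $S$ is the set of the top $k$ elements by $\hat v_i$ and $S^\star$ is optimal, then $\sum_{i \in S}\hat v_i \ge \sum_{i \in S^\star}\hat v_i \ge f(S^\star) - k\epsilon$, and converting back, $f(S) = \sum_{i \in S} f(\{e_i\}) \ge \sum_{i \in S}\hat v_i - k\epsilon \ge f(S^\star) - 2k\epsilon$. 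Since $f(S^\star) \ge (k/n) f(N)$, the additive loss is $2k\epsilon \le (2/n^{9}) f(N) \le (2/n^{8}) f(S^\star) = o(1) \cdot f(S^\star)$, giving the $(1-o(1))$ ratio.
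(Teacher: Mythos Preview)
Your primary approach is exactly the paper's: the corollary is stated immediately after Theorem~\ref{t:curv} with the one-line justification that additive functions have curvature $c=0$, so the $(1-c)^2 - o(1)$ guarantee specializes to $1-o(1)$. Your alternative direct argument via Lemma~\ref{l:est} is also correct (and mirrors the later proof of Lemma~\ref{l:addrec}), but the paper does not spell it out here.
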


\section{Recoverability}
\label{s:recoverability}

The largely negative results from the above sections lead to the question of how well must a function be learned for it to be optimizable from samples? One extreme is a notion we refer to as recoverability (\texttt{REC}).  A function is recoverable if it can be learned \emph{everywhere} within an approximation of $1 \pm 1/n^2$ from samples.  Does a function need to be learnable everywhere for it to be optimizable from samples?
\begin{definition*}
A function $f$ is recoverable for distribution $\mathcal{D}$ if there exists an algorithm which, given a polynomial number of samples drawn from $\mathcal{D}$, outputs a function $\tilde{f}$ such that for all sets $S$,
$$\left(1 - \frac{1}{n^2}\right) f(S) \leq \tilde{f}(S) \leq \left(1 +  \frac{1}{n^2} \right) f(S)$$
with  probability at least $1 - \delta$ over the samples and the randomness of the algorithm, where $\delta \in [0,1)$ is a constant.
\end{definition*}
This notion of recoverability is similar to the problem of approximating a function everywhere from \citet{GHIM09}. The differences are that recoverability is from samples whereas their setting allows value queries and that recoverability requires being within an approximation of $1 \pm 1/n^2$. It is important for us to be within such bounds and not  within some arbitrarily small constant because such perturbations can still lead to an $O(n^{-1/2 + \delta})$ impossibility result for optimization \cite{HS15}. We show that if a monotone submodular function $f$ is recoverable then it is optimizable from samples by using the greedy algorithm on the recovered function $\tilde{f}$. The proof is similar to the classical analysis of the greedy algorithm.

\begin{restatable}{rThm}{greedyrecThm}
\label{t:greedyrec}  If a monotone submodular function $f$  is recoverable over $\mathcal{D}$, then it is $1-1/e -o(1)$-optimizable from samples over $\mathcal{D}$. For additive functions, it is $1 - o(1)$-optimizable from samples.
\end{restatable}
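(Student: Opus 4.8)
The plan is to recover a surrogate $\tilde f$ from the recoverability hypothesis and run the classical greedy algorithm on $\tilde f$, tracking how the multiplicative $1\pm n^{-2}$ error propagates through the standard $(1-1/e)$ analysis. First, invoke recoverability: with probability at least $1-\delta$ over the samples we obtain a $\tilde f$ with $(1-n^{-2})f(S)\le \tilde f(S)\le (1+n^{-2})f(S)$ for \emph{every} $S$; condition on this event for the rest of the argument. Since the greedy algorithm run on $\tilde f$ is deterministic given $\tilde f$, the expectation in the \texttt{OPS} definition is vacuous, and it suffices to show that on this event greedy returns a set of value at least $(1-1/e-o(1))\cdot\mathrm{OPT}$, where $S^\star$ is an optimal feasible set and $\mathrm{OPT}=f(S^\star)$. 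The key elementary observation is that every set touched by greedy is feasible (size at most $k$), hence by optimality of $S^\star$ has $f$-value at most $\mathrm{OPT}$; therefore on all such sets the additive error satisfies $|\tilde f(S)-f(S)|\le \mathrm{OPT}/n^2$, and the error in any marginal, $|\tilde f_S(e)-f_S(e)|$, is at most $2\mathrm{OPT}/n^2$.

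Next, carry out the greedy recursion. Let $S_0=\emptyset$ and $S_{i+1}=S_i\cup\{e_i\}$ with $e_i=\argmax_{e\notin S_i}\tilde f_{S_i}(e)$ for $i=0,\dots,k-1$ (note $|S_i\cup\{e\}|\le k$ at each such step, so the marginal-error bound applies). Using that $e_i$ beats the true-best element $e^\star=\argmax_e f_{S_i}(e)$ under $\tilde f$, the two marginal-error bounds, and the standard submodularity inequality $f_{S_i}(e^\star)\ge \tfrac1k\big(f(S^\star\cup S_i)-f(S_i)\big)\ge \tfrac1k(\mathrm{OPT}-f(S_i))$ (the last step by monotonicity), one obtains, for the absolute constant $c=4$,
$$f(S_{i+1})-f(S_i)\;\ge\;\tfrac1k\big(\mathrm{OPT}-f(S_i)\big)-\tfrac{c\,\mathrm{OPT}}{n^2}.$$
Writing $\Delta_i=\mathrm{OPT}-f(S_i)$ this reads $\Delta_{i+1}\le(1-1/k)\Delta_i+c\,\mathrm{OPT}/n^2$; unrolling over the $k\le n$ steps and using $\sum_{j=0}^{k-1}(1-1/k)^j\le k$ and $(1-1/k)^k\le e^{-1}$ gives $\Delta_k\le e^{-1}\mathrm{OPT}+ck\,\mathrm{OPT}/n^2\le e^{-1}\mathrm{OPT}+c\,\mathrm{OPT}/n$, i.e. $f(S_k)\ge(1-1/e-o(1))\,\mathrm{OPT}$, which is the first claim.

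For the additive case, greedy on $\tilde f$ simply returns the set $T$ of the $k$ elements of largest estimated singleton value $\tilde f(\{e\})$. Since all singleton values are nonnegative and $|S^\star|\le k$, we get $\sum_{e\in T}\tilde f(\{e\})\ge\sum_{e\in S^\star}\tilde f(\{e\})\ge(1-n^{-2})\,\mathrm{OPT}$, while $\sum_{e\in T}\tilde f(\{e\})\le(1+n^{-2})\,f(T)$ by additivity of $f$; hence $f(T)\ge\tfrac{1-n^{-2}}{1+n^{-2}}\,\mathrm{OPT}=(1-o(1))\,\mathrm{OPT}$.

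I expect the only real subtlety — and the step worth spelling out — to be the error bookkeeping in the recursion: one must (i) observe that all sets encountered by greedy are feasible, so their $f$-values, and therefore all errors, are controlled by $\mathrm{OPT}$ rather than by $f(N)$; and (ii) check that the per-step error $O(\mathrm{OPT}/n^2)$, summed against the geometric factors over $k\le n$ steps, contributes only $O(\mathrm{OPT}/n)=o(\mathrm{OPT})$, which is precisely why the $1\pm n^{-2}$ recoverability accuracy (rather than an arbitrarily small constant) is needed. Everything else is the textbook greedy argument.
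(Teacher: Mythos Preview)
Your proposal is correct and follows essentially the same approach as the paper: run greedy on the recovered surrogate $\tilde f$ and thread the $1\pm n^{-2}$ error through the classical $(1-1/e)$ analysis, and for additive functions compare singleton estimates directly. The only cosmetic difference is bookkeeping: the paper keeps the error multiplicative (deriving $f(S_{i-1}\cup\{e\})\le\tfrac{1+n^{-2}}{1-n^{-2}}f(S_i)$ and collecting a factor $\bigl(\tfrac{1-n^{-2}}{1+n^{-2}}\bigr)^k=1-o(1)$), whereas you first convert to an additive $O(\mathrm{OPT}/n^2)$ error via the observation $f(S)\le\mathrm{OPT}$ for feasible $S$; both routes yield the same bound.
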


We show that additive functions are in \texttt{REC} under some mild condition. Combined with the previous result, we get an alternate proof from the previous section for additive functions being $1 - o(1)$-optimizable from samples over product distributions. 

\begin{restatable}{rLem}{laddrec}
\label{l:addrec}
Let $f$ be an additive function with $v_{max} = \max_i f(\{e_i\})$, $v_{min} = \min_i f(\{e_i\})$  and let $\mathcal{D}$ be a product distribution with bounded marginals. If $v_{min} \geq  v_{max} / poly(n)$, then $f$ is recoverable for $\mathcal{D}$.
\end{restatable}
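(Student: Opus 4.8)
The plan is to prove Lemma~\ref{l:addrec} by showing that the \textsc{EEMC} algorithm (or a trivial variant of it) already recovers each singleton value $f(\{e_i\})$ to sufficient precision, and that for additive functions this suffices to recover $f$ everywhere. First I would observe that for an additive function, $\expect[S \sim \DC \mid e_i \not\in S]{f_S(e_i)} = f(\{e_i\})$ exactly, since the marginal contribution of $e_i$ is always $f(\{e_i\})$ regardless of $S$. Hence by Lemma~\ref{l:est}, with probability $1 - O(e^{-n})$ the estimates satisfy $|\hat v_i - f(\{e_i\})| \leq \epsilon$ for any $\epsilon \geq f(N)/\poly(n)$. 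Setting $\tilde f(S) := \sum_{e_i \in S} \hat v_i$, additivity gives $|\tilde f(S) - f(S)| \leq |S| \cdot \epsilon \leq n\epsilon$ for every set $S$ simultaneously (the high-probability event is a single event over the sample, not a per-set event, so there is no union bound to worry about).

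Next I would convert the additive error $n\epsilon$ into the required multiplicative $1 \pm 1/n^2$ guarantee, and this is where the hypothesis $v_{\min} \geq v_{\max}/\poly(n)$ enters. The point is that $f(N) = \sum_i f(\{e_i\}) \leq n \cdot v_{\max}$, while any nonempty $S$ has $f(S) \geq v_{\min} \geq v_{\max}/\poly(n)$. Choosing $\epsilon = f(N)/\poly(n)$ with a large enough polynomial in the denominator, we get $n\epsilon \leq n \cdot f(N)/\poly(n) \leq n^2 v_{\max}/\poly(n)$, which can be made at most $v_{\min}/n^2 \leq f(S)/n^2$ for all nonempty $S$ by taking the polynomial large enough (it needs to beat $n^4 \cdot \poly(n)$ from the $v_{\max}/v_{\min}$ ratio). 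For $S = \emptyset$ both $f$ and $\tilde f$ are $0$ (after normalizing $\hat v$ appropriately, or just noting the empty set is handled trivially), so the bound holds there as well. This establishes $(1 - 1/n^2) f(S) \leq \tilde f(S) \leq (1 + 1/n^2) f(S)$ for all $S$, which is exactly recoverability for $\DC$.

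The only mild subtlety — and the part I would be most careful about — is the bounded-marginals condition needed to invoke Lemma~\ref{l:est}: we need each $e_i$ to appear and to be absent in a sample with probability at least $1/\poly(n)$, so that both $\mathcal{S}_i$ and $\mathcal{S}_{-i}$ are nonempty and large enough for the concentration bound. This is already part of the lemma's hypothesis that $\DC$ is a product distribution with bounded marginals, so nothing extra is required; I would just state explicitly that the high-probability event from Lemma~\ref{l:est} is the event we condition on throughout. I do not expect any real obstacle here — the lemma is essentially a bookkeeping corollary of Lemma~\ref{l:est} together with the observation that additivity makes marginal contributions set-independent, so the proof should be short.
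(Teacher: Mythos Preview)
Your proposal is correct and follows essentially the same approach as the paper: both use Lemma~\ref{l:est} together with the observation that for additive $f$ the expected marginal contribution equals $f(\{e_i\})$, and then set $\tilde f(S)=\sum_{e_i\in S}\hat v_i$. The only cosmetic difference is that the paper applies Lemma~\ref{l:est} with $\epsilon=f(\{e_i\})/n^2$ to obtain a per-element multiplicative bound $(1\pm 1/n^2)f(\{e_i\})$ and then sums, whereas you take a single additive $\epsilon$ and convert to a multiplicative bound at the end; both routes use the hypothesis $v_{\min}\ge v_{\max}/\poly(n)$ in the same way.
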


We also note that submodular functions that are a $c$-\emph{junta} for some constant $c$ are recoverable. A function $f$  is a $c$-junta \cite{mossel2003learning,FV13-submodular_juntas,valiant2012finding} if it depends only on a set of elements $T$ of size $c$. If $c$ is constant, then with enough samples, $T$ can be learned since each element not in $T$ is in at least one sample which does not contain any element in $T$. For each subset of $T$, there is also at least one sample which intersects with $T$ in exactly that subset, so $f$ is exactly recoverable.

The previous results lead us to the following question: Does a function need to be recoverable to be optimizable from samples? We show that it is not the case since unit demand functions are optimizable from samples and not recoverable. A function $f$ is a unit demand function if $f(S) = \max_{e_i \in S} f(\{e_i\})$.

\begin{restatable}{rLem}{lud}
\label{l:ud}
Unit demand functions are not recoverable for $k \geq n^{\epsilon}$ but are $1$-\texttt{OPS}.
\end{restatable}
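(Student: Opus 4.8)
The plan is to establish the two parts of Lemma~\ref{l:ud} separately: first the negative result (unit demand functions are not recoverable when $k \geq n^\epsilon$), then the positive result (unit demand functions are $1$-\texttt{OPS}).

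\paragraph{Non-recoverability.} The idea is to exhibit two unit demand functions that agree on all polynomially many samples with high probability but differ by more than a $1 \pm 1/n^2$ factor somewhere. First I would fix a distribution $\mathcal{D}$ (recoverability is defined per distribution, so we get to choose a convenient one — say the uniform distribution over sets of size $k$, or more simply a product distribution where each element is included independently with probability $1/2$). Consider the ``base'' unit demand function where $f(\{e_i\}) = 1$ for all $i$, so that $f(S) = 1$ for every nonempty $S$. Now consider a perturbed function $f'$ obtained by picking one element $e_j$ and setting $f'(\{e_j\}) = 1 + \gamma$ for a small constant $\gamma > 0$, leaving all other singleton values at $1$; then $f'(S) = f(S)$ unless $e_j \in S$, in which case $f'(S) = 1 + \gamma$. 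The key observation is that a single value $f(S)$ or $f'(S)$ on a sample of size $\geq 1$ reveals essentially nothing: both functions give value $1$ on samples not containing $e_j$, and on a sample containing $e_j$ the two differ, but this only happens — and only matters — if $e_j$ is in the sample. The cleaner construction: hide the identity of $e_j$ among many candidates. Take $f(\{e_i\}) = 1$ for all $i$ as the base; for each $j$, let $f^{(j)}$ set $f^{(j)}(\{e_j\}) = 2$. Given polynomially many samples, an algorithm sees value $2$ on exactly those samples containing the special element, but since every function $f^{(j)}$ with $e_j$ appearing in the same set of samples is consistent, the algorithm cannot identify $j$. More carefully, I would instead make the special element have a \emph{small} singleton value, e.g. $f^{(j)}(\{e_j\}) = 1/2$ with all others $1$: then $f^{(j)}(S) = 1$ for any $S$ with $|S| \geq 2$ (the max is attained by some other element), and $f^{(j)}(S) = 1$ whenever $|S| \geq 2$, so $f^{(j)}$ differs from the all-ones unit demand function $f$ \emph{only} on the singleton $\{e_j\}$, where $f^{(j)}(\{e_j\}) = 1/2 \neq 1 = f(\{e_j\})$. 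If $\mathcal{D}$ never (or with exponentially small probability) samples the singleton $\{e_j\}$ — e.g. $\mathcal{D}$ = uniform over sets of size $k \geq n^\epsilon$ — then all the $f^{(j)}$ and $f$ are indistinguishable from the samples, yet any output $\tilde f$ must violate the $1 \pm 1/n^2$ bound at $\{e_j\}$ for all but at most one value of $j$ (since $\tilde f(\{e_j\})$ cannot simultaneously be within $1 \pm 1/n^2$ of both $1$ and $1/2$). Choosing $j$ adversarially (or uniformly at random and using a union/averaging argument) gives a function for which recovery fails with constant probability. This is where I expect the bookkeeping to go; the main obstacle is choosing the perturbation so that it is (a) invisible on the support of $\mathcal{D}$, (b) genuinely a unit demand function, and (c) large enough to break the $1 \pm 1/n^2$ window — the $1/2$ versus $1$ singleton trick achieves all three, and restricting to $|S|=k$ sampling makes the singletons invisible exactly when $k \geq n^\epsilon \geq 2$.

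\paragraph{$1$-\texttt{OPS}.} For the positive direction, the point is that optimizing a unit demand function under a cardinality constraint $k \geq 1$ only requires knowing $\max_i f(\{e_i\})$, i.e., identifying (the value of) a single best element. First I would note $\max_{|T| \leq k} f(T) = \max_{i \in N} f(\{e_i\})$ for any $k \geq 1$, since $f(T) = \max_{e_i \in T} f(\{e_i\})$. Recoverability is defined per distribution, and so is \texttt{OPS}; but actually the claim is that unit demand functions are $1$-\texttt{OPS}, meaning there is \emph{some} distribution $\mathcal{D}$ over which the class is $1$-optimizable from samples. Take $\mathcal{D}$ to be any distribution that puts $\geq 1/\poly(n)$ mass on each singleton $\{e_i\}$ — for instance, the distribution that with probability $1/2$ returns a uniformly random singleton and with probability $1/2$ returns some fixed set. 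Then with polynomially many samples, with probability $1 - \delta$ we observe every singleton $\{e_i\}$ at least once, and hence learn $f(\{e_i\})$ exactly for every $i$; the algorithm outputs $\argmax_i f(\{e_i\})$ (padded arbitrarily to size $k$), which achieves value exactly $\max_T f(T)$, i.e., a $1$-approximation. The only thing to check is the sample complexity: $O(n \log n)$ samples suffice for a coupon-collector argument ensuring all singletons appear, which is $\poly(n)$, so this is a valid $1$-\texttt{OPS} algorithm. I would conclude by contrasting: over this distribution $\mathcal{D}$ the function is $1$-\texttt{OPS}, while non-recoverability holds over the size-$k$ distribution; if one wants both statements for the \emph{same} $\mathcal{D}$, mix the two distributions — sampling singletons with probability $1/2$ still makes recovery of a singleton \emph{value} possible, so one must instead use the size-$k$ distribution and observe that $1$-\texttt{OPS} over it fails but over a singleton-heavy distribution it succeeds, which is exactly the point of the lemma (optimizability and recoverability are incomparable). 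The main subtlety here is just making sure the distribution witnessing $1$-\texttt{OPS} and the distribution witnessing non-recoverability need not be the same, which is consistent with how both notions are quantified in the paper.
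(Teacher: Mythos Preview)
Your non-recoverability construction is essentially the paper's: the paper fixes the special element as $e_1$ and varies its value ($f_j(\{e_1\}) = j/n$, all other singleton values $1$, for $j \in [n]$), while you vary the identity of the low-value element; in both cases every sample of size $\geq 2$ has value $1$, so the hidden parameter is invisible over the size-$k$ distribution.

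For $1$-\texttt{OPS} you take a different route, and your closing remark contains an error. The paper does not sample singletons; its algorithm is simply ``return the sample with the largest value.'' Over the uniform distribution on sets of size $k$, with $\poly(n)$ samples the best element $e^\star = \argmax_i f(\{e_i\})$ lies in at least one sample with high probability, and any set containing $e^\star$ already has value $f(\{e^\star\}) = \max_{|T|\le k} f(T)$---so the best observed sample is already optimal and feasible. This works over the \emph{same} size-$k$ distribution used for non-recoverability, which is the intended content of the lemma: a single distribution on which the class is $1$-\texttt{OPS} yet not recoverable. Your singleton-sampling algorithm is correct but needlessly switches distributions, and your speculation that $1$-\texttt{OPS} ``fails'' over the size-$k$ distribution is wrong. (Relatedly, recoverability and \texttt{OPS} are not ``incomparable'': Theorem~\ref{t:greedyrec} already shows recoverability implies \texttt{OPS}; this lemma shows the converse fails.)
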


 We conclude that functions do not need to be learnable everywhere from samples to be optimizable from samples.

\section{Additional related work}
\label{s:addrelatedwork}
\paragraph{Revenue maximization from samples.} The discrepancy between the model on which algorithms optimize and the true state of nature has recently been studied in algorithmic mechanism design.
Most closely related to our work are several recent papers (e.g. \cite{CR14, DHN14, CHN15, HMR15, MR15-auction-pseudo, CCDEHT15-quasipoly_signaling}) that also consider models that bypass the learning algorithm and let the mechanism designer access samples from a distribution rather than an explicit Bayesian prior.
In contrast to our negative conclusions, these papers achieve mostly positive results.
In particular, Huang et al. \cite{HMR15} show that the obtainable revenue is much closer to the optimum than the information-theoretic bound on learning the valuation distribution.

\paragraph{Comparison to online learning and reinforcement learning.} Another line of work which combines decision-making and learning is online learning  (see survey \cite{onlinelearning}). In online learning, a player iteratively makes decisions. For each decision, the player incurs a cost and the cost function for the current iteration is immediately revealed. The objective is to minimize regret, which is the difference between the sum of the costs of the decisions of the player and the sum of the costs of the best fixed decision. The fundamental differences with our framework are that decisions are made online after each observation, instead of offline given a collection of observations. The benchmarks, regret in one case and the optimal solution in the other, are not comparable.

A similar comparison can be made with the problem of reinforcement learning, where at each iteration the player typically interacts with a Markov decision process (MDP) \cite{papadimitriou1987complexity}. At each iteration, an action is chosen in an online manner and the player receives a reward based on the action and the state in the MDP she is in. Again, this differs from our setting where there is one offline decision to be made given a collection observations.

\paragraph{Additional learning results for submodular functions.} In addition to the \texttt{PMAC} learning results  mentioned in the introduction for coverage functions, there are multiple learning results for submodular functions.  Monotone submodular functions are $\alpha$-\texttt{PMAC} learnable over product distributions for some constant $\alpha$ under some assumptions \cite{BH11-PMAC}.  Impossibility results arise for general distributions, in which case submodular functions are not  $\tilde{\Omega}(n^{-1/3})$-\texttt{PMAC} learnable \cite{BH11-PMAC}. Finally, submodular functions can be $(1-\epsilon)$-\texttt{PMAC} learned for the uniform distribution over all sets with a running time and sample complexity exponential in $\epsilon$ and polynomial in $n$ \cite{FV13-submodular_juntas}. This exponential dependency is necessary since $2^{\Omega(\epsilon^{-2/3})}$ samples are needed to learn submodular functions with $\ell_1$-error of $\epsilon$ over this distribution \cite{FKV13-submodular_trees}.

\newpage


\bibliographystyle{plainnat}
\bibliography{sampleOpt}

\newpage
\section*{Appendix}
\setcounter{section}{0}
\renewcommand{\thesection}{\Alph{section}}
\newpage
\section{Impossibility of OPS}
\label{s:mpimpossibility}
\subsection*{A Framework for OPS Hardness}

We reduce the problem of showing hardness results to the problem of constructing $g,b,m,m^+$ with an $(\alpha, \beta)$-gap. Recall that a partition $P$ has $r$ parts $T_1, \ldots, T_r$ of $k$  elements and a part $M$ of remaining $n - rk$ elements. The functions $f^{P,i}(S) \in \mathcal{F}(g,b,m,m^+) $ are defined as $f^{P,i}(S) := (1 - m(S \cap M)) \left(g(S \cap T_i) + b(S \cap T_{-i})\right) + m^+(S \cap  M)$ with $i \in [r]$.  
	
\tgap*
\begin{proof} Fix any distribution $\mathcal{D}$.  We first claim that for a fixed set $S$, $f^{P,i}(S) $ is independent of $i$  with probability $1 - n^{-\omega(1)}$ over a uniformly random partition $P \sim \mathcal{U}(\PC)$.  If $|S| \leq t$, then the claim holds immediately by the identical on small samples property. If $|S| \geq t$, then $m(S \cap M) = 1$ with probability $1 - n^{-\omega(1)}$ over $P$ by the masking on large samples property and $f^{P,i}(S) = m^+(S \cap M)$. 

Next, we claim that there exists a partition $P \in \mathcal{P}$ such that $f^{P,i}(S)$ is independent of $i$  with probability $1 - n^{-\omega(1)}$ over $S \sim \mathcal{D}$. Denote the event that $f^{P,i}(S)$ is independent of $i$ by $I(S,P)$.  By switching sums,
\begin{align*}
& \sum_{P \in \PC} \prob{P \sim \mathcal{U}(\PC)} \sum_{S \in 2^N}\prob{S \sim \DC} \mathds{1}_{I(S,P)} \\
 = & \sum_{S \in 2^N}\prob{S \sim \DC} \sum_{P \in \PC} \prob{P \sim \mathcal{U}(\PC)} \mathds{1}_{I(S,P)} \\
  \geq & \sum_{S \in 2^N}\prob{S \sim \DC} \left(1 - n^{-\omega(1)}\right) \\
    = & 1 - n^{-\omega(1)}
\end{align*}
where the inequality is by the first claim. Thus there exists some $P$ such that $$ \sum_{S \in 2^N}\prob{S \sim \DC}\mathds{1}_{I(S,P)} \geq 1- n^{-\omega(1)},$$ which proves the desired claim.

Fix a partition $P$ such that the previous claim holds, i.e., $f^{P,i}(S)$ is independent of $i$ with probability $1 - n^{-\omega(1)}$ over a sample $S \sim \DC$ . Then, by a union bound over the polynomially many samples,  $f^{P,i}(S)$ is independent of $i$ for all samples $S$ with probability $1 - n^{-\omega(1)}$, and we assume this is the case for the remaining of the proof. It follows  that the choices of the algorithm given samples from $f \in \{f^{P,i}\}_{i=1}^{r}$  are independent of $i$. Pick $i \in [r]$ uniformly at random and consider the (possibly randomized) set $S$ returned by the algorithm. Since $S$ is independent of $i$, we get  $\expect[i,S]{|S \cap T_i|} \leq k/r$. Let $S_{k/r} = \argmax_{S:|S|=k/r}(g(S))$,  we obtain
 \begin{align*}
 \expect[i, S]{f^{P,i}(S)} &  \leq  \expect[i, S]{ g(S \cap T_i) + b(S \cap T_{- i})+ m^+(S \cap M)} &&(m(S \cap M) \leq 1) \\
& \leq  g(S_{k/r}) + b(S)+ m^+(S) &&\text{(monotone and submodular)} \\
& \leq  \frac{  1 }{r ( 1- \beta)}  g(T_i) + \frac{2}{\alpha} g(T_i) &&\text{(curvature and gap)} \\
& \leq  2\max\left(  \frac{1}{r ( 1- \beta)}, \frac{2}{\alpha}\right) f^{P,i}(T_i)
\end{align*}
Thus, there exists at least one $i$ such that the algorithm does not obtain a $2\max(  1 /(r ( 1- \beta)), 2/\alpha)$-approximation to $f^{\mathcal{P},i}(T_i)$, and $T_i$ is the optimal solution.
\end{proof}

\subsection*{OPS Hardness of Coverage Functions}
We consider the cardinality constraint $k = n^{2/5 - \epsilon}$ and the number of parts $r = n^{1/5 - \epsilon}$.
\subsubsection*{Construction the Good and the Bad Coverage Functions}

For symmetric functions $h$ (i.e. whose value only depends on the size of the set), we abuse notation and simply write $h(y)$ instead of $h(S)$ for a set $S$ of size $y$. We begin by showing that  the primitives $C_{p}(y) = p \cdot \left( 1 - (1 - 1/p)^y \right)$ (illustrated in Figure~\ref{fig:graph}) are coverage functions. It then follows that the functions $g$ and $b$ are coverage.
\cct*
\begin{proof}
Note that 
\begin{align*}
C_{p}(S) & =  \sum_{T \given |T \cap S| \geq 1} w(a_T) \\ 
& = t \cdot \sum_{T \given |T \cap S| \geq 1} \prob{T \sim \B(N, 1/p)}\\ 
& = t \cdot \left(1 - \sum_{T \given |T \cap S| = 0} \prob{T \sim \B(N, 1/p)} \right)\\ 
& = t \cdot \left(1 - \Pr_{T \sim \B(N, 1/p)}[|T \cap S| = 0]\right)\\ 
& = t \cdot \left(1 - \left(1 - \frac{1}{p}\right)^{|S|}\right).
\end{align*}
\end{proof}

\begin{figure}
\centering
\includegraphics[width=7cm]{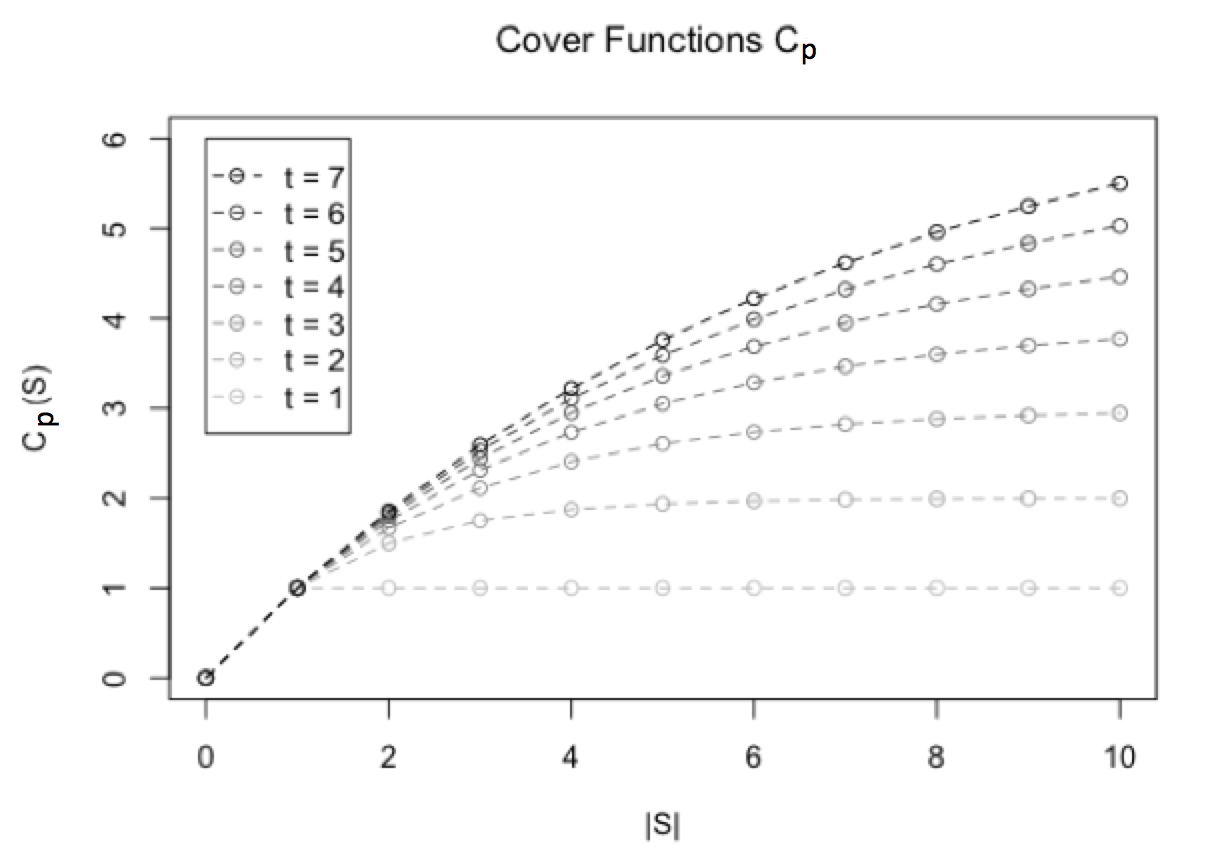}
\caption{The value of coverage functions $C_{p}(y)$ for $1 \leq p \leq 7$ and sets $y \in [10]$.}
\label{fig:graph}
\end{figure}

In the remaining of this section, we prove Lemma~\ref{l:abconstruction}.

\labconstruction*

The good and bad functions are defined as $g(y) = y + \sum_{j \given x_j < 0} (-x_j) C_{p_j}(y)$ and $b(S) = \sum_{j=1, j \neq i}^r b'(S \cap T_j)$ with
$b'(y) := \sum_{j \given x_j > 0} x_j C_{p_j}(y)$. We obtain the coefficients $\bx$ by solving the system of linear equations $A \bx = \by$ where $A_{ij} := C_{p_j}(i)$ and $y_j := j$ as illustrated in Figure~\ref{f:matrix}, with $i,j \in [\ell]$.

\begin{figure}[t]
\centering
  \includegraphics[width=.5\linewidth]{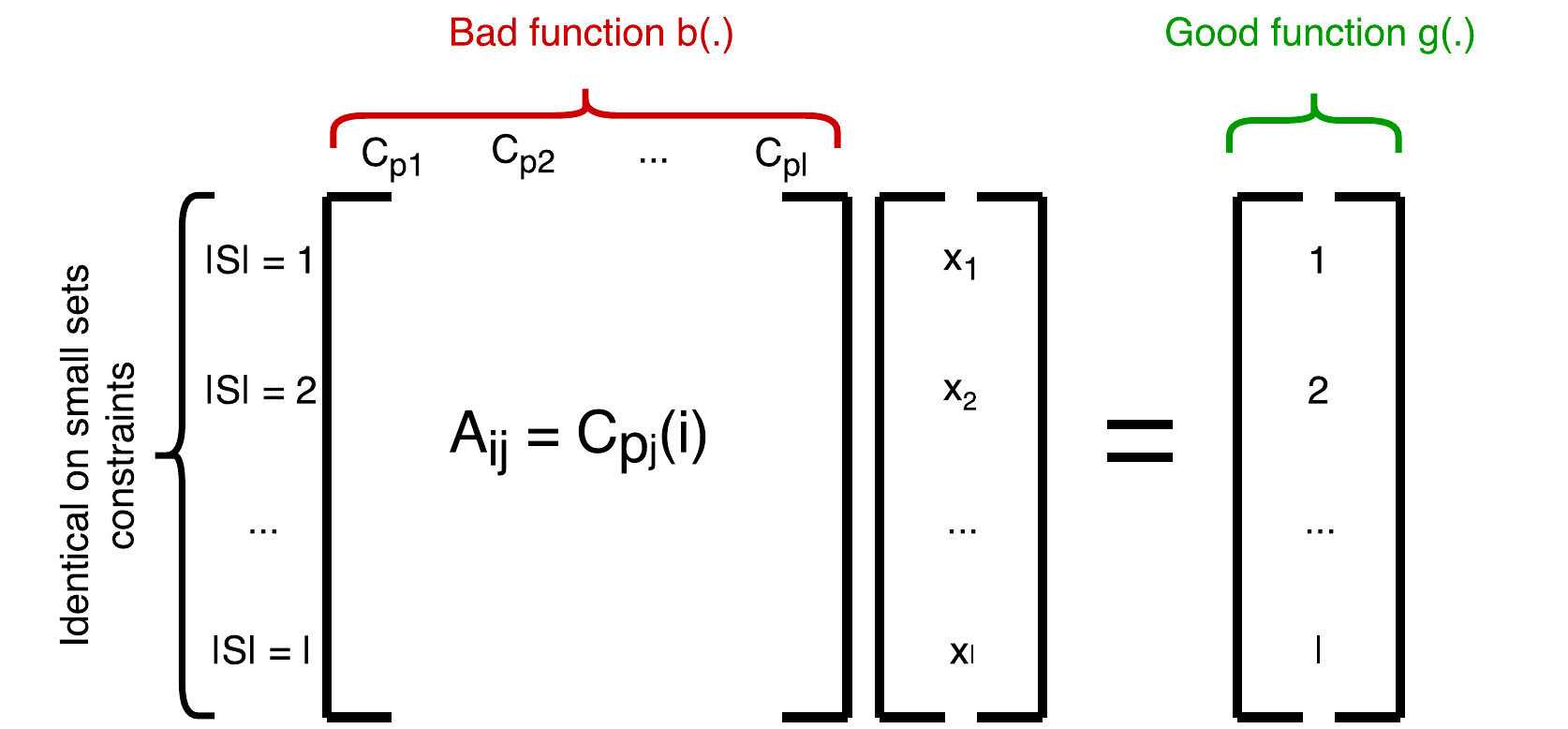}
\caption{The matrix $A$.}
\label{f:matrix}
\end{figure}

To prove Lemma~\ref{l:abconstruction}, we begin by showing that $A$ is invertible in Lemma~\ref{lem:invertible}, so that the coefficients $\bx$ satisfying the system of linear equations exist. We then show the three desired properties. Lemma~\ref{lem:skewed-concentration} shows that a set $S$ of size at most $n^{3/5 + \epsilon}$ contains at most $\ell$ elements from any part $T_j$ w.p. $1 - n^{-\omega(1)}$, thus the identical on small samples property holds by the system of linear equations (Lemma~\ref{l:issp}). Lemma~\ref{l:boundCoeff} bounds the coefficients $\bx $, thus the $y$ term in the good function dominates and we obtain the gap (Lemma~\ref{l:gap}) and curvature (Lemma~\ref{l:curv}) properties.

\begin{lemma}
\label{lem:invertible}
Matrix $A(\{p_j\}_{j=1}^\ell)$ is invertible for some set of integers  $\{p_j\}_{j=1}^\ell$ such that $j \leq p_j \leq j(j+1)$ for all $1 \leq j \leq \ell$. 
\end{lemma}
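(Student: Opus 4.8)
The plan is to show that the determinant $\det A$, viewed as a function of the parameters $p_1,\dots,p_\ell$, is not identically zero, and then to extract a concrete integer choice $p_j$ with $j \le p_j \le j(j+1)$ that keeps it nonzero. The natural device is to rescale: instead of the parameter $p$, work with $q := 1/p$, so that $C_p(i) = p(1-(1-1/p)^i) = \frac{1}{q}\bigl(1-(1-q)^i\bigr)$. The numerator $1-(1-q)^i$ is a polynomial in $q$ with zero constant term and of degree $i$, divisible by $q$, so $C_p(i) = \frac{1}{q}\bigl(1-(1-q)^i\bigr)$ is itself a polynomial in $q$ of degree $i-1$ with nonzero constant term (equal to $i$, the value as $q\to 0$). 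Thus each entry $A_{ij}$ is a polynomial $P_i(q_j)$ of degree $i-1$ in the variable $q_j = 1/p_j$, where crucially $P_i$ depends only on the row index $i$.

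First I would record that $\det A = \det\bigl(P_i(q_j)\bigr)_{i,j\in[\ell]}$. Since the family $P_1,\dots,P_\ell$ consists of polynomials of distinct degrees $0,1,\dots,\ell-1$, they are linearly independent and span the space of polynomials of degree $< \ell$; hence this is (up to a nonzero constant factor coming from the change of basis from $\{P_i\}$ to the monomial basis $\{q^{i-1}\}$) a Vandermonde-type determinant. Concretely, writing $P_i(q) = \sum_{k=0}^{i-1} c_{ik} q^k$ with $c_{i,i-1}\neq 0$, the matrix $\bigl(P_i(q_j)\bigr)$ factors as $C \cdot V$ where $C = (c_{ik})$ is lower-triangular with nonzero diagonal and $V = (q_j^{\,k})_{k,j}$ is the Vandermonde matrix in $q_1,\dots,q_\ell$. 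Therefore $\det A = \bigl(\prod_{i} c_{i,i-1}\bigr)\prod_{j<j'}(q_{j'}-q_j)$, which is nonzero precisely when the $q_j = 1/p_j$ are pairwise distinct, i.e. when the $p_j$ are pairwise distinct.

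It then remains to choose pairwise distinct integers $p_j$ in the prescribed ranges $[j,\,j(j+1)]$. The intervals $[j, j(j+1)]$ for $j=1,\dots,\ell$ are not disjoint, but one can pick them greedily: the simplest explicit choice is $p_j = j^2$ — indeed $j \le j^2 \le j(j+1)$ holds for all $j\ge 1$, and $j\mapsto j^2$ is injective — or, if one prefers to stay inside the stated bound when $j=1$ forces $p_1=1$, take $p_j = \binom{j+1}{2}+ (j-1)$ or any strictly increasing integer sequence with $p_j \in [j, j(j+1)]$, which exists since $j(j+1) \ge (j{+}1) $ leaves ample room. With such a choice the $p_j$ are distinct, so $\det A \neq 0$ and $A$ is invertible. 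I do not expect any of this to be a genuine obstacle; the only mild subtlety is being careful that the change of variables $q = 1/p$ is legitimate (the $p_j$ are all positive integers, so $q_j \in (0,1]$ are well-defined and distinct), and that the triangular factor $C$ really has nonzero diagonal, which follows from $C_p(i)$ having exact degree $i-1$ in $q$ — equivalently, from the leading term of $1-(1-q)^i$ being $(-1)^{i+1}q^i$, so $c_{i,i-1} = (-1)^{i+1}\neq 0$.
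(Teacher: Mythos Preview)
Your argument is correct and in fact cleaner than the paper's. The paper proceeds by induction on $\ell$: having fixed $p_1,\dots,p_{\ell-1}$ so that the $(\ell{-}1)\times(\ell{-}1)$ principal submatrix is invertible, it writes the first $\ell-1$ entries of the new row $A_\ell$ as a (unique) combination $\sum_{i<\ell} z_i^\star A_i$ of the earlier rows and then argues that the discrepancy in the last column, $A_{\ell,\ell}-\sum_{i<\ell} z_i^\star A_{i,\ell}$, is (after clearing the common denominator $p_\ell^{\ell-1}$) a nonzero polynomial of bounded degree in $p_\ell$, so at most $\ell$ values of $p_\ell$ are bad and some integer in $(p_{\ell-1},\,p_{\ell-1}+\ell+1]$ works; summing gives $p_\ell\le \ell(\ell+1)$.

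Your substitution $q=1/p$ turns the $(i,j)$ entry into $P_i(q_j)$ with $\deg P_i=i-1$ exactly, whence the triangular-times-Vandermonde factorization $A=C\cdot V$ shows $\det A\neq 0$ whenever the $p_j$ are pairwise distinct. This yields a strictly stronger conclusion than the paper's --- \emph{every} distinct integer choice works, so in particular the simplest choice $p_j=j$ already lies in $[j,\,j(j+1)]$ --- and avoids the somewhat delicate step in the paper's induction where one must argue that the polynomial in $p_\ell$ is not identically zero. The paper's approach, on the other hand, is more self-contained (no Vandermonde identity invoked) and makes the recursive structure of the bound $p_\ell\le p_{\ell-1}+\ell+1$ explicit. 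Both routes feed equally well into the subsequent bound on $\|\bx^\star\|_\infty$, which only uses $p_j\le \ell(\ell+1)$.
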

\begin{proof}
The proof goes by induction on $\ell$ and shows that it is possible to pick $p_\ell$ such that the rows of $A(\{p_j\}_{j=1}^\ell)$ are linearly independent. The base case is trivial. In the inductive step, assume $p_1, \cdots, p_{\ell-1}$ have been picked so that the $(\ell-1) \times (\ell-1)$ matrix $A(\{p_j\}_{j=1}^{\ell-1})$ is invertible. We show that for some choice of integer $p_{\ell} \in [p_{\ell-1}, \ell(\ell+1)]$ there does not exist a vector $\bz$ such that $\sum_{i \leq \ell} z_i A_{i,j} = 0$ for all  $j \leq \ell$ where $A = A(\{p_j\}_{j=1}^{\ell})$. We write the first $\ell - 1$ entries of row $A_{\ell}$ as a linear combination of the other $\ell - 1$ rows:
$$   \sum_{i<\ell} z_i A_{i,j} = A_{\ell,j} \;\;\; \forall j < \ell.$$
Since $A(\{p_j\}_{j=1}^{\ell-1})$ is invertible by the inductive hypothesis, there exists a unique solution $\bz^{\star}$ to the above system of linear equations. It remains to show that $ \sum_{i<\ell} z_i^{\star} A_{i,{\ell}} \neq A_{\ell,{\ell}}$, which by the uniqueness of $\bz^{\star}$ implies that there does not exist a vector ${z}$ such that $\sum_{i \leq \ell} z_i A_{i,j} = 0$  for all  $j \leq \ell$. Observe that $A_{\ell,{\ell}} + \sum_{i<\ell} z_i^{\star} A_{i,{\ell}} = (p_{\ell}^{\ell} - (p_{\ell}-1)^{\ell} + \sum_{i<\ell} z_i^{\star}(p_{\ell}^i - (p_{\ell}-1)^i) p_{\ell}^{\ell - i}) / p_{\ell}^{\ell -1}$ and that 
$$p_{\ell}^{\ell} - (p_{\ell}-1)^{\ell} + \sum_{i<\ell} z_i^{\star}(p_{\ell}^i - (p_{\ell}-1)^i) p_{\ell}^{\ell - i}$$
 is a non-zero polynomial of degree $\ell$ that has at most $\ell$ roots. Therefore, there exists $p_{\ell}$ such that $p_{\ell-1} < p_{\ell}  \leq p_{\ell-1} +{\ell} + 1$ and $ \sum_{i<\ell} z_i^{\star} A_{i,{\ell}} \neq A_{\ell,{\ell}}$. So the rows of $A(\{p_j\}_{j=1}^{\ell})$ are linearly independent and the matrix is invertible. We get the bounds on $p_{\ell}$ by the induction hypothesis, $p_{\ell} \leq p_{\ell-1} +{\ell} + 1 \leq (\ell-1)\ell +\ell + 1 \leq \ell(\ell+1)$.
\end{proof}

 We need the following lemma to show the identical on small samples property.

\begin{lemma}
\label{lem:skewed-concentration}
Let $T$ be a uniformly random set of size $|T|$ and consider a set $S$ such that $|T|  \cdot |S| /n  \leq n^{- \epsilon}$ for some constant $\epsilon > 0$, then   $\prob{|S \cap T| \geq \ell} = n^{-\Omega\left(\ell\right)}$.
\end{lemma}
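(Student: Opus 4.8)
\textbf{Proof proposal for Lemma~\ref{lem:skewed-concentration}.} The plan is a straightforward first-moment / union-bound argument over $\ell$-element subsets of $S$. First I would observe that the event $|S \cap T| \ge \ell$ implies that there exists an $\ell$-element subset $R \subseteq S$ with $R \subseteq T$, so by a union bound over the $\binom{|S|}{\ell}$ such subsets,
$$\Pr\left[\,|S \cap T| \ge \ell\,\right] \;\le\; \binom{|S|}{\ell} \cdot \max_{R \subseteq S,\ |R| = \ell} \Pr\left[\,R \subseteq T\,\right].$$

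Next I would estimate the two factors. Since $T$ is a uniformly random set of size $|T|$, for any fixed $\ell$-element set $R$ we have
$$\Pr\left[\,R \subseteq T\,\right] \;=\; \frac{\binom{n-\ell}{|T|-\ell}}{\binom{n}{|T|}} \;=\; \prod_{i=0}^{\ell-1} \frac{|T|-i}{n-i} \;\le\; \left(\frac{|T|}{n}\right)^{\ell},$$
where the last inequality uses that each factor $\frac{|T|-i}{n-i}$ is at most $\frac{|T|}{n}$ because $|T| \le n$. Together with the crude bound $\binom{|S|}{\ell} \le |S|^{\ell}$, this gives
$$\Pr\left[\,|S \cap T| \ge \ell\,\right] \;\le\; |S|^{\ell} \left(\frac{|T|}{n}\right)^{\ell} \;=\; \left(\frac{|S|\,|T|}{n}\right)^{\ell} \;\le\; \left(n^{-\epsilon}\right)^{\ell} \;=\; n^{-\epsilon \ell},$$
where the final step invokes the hypothesis $|T| \cdot |S| / n \le n^{-\epsilon}$. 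Since $\epsilon > 0$ is constant, this is $n^{-\Omega(\ell)}$, as claimed.

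There is no real obstacle here: the lemma is a routine concentration estimate for sampling without replacement, and the only points requiring a line of care are the monotonicity observation $\frac{|T|-i}{n-i} \le \frac{|T|}{n}$ (valid since $|T| \le n$) and making sure the hypothesis is applied in the form $\big(|S|\,|T|/n\big)^{\ell}$ rather than factor-by-factor. If one prefers not to bound $\binom{|S|}{\ell}$ by $|S|^\ell$, using $\binom{|S|}{\ell} \le (e|S|/\ell)^\ell$ only improves the constant in the exponent and is unnecessary for the stated $n^{-\Omega(\ell)}$ conclusion.
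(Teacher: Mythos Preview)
Your proposal is correct and follows essentially the same approach as the paper: a union bound over the $\binom{|S|}{\ell}$ subsets of $S$ of size $\ell$, combined with the bound $\Pr[R\subseteq T]\le (|T|/n)^\ell$, giving $\left(|S|\,|T|/n\right)^\ell \le n^{-\epsilon\ell}$. If anything, your treatment of $\Pr[R\subseteq T]$ via the exact hypergeometric expression and the monotonicity of $\frac{|T|-i}{n-i}$ is slightly more careful than the paper's one-line bound $\Pr[L\subseteq T]\le \prod_{e\in L}\Pr[e\in T]$.
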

\begin{proof}
 We start by considering a subset $L$ of $S$ of size $ \ell$. We first bound the probability that $L$ is a subset of $T$,
\begin{align*}
\prob{L \subseteq T} \leq \prod_{e \in L} \prob{e \in T} \leq \prod_{e \in L} \frac{|T|}{n} = \left(\frac{|T|}{n}\right)^{\ell}.
\end{align*} We then bound the probability that $|S \cap T| > \ell$ with a union bound over the events that a set $L$ is a subset of $T$, for all subsets $L$ of $S$ of size $\ell$:
\begin{align*}
\prob{|S \cap T| >  \ell}  \leq \sum_{L \subseteq S \given |L| = \ell} \prob{L \subseteq S} 
  \leq \binom{|S|}{\ell} \cdot \left(\frac{|T|}{n}\right)^{  \ell} 
  \leq \left(\frac{|T| \cdot |S|}{n} \right)^{\ell} 
  &\leq n^{ -\epsilon \ell}
\end{align*}
where the last inequality follows from the assumption that $|T| \cdot |S|/n \leq n^{ - \epsilon}$.
\end{proof}

For coverage functions, we let $\ell = \log \log n$.
\begin{lemma}
\label{l:issp}
The identical on small samples property holds for $t = n^{3/5 + \epsilon/2}$.
\end{lemma}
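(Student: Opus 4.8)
The plan is to combine the defining linear system for the coefficients $\bx$ with the tail bound of Lemma~\ref{lem:skewed-concentration}. The key algebraic observation — which is really what the whole construction is engineered for — is that $g$ and $b'$ agree on all small sizes. Since $\bx$ solves $A\bx = \by$ with $A_{yj} = C_{p_j}(y)$ and $y_j = j$, we have $\sum_j x_j C_{p_j}(y) = y$ for every integer $y \in [\ell]$. Unpacking the definitions $g(y) = y + \sum_{j : x_j < 0}(-x_j)C_{p_j}(y)$ and $b'(y) = \sum_{j : x_j > 0}x_j C_{p_j}(y)$ gives $g(y) - b'(y) = y - \sum_j x_j C_{p_j}(y) = 0$ for all $y \in \{0,1,\dots,\ell\}$ (the case $y=0$ being trivial since $C_p(0)=0$).

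First I would note that this identity already yields the property deterministically whenever all the intersections $|S \cap T_j|$ are at most $\ell$. Indeed, if $|S \cap T_j| \le \ell$ for every $j \in [r]$, then using that $g$ and $b'$ are symmetric and then that $g(y)=b'(y)$ for $y\le\ell$,
\[
g(S \cap T_i) + b(S \cap T_{-i}) = g(|S \cap T_i|) + \sum_{j \ne i} b'(|S \cap T_j|) = \sum_{j=1}^{r} b'(|S \cap T_j|),
\]
which does not depend on $i$. So it remains only to show that, for a fixed $S$ with $|S| \le t = n^{3/5 + \epsilon/2}$, a uniformly random partition $P \sim \mathcal{U}(\PC)$ satisfies $|S \cap T_j| \le \ell$ for all $j$ with probability $1 - n^{-\omega(1)}$.

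This last step is a parameter check. Under $P \sim \mathcal{U}(\PC)$ the marginal distribution of each $T_j$ is uniform over $k$-subsets of $N$, with $k = n^{2/5 - \epsilon}$, so $|T_j|\cdot|S|/n \le kt/n = n^{2/5 - \epsilon + 3/5 + \epsilon/2 - 1} = n^{-\epsilon/2}$. Applying Lemma~\ref{lem:skewed-concentration} with its constant taken to be $\epsilon/2$ gives $\Pr[|S \cap T_j| \ge \ell] = n^{-\Omega(\ell)}$, and since $\ell = \log\log n$ this is $n^{-\Omega(\log\log n)} = n^{-\omega(1)}$. A union bound over the $r = n^{1/5-\epsilon} < n$ parts then gives $\Pr[\exists j : |S \cap T_j| \ge \ell] \le n\cdot n^{-\omega(1)} = n^{-\omega(1)}$, which together with the paragraph above establishes the identical-on-small-samples property for $t = n^{3/5+\epsilon/2}$.

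There is no genuine obstacle in this lemma: the substantive ingredients (invertibility of $A$, the bound on $\|\bx^\star\|_\infty$, and the tail bound of Lemma~\ref{lem:skewed-concentration}) are established elsewhere, and here the only things to get right are the elementary telescoping identity and the exponent arithmetic $kt/n = n^{-\epsilon/2}$ — which is precisely why the threshold is stated as $n^{3/5 + \epsilon/2}$ rather than $n^{3/5 + \epsilon}$, so as to leave room for the concentration bound.
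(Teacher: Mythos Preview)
Your proof is correct and follows exactly the paper's approach: apply Lemma~\ref{lem:skewed-concentration} with the parameter check $kt/n = n^{-\epsilon/2}$ to get $|S\cap T_j|\le\ell$ for all $j$ with probability $1-n^{-\omega(1)}$, and then use the linear system $A\bx=\by$ to conclude $g(|S\cap T_j|)=b'(|S\cap T_j|)$, making $g(S\cap T_i)+b(S\cap T_{-i})$ independent of $i$. Your write-up is in fact more careful than the paper's terse three-sentence version, explicitly spelling out the telescoping identity and the union bound over the $r$ parts.
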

\begin{proof}
Lemma~\ref{lem:skewed-concentration} implies that $|S \cap T_j| \leq \ell = \log \log n$ w.p. $1- \omega(1)$ over $P \sim \UC(\PC)$ for all $j$ for a set $S$ of size at most $n^{3/5 + \epsilon/2}$. Thus, $g(S \cap T_j) = b^T(S \cap T_j)$  for all $j$ w.p. $1- \omega(1)$ by the system of linear equations, which implies the identical on small samples property for $t = n^{3/5 + \epsilon/2}$.
\end{proof}

The gap and curvature properties require bounding the coefficients $\bx$ (Lemma~\ref{l:boundCoeff}). We recall two basic results from linear algebra (Theorems~\ref{t:cramer} and \ref{t:hadamard}) that are used to bound the coefficients.

\begin{theorem}[Cramer's rule]
\label{t:cramer}
Let $A$ be an invertible matrix. The solution to the linear system $Ax = y$ is given by $x_i = \frac{\det A_i}{\det{A}}$,
where $A_i$ is the matrix $A$ with the $i$-th column replaced by the vector $y$.
\end{theorem}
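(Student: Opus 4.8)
The plan is to deduce Cramer's rule from the multiplicativity of the determinant together with one well-chosen matrix identity. Since $A$ is invertible we have $\det A \neq 0$ and the system $Ax = y$ has a unique solution $x$, so it suffices to prove that this solution satisfies $x_i \cdot \det A = \det A_i$ for every coordinate $i$.

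First I would introduce, for each index $i$, the matrix $X_i$ obtained from the identity matrix (of the same size as $A$) by replacing its $i$-th column with the solution vector $x$. Multiplying out $A X_i$ column by column: for $j \neq i$ the $j$-th column of $X_i$ is the standard basis vector $e_j$, so the $j$-th column of $A X_i$ equals the $j$-th column of $A$; and the $i$-th column of $X_i$ is $x$, so the $i$-th column of $A X_i$ is $A x = y$. Hence $A X_i$ is precisely $A$ with its $i$-th column replaced by $y$, i.e.\ $A X_i = A_i$.

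Next I would take determinants of both sides and apply multiplicativity, $\det A \cdot \det X_i = \det A_i$, and then evaluate $\det X_i$. Since $X_i$ agrees with the identity matrix in all columns except the $i$-th, cofactor expansion along the $i$-th row (all of whose other entries sit in columns that are basis vectors) immediately gives $\det X_i = x_i$. Substituting yields $x_i \cdot \det A = \det A_i$, and dividing by $\det A \neq 0$ gives $x_i = \det A_i / \det A$, as claimed.

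There is no genuine obstacle here; the only steps requiring a moment's care are the column-by-column verification that $A X_i = A_i$ and the evaluation $\det X_i = x_i$, both routine. As an alternative I would note the adjugate-matrix proof: writing $A^{-1} = (\det A)^{-1}\operatorname{adj}(A)$, where the $(i,j)$ entry of $\operatorname{adj}(A)$ is the $(j,i)$ cofactor of $A$, one gets $x_i = (A^{-1}y)_i = (\det A)^{-1}\sum_j (\operatorname{adj}(A))_{ij}\, y_j$, and the sum $\sum_j (\operatorname{adj}(A))_{ij}\, y_j$ is exactly the cofactor expansion of $\det A_i$ along its $i$-th column (the relevant minors are unaffected since $A_i$ and $A$ differ only in that column), again giving $x_i = \det A_i/\det A$.
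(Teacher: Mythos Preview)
Your proof is correct and is in fact one of the standard textbook arguments for Cramer's rule. Note, however, that the paper does not prove this statement at all: it merely recalls Cramer's rule (together with Hadamard's inequality) as a classical fact from linear algebra, and then applies it in the proof of Lemma~\ref{l:boundCoeff}. So there is nothing to compare against; your write-up simply supplies a proof where the paper gives none.
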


\begin{theorem}[Hadamard's inequality]
\label{t:hadamard}
$\det A \leq \prod \Vert v_i\Vert$, where $\Vert v_i\Vert$ denotes the Euclidean norm of the $i$-th column of $A$.
\end{theorem}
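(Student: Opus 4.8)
The plan is to prove the slightly stronger statement $|\det A| \le \prod_i \|v_i\|$ — the form actually needed to bound the entries of $\bx^{\star}$ through Cramer's rule — by reducing $A$ to triangular form via Gram--Schmidt orthogonalization. First I would dispose of the degenerate case: if the columns $v_1, \ldots, v_n$ are linearly dependent then $\det A = 0$, while the right-hand side is a product of nonnegative reals, so the inequality is immediate. Hence assume $v_1, \ldots, v_n$ are linearly independent.

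Next I would run Gram--Schmidt on $v_1, \ldots, v_n$ to obtain an orthonormal system $u_1, \ldots, u_n$ such that each $v_i$ lies in $\mathrm{span}(u_1, \ldots, u_i)$; equivalently $v_i = \sum_{j \le i} R_{ji}\, u_j$ for an upper-triangular matrix $R = (R_{ji})$, which in matrix form is the factorization $A = QR$ with $Q = [\,u_1 \mid \cdots \mid u_n\,]$ orthogonal. Since $Q$ is orthogonal, $|\det Q| = 1$, and since $R$ is upper triangular, $\det R = \prod_i R_{ii}$; hence $|\det A| = \prod_i |R_{ii}|$. Because the $u_j$ are orthonormal, $\|v_i\|^2 = \sum_{j \le i} R_{ji}^2 \ge R_{ii}^2$, so $|R_{ii}| \le \|v_i\|$, and multiplying over $i$ yields $|\det A| = \prod_i |R_{ii}| \le \prod_i \|v_i\|$, as desired.

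The geometric content is simply that $|\det A|$ equals the volume of the parallelepiped spanned by the columns, which cannot exceed the product of the side lengths; the Gram--Schmidt step is precisely the way to make rigorous the statement that replacing each column by its component orthogonal to the span of the earlier ones only shrinks lengths. An alternative I could use, avoiding the QR factorization, goes through the Gram matrix $G = A^{\top} A$: one has $(\det A)^2 = \det G$, the matrix $G$ is positive semidefinite with $G_{ii} = \|v_i\|^2$, and — after discarding the trivial case where some $G_{ii} = 0$ — normalizing $G$ to unit diagonal and applying AM--GM to its nonnegative eigenvalues gives $\det G \le \prod_i G_{ii}$. The only real subtlety in either route, and the single step I would state carefully, is the Gram--Schmidt/QR construction together with the inequality $|R_{ii}| \le \|v_i\|$; everything else is bookkeeping. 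Since the statement is entirely classical, within the paper itself it would suffice to cite a standard linear-algebra text rather than reproduce this argument.
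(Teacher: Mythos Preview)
Your proof is correct, but the paper does not actually prove Hadamard's inequality: it merely states it (alongside Cramer's rule) as a classical linear-algebra fact and then invokes it in the proof of Lemma~\ref{l:boundCoeff}. You anticipated this yourself in your final sentence --- citing a standard reference is exactly what the paper does, so there is nothing to compare against.
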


\begin{lemma}
\label{l:boundCoeff}
Let ${\bx^{\star}}$ be the solution to the system of linear equations $\left(A(\{p_j\}_{j=1}^{\ell})\right)  \bx = \by$, then the entries of this solution are bounded: $|x_i^{\star}| \leq \ell^{O(\ell^4)}$.
\end{lemma}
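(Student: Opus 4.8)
The plan is to apply Cramer's rule (Theorem~\ref{t:cramer}): writing $A = A(\{p_j\}_{j=1}^{\ell})$, we have $x_i^{\star} = \det(A_i)/\det(A)$, where $A_i$ is $A$ with its $i$-th column replaced by $\by = (1,2,\ldots,\ell)$. Since $A$ is invertible by Lemma~\ref{lem:invertible}, $\det(A) \neq 0$ and this quotient is well defined, so it suffices to prove an upper bound $|\det(A_i)| \le \ell^{O(\ell)}$ and a lower bound $|\det(A)| \ge \ell^{-O(\ell^2)}$.

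For the numerator, note that $C_{p}(y) = p\bigl(1 - (1-1/p)^y\bigr) \le y$ for every $p \ge 1$ and $y$ (by Bernoulli's inequality $(1-1/p)^y \ge 1 - y/p$), so every entry $A_{ij} = C_{p_j}(i)$ satisfies $0 \le A_{ij} \le i \le \ell$, and every entry of $\by$ is at most $\ell$ as well. Hence each column of $A_i$ has Euclidean norm at most $\sqrt{\ell}\cdot\ell = \ell^{3/2}$, and Hadamard's inequality (Theorem~\ref{t:hadamard}) gives $|\det(A_i)| \le (\ell^{3/2})^{\ell} = \ell^{3\ell/2}$.

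The main obstacle is the lower bound on $|\det(A)|$, since invertibility by itself gives no quantitative estimate; here we exploit that $A$ has rational entries with controlled denominators. Writing $C_{p_j}(i) = \bigl(p_j^{\,i} - (p_j-1)^{\,i}\bigr)/p_j^{\,i-1}$, every entry in column $j$ of $A$ has denominator dividing $p_j^{\,\ell-1}$. Multiplying the $j$-th column by $p_j^{\,\ell-1}$ for each $j$ turns $A$ into an integer matrix $\tilde A$ with $\det(\tilde A) = \bigl(\prod_{j=1}^{\ell} p_j^{\,\ell-1}\bigr)\det(A)$. Since $\det(A) \neq 0$, $\det(\tilde A)$ is a nonzero integer, hence $|\det(\tilde A)| \ge 1$, and therefore
$$|\det(A)| \ge \Bigl(\prod_{j=1}^{\ell} p_j^{\,\ell-1}\Bigr)^{-1} \ge \bigl(\ell(\ell+1)\bigr)^{-\ell(\ell-1)} \ge \ell^{-O(\ell^2)},$$
using $p_j \le j(j+1) \le \ell(\ell+1)$ from Lemma~\ref{lem:invertible}.

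Combining the two bounds, $|x_i^{\star}| = |\det(A_i)|/|\det(A)| \le \ell^{3\ell/2}\cdot \ell^{O(\ell^2)} = \ell^{O(\ell^2)}$, which is comfortably within the claimed $\ell^{O(\ell^4)}$. The only delicate point is the denominator-clearing argument for the lower bound on $|\det(A)|$; everything else is a routine application of Hadamard's inequality together with the crude estimate $C_p(y)\le y$.
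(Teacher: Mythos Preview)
Your proof is correct and follows essentially the same approach as the paper: Cramer's rule, clearing the column denominators $p_j^{\ell-1}$ to reduce to a nonzero integer determinant (hence $\ge 1$ in absolute value), and Hadamard's inequality for the upper bound. Your bookkeeping is a bit cleaner---bounding $|\det A_i|$ directly via $C_p(y)\le y$ rather than clearing denominators first---and in fact yields the sharper estimate $|x_i^\star|\le \ell^{O(\ell^2)}$, well within the stated $\ell^{O(\ell^4)}$.
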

\begin{proof}
Denote ${ A} := A(\{p_j\}_{j=1}^{\ell})$.
By Lemma \ref{lem:invertible}, ${ A}$ is invertible, so let ${\bx^{\star}} = \left({ A}\right)^{-1} \by$.
By Cramer's rule (Theorem~\ref{t:cramer}), $x_i^{\star} = \frac{\det { A}_i}{\det{ A}}$,
where ${ A}_i$ is ${ A}$ with the $i$-th column replaced by the vector $\by$.
Using the bound from Lemma \ref{lem:invertible}, every entry in ${ A}$ can be represented as a rational number, 
with numerator and denominator bounded by $\ell^{O(\ell)}$.
We can multiply by all the denominators, and get an integer matrix with positive entries bounded by $\ell^{O(\ell^3)}$.
Now, by Hadamard's inequality (Theorem~\ref{t:hadamard}), the determinants of the integral ${ A}$ and all the ${ A}_i$'s are integers bounded by $\ell^{O(\ell^4)}$.
Therefore every entry in ${\bx^{\star}}$ can be written as a rational number with numerator and denominator bounded by $\ell^{O(\ell^4)}$.
\end{proof}

Using the bounds previously shown for $\bx^{\star}$, the two following lemmas establish the  gap $\alpha$ and curvature $\beta$ of the good and bad functions $g(\cdot)$ and $b(\cdot)$.

\begin{lemma}
\label{l:gap}
The gap between the good and the bad functions $g(\cdot)$ and $b(\cdot)$  is at least $\alpha = n^{1/5-\epsilon}$ for general coverage functions and at least $\alpha = 2^{\Omega(\sqrt{\log n})}$ for polynomial-size coverage functions.
\end{lemma}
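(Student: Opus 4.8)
The plan is to bound $g$ from below and $b$ from above on sets of size at most $k$ and take the ratio. For the lower bound I would simply use that every primitive is nonnegative, so that the linear term carries the whole estimate: since $g$ is symmetric, for any $S$ with $|S| = k$ we have $g(S) = g(k) = k + \sum_{j \,:\, x^\star_j < 0} (-x^\star_j)\, C_{p_j}(k) \geq k$, and the same bound holds in the polynomial-sized construction with the primitives $c_{p_j}$, using $c_{p_j}(k) = C_{p_j}(k) \geq 0$ from Lemma~\ref{l:zeta}. Here only a lower bound on $g$ is needed, so the correction terms can only help.

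For the upper bound the key inequality is $C_p(y) = p\big(1 - (1-1/p)^y\big) \leq p$ for every $y$; in the polynomial-sized case the analogue is $c_{p_j}(S) \leq c_{p_j}(k) = C_{p_j}(k) \leq p_j$ for every $S$ with $|S| \leq k$, since $c_{p_j}$ is monotone coverage on a ground set of size $k$. Hence $b'(y) \leq \sum_{j \,:\, x^\star_j > 0} x^\star_j\, p_j =: B$ for all $y \geq 0$, and therefore $b(S) = \sum_{j \neq i} b'(S \cap T_j) \leq (r-1) B$ for any $S$ with $|S| \leq k$ (using that $b$ is monotone, so it suffices to treat $|S| \leq k$). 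Since there are at most $\ell$ positive coefficients, $B \leq \ell \cdot \max_j |x^\star_j| \cdot \max_j p_j$; plugging in $|x^\star_j| \leq \ell^{O(\ell^4)}$ from Lemma~\ref{l:boundCoeff}, $p_j \leq \ell(\ell+1)$ from Lemma~\ref{lem:invertible}, and $\ell = \log\log n$, this gives $B \leq \ell^{O(\ell^4)} = 2^{O((\log\log n)^4 \log\log\log n)}$, which is both $n^{o(1)}$ and $2^{o(\sqrt{\log n})}$.

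Combining the two bounds, the gap is at least $g(S)/b(S) \geq k / \big((r-1) B\big) \geq (k/r)/B$. In the exponential-sized construction $k = n^{2/5 - \epsilon}$ and $r = n^{1/5 - \epsilon}$, so $k/r = n^{1/5}$ exactly and the gap is at least $n^{1/5}/n^{o(1)} = n^{1/5 - o(1)} \geq n^{1/5 - \epsilon}$ for $n$ large enough. In the polynomial-sized construction $k = 2^{\sqrt{\log n}}$ and $r = 2^{\sqrt{\log n}/2}$, so $k/r = 2^{\sqrt{\log n}/2}$ and the gap is at least $2^{\sqrt{\log n}/2} / 2^{o(\sqrt{\log n})} = 2^{\Omega(\sqrt{\log n})}$.

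I expect the only real content to be in the second paragraph: one must be sure the coefficient blowup coming from Cramer's rule and Hadamard's inequality (Lemma~\ref{l:boundCoeff}) is genuinely subpolynomial, indeed below $2^{\sqrt{\log n}}$, so that it is dominated by $k/r$. This is immediate once $\ell$ is fixed to $\log\log n$, since $(\log\log n)^{O((\log\log n)^4)} = 2^{o(\sqrt{\log n})}$. Everything else — the ceiling $C_p \leq p$, its inheritance by $c_{p_j}$ on a ground set of size $k$, monotonicity of $b$, and summing over the $r-1$ bad parts — is routine bookkeeping.
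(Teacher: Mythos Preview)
Your proof is correct and follows essentially the same approach as the paper: lower bound $g(k)\geq k$ by nonnegativity of the primitives, upper bound $b(S)\leq r\cdot\ell^{O(\ell^4)}$ via $C_{p_j}\leq p_j\leq \ell(\ell+1)$ (Lemma~\ref{lem:invertible}) and $|x_j^\star|\leq\ell^{O(\ell^4)}$ (Lemma~\ref{l:boundCoeff}), and then plug in the parameters $k,r,\ell$ for each regime. The paper's writeup is slightly terser but the ingredients and the arithmetic are identical.
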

\begin{proof}
We   show the gap between the good and the bad function on a set $S$ of size $k$. Recall that $b(S) \leq r \cdot b^T(k) = r\cdot \sum_{j \given x_j^{\star} > 0, j \leq \ell} x_j^{\star} C_{p_j}(k)$. We can bound each summand as:

\begin{align*}
x_j^{\star} C_{p_j}(k)  & \leq x_j^{\star} p_j && \text{($C_{ p}$ and $c_{p}$ upper bounded by $p$)} \\
& \leq x_j^{\star} \ell(\ell+1) && \text{(Lemma \ref{lem:invertible})} \\
& \leq \ell^{O(\ell^4)} && \text{(Lemma \ref{l:boundCoeff}),}
\end{align*}
and therefore $b^T(k) \leq \ell^{O(\ell^4)}$.
On the other hand, the good function is bounded from below by the cardinality: $g(k) \geq k$.
Plugging in $k = n^{2/5-\epsilon}$, $r = n^{1/5 - \epsilon}$ and $\ell =  \log\log n$, we get the following gap $\alpha$,
$$\frac{g(k)}{b(S)} \geq \frac{n^{2/5}} {n^{1/5}(\log \log n)^{\log ^4   \log n}} \gg n^{1/5-\epsilon}.$$
With $k= 2^{\sqrt{\log n}} , r= 2^{\sqrt{\log n}/2}$, and $\ell = \log \log n$, we get 
$$ \frac{g(k)}{b(S)} \geq \frac{2^{\sqrt{\log n}}} {2^{\sqrt{\log n}/2}(\log \log n)^{\log ^4 \log n}}  = 2^{(1-o(1))\sqrt{\log n}/2}.$$ \end{proof}

\begin{lemma}
\label{l:curv}
The curvature for both the general and polynomial-size  good function is $\beta = o(1)$.
\end{lemma}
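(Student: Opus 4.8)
The plan is to show that the curvature parameter $\beta$, defined by the requirement $g(k) \geq (1-\beta) \cdot r \cdot g(k/r)$, is $o(1)$; equivalently, that $g(k/r) \leq (1+o(1)) \cdot g(k)/r$. The key structural fact I would exploit is the one flagged in the overview preceding Lemma~\ref{l:abconstruction}: the linear term $y$ in $g(y) = y + \sum_{j \given x_j^\star < 0}(-x_j^\star) C_{p_j}(y)$ dominates the $C_{p_j}$ terms once $y$ is large. Concretely, each $C_{p_j}(y) = p_j(1-(1-1/p_j)^y) \leq p_j \leq \ell(\ell+1)$ by Lemma~\ref{lem:invertible}, and $|x_j^\star| \leq \ell^{O(\ell^4)}$ by Lemma~\ref{l:boundCoeff}, so $\sum_{j \given x_j^\star < 0}(-x_j^\star) C_{p_j}(y) \leq \ell \cdot \ell^{O(\ell^4)} = \ell^{O(\ell^4)} =: B$, a quantity independent of $y$. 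Since $\ell = \log\log n$, we have $B = (\log\log n)^{O(\log^4\log n)}$, which is $n^{o(1)}$ and in particular $\ll k/r = n^{1/5}$ (or $\ll k/r = 2^{\sqrt{\log n}/2}$ in the polynomial-sized case).

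With this bound in hand, the computation is short. On one side, $g(k) \geq k$ since $g(y) \geq y$ (the $C_{p_j}$ terms attached to negative coefficients contribute nonnegatively). On the other side, $g(k/r) \leq k/r + B$. Therefore
\[
\frac{g(k)}{r \cdot g(k/r)} \geq \frac{k}{r(k/r + B)} = \frac{k}{k + rB} = \frac{1}{1 + rB/k}.
\]
Since $rB/k \leq rB/k$ with $r \leq k$ trivially and $B = n^{o(1)}$ while $k = n^{2/5-\epsilon}$ (resp. $k = 2^{\sqrt{\log n}}$) and $r < k$, we get $rB/k \leq B \cdot r/k \leq B = n^{o(1)}/\,$; more carefully, $rB/k \leq B$ when $r\leq k$, and $B = o(1)\cdot$ is not quite right — rather $B$ is subpolynomial but not $o(1)$, so I should instead observe $rB/k$: with $r = n^{1/5-\epsilon}$, $k = n^{2/5-\epsilon}$, we have $r/k = n^{-1/5}$, hence $rB/k = n^{-1/5+o(1)} = o(1)$; in the polynomial-sized case $r/k = 2^{-\sqrt{\log n}/2}$ so $rB/k = 2^{-\sqrt{\log n}/2}(\log\log n)^{O(\log^4\log n)} = o(1)$ as well. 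Thus $g(k) \geq (1-o(1)) \cdot r \cdot g(k/r)$, i.e. $\beta = o(1)$. The same argument applies verbatim to the polynomial-sized primitives $c_p$ since Lemma~\ref{l:zeta} guarantees $c_p(y) = C_p(y) \leq p$ for $y \leq \ell$ and $c_p(k) = C_p(k) \leq p$, so the bound $c_p$ bounded by $p$ still holds and the matrix $A$ and coefficients $\bx^\star$ are unchanged.

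The main (and only mild) obstacle is bookkeeping the magnitude of $B = \ell^{O(\ell^4)}$ relative to $k$ and $r$: one must confirm that for the chosen parameters this error term is asymptotically negligible compared to $k/r$, which amounts to checking $(\log\log n)^{O(\log^4\log n)} = n^{o(1)}$ — true because $\log\left((\log\log n)^{\log^4\log n}\right) = (\log^4\log n)(\log\log\log n) = o(\log n)$. I would also note explicitly that $g(k/r)$ needs an upper bound that is a sum of just $\ell$ terms each bounded by $\ell^{O(\ell^4)}$ plus the linear term $k/r$, and that submodularity of $g$ is not even needed here — the estimate is purely from the closed form. I would present this as a two-line display followed by the parameter substitution, mirroring the structure of the proof of Lemma~\ref{l:gap}.

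\begin{proof}
Recall $g(y) = y + \sum_{j \given x_j^\star < 0}(-x_j^\star) C_{p_j}(y)$ (the argument is identical with $c_{p_j}$ in place of $C_{p_j}$ for the polynomial-sized case, using $c_{p_j}(y) = C_{p_j}(y) \leq p_j$ for $y \leq \ell$ and $c_{p_j}(k) \leq p_j$). Each primitive satisfies $C_{p_j}(y) \leq p_j \leq \ell(\ell+1)$ by Lemma~\ref{lem:invertible}, and $|x_j^\star| \leq \ell^{O(\ell^4)}$ by Lemma~\ref{l:boundCoeff}, so for every $y$,
\[
\sum_{j \given x_j^\star < 0}(-x_j^\star) C_{p_j}(y) \;\leq\; \ell \cdot \ell^{O(\ell^4)} \cdot \ell(\ell+1) \;=\; \ell^{O(\ell^4)} \;=:\; B.
\]
Hence $g(k) \geq k$ and $g(k/r) \leq k/r + B$, which gives
\[
\frac{g(k)}{r \cdot g(k/r)} \;\geq\; \frac{k}{r(k/r + B)} \;=\; \frac{1}{1 + rB/k}.
\]
With $\ell = \log\log n$ we have $B = (\log\log n)^{O(\log^4\log n)} = n^{o(1)}$, since $\log B = O(\log^4\log n \cdot \log\log\log n) = o(\log n)$. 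For general coverage functions, $k = n^{2/5-\epsilon}$ and $r = n^{1/5-\epsilon}$, so $rB/k = n^{-1/5} \cdot n^{o(1)} = o(1)$. For polynomial-sized coverage functions, $k = 2^{\sqrt{\log n}}$ and $r = 2^{\sqrt{\log n}/2}$, so $rB/k = 2^{-\sqrt{\log n}/2} \cdot n^{o(1)} = o(1)$. In both cases $g(k) \geq (1 - o(1)) \cdot r \cdot g(k/r)$, i.e. the curvature is $\beta = o(1)$.
\end{proof}
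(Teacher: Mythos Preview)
Your proof is correct and follows essentially the same approach as the paper: bound $g(k)\geq k$ from below by dropping the nonnegative $C_{p_j}$ terms, bound $g(k/r)\leq k/r + \ell^{O(\ell^4)}$ from above using $C_{p_j}\leq p_j\leq \ell(\ell+1)$ (Lemma~\ref{lem:invertible}) and $|x_j^\star|\leq \ell^{O(\ell^4)}$ (Lemma~\ref{l:boundCoeff}), and then verify $rB/k=o(1)$ for the given parameters. The paper's proof is terser---it unifies the two parameter regimes via the single observation $k/r\geq 2^{\sqrt{\log n}/2}$---but the substance is identical.

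One small remark on the polynomial-sized case: your parenthetical cites $c_{p_j}(y)=C_{p_j}(y)$ only for $y\leq\ell$ and $c_{p_j}(k)\leq p_j$, but the bound you actually need is $c_{p_j}(k/r)\leq p_j$ where $\ell<k/r<k$. This follows immediately from monotonicity of the coverage function $c_{p_j}$ together with $c_{p_j}(k)=C_{p_j}(k)\leq p_j$; the paper simply asserts ``$C_p$ and $c_p$ upper bounded by $p$'' without further comment, so you are already being at least as careful.
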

 \begin{proof}
Note that $k/r \geq 2^{\sqrt{\log n}/2}$ for both the general and polynomial size cases. Thus, the curvature $\beta$  is 
\begin{align*}
1- \frac{  g(k) }{ r \cdot g(k/r)} & \leq 1 - \frac{ k}{ r \cdot k/r + r \cdot (\log \log n)^{\log ^4 \log n}} \\
& \leq 1 - (1 +  2^{-\sqrt{\log n}/2} \cdot (\log \log n)^{\log ^4 \log n} )^{-1} \\
& = o(1) 
\end{align*}
where the first inequality follows a similar reasoning as the one used to upper bound $b(S)$ in Lemma~\ref{l:gap}. 
\end{proof}

Finally, combining Lemmas~\ref{l:issp}, \ref{l:gap}, and \ref{l:curv}, we get Lemma~\ref{l:abconstruction}.

\subsubsection*{Constructing the Masking Function}

To obtain the desired properties of the masking functions $m^+$ and masking fraction $m^+$, each child $a_i$ in the universe of $g+b$ is divided into $n^{3/5}$ children $a_{i,1}, \ldots a_{i,n^{3/5}}$ of equal weights $w(a_i) / n^{3/5}$. For each masking element, draw $j \in \UC([n^{3/5}])$, then this masking element covers $a_{i,j}$ for all $i$. The function $m^+(S)$ is then the total weight covered by masking elements $S$ and  the masking fraction $m(S)$ is the fraction of $j \in [n^{3/5}]$ such that $j$ is drawn for at least one element in $S$. Lemmas \ref{l:masking} and \ref{l:gapmasking} show the masking property on large samples and the $\alpha$-gap for masking elements. We begin by stating the Chernoff bound, used in Lemma~ \ref{l:masking}.

\begin{lemma}[Chernoff Bound]
\label{l:chernoff}
Let $X_1, \dots, X_n$ be independent indicator random variables such that $\prob{X_i = 1} = 1$. Let $X = \sum_{i=1}^n X_i$ and $\mu = \expect{X}$. For $0 < \delta < 1$,
$$\prob{|X - \mu| \geq \delta \mu} \leq 2 e^{- \mu \delta^2 / 3}.$$
\end{lemma}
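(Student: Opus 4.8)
\textbf{Proof proposal for the Chernoff bound (Lemma~\ref{l:chernoff}).}

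The plan is to prove the two-sided multiplicative bound by controlling the upper and lower tails separately with the exponential-moment method and then taking a union bound, which is precisely what produces the factor of $2$. Throughout, write $p_i := \prob{X_i = 1}$, so that $\mu = \sum_{i=1}^{n} p_i$, and fix $\delta \in (0,1)$. The only structural facts I will use are independence of the $X_i$ and the scalar inequality $1 + x \le e^x$.

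\emph{Upper tail.} For any $t > 0$, Markov's inequality applied to the nonnegative variable $e^{tX}$ gives $\prob{X \ge (1+\delta)\mu} = \prob{e^{tX} \ge e^{t(1+\delta)\mu}} \le e^{-t(1+\delta)\mu}\,\expect{e^{tX}}$. By independence, $\expect{e^{tX}} = \prod_{i=1}^{n}\expect{e^{tX_i}} = \prod_{i=1}^{n}\big(1 + p_i(e^t - 1)\big) \le \prod_{i=1}^{n} e^{p_i(e^t-1)} = e^{(e^t-1)\mu}$. Choosing $t = \ln(1+\delta) > 0$ collapses the bound to $\prob{X \ge (1+\delta)\mu} \le \big(e^{\delta}/(1+\delta)^{1+\delta}\big)^{\mu}$. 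It then remains to establish the analytic inequality $(1+\delta)\ln(1+\delta) - \delta \ge \delta^2/3$ for $\delta \in (0,1)$, which upgrades the bound to $e^{-\mu\delta^2/3}$.

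\emph{Lower tail and conclusion.} Symmetrically, for $t > 0$ apply Markov to $e^{-tX}$: $\prob{X \le (1-\delta)\mu} \le e^{t(1-\delta)\mu}\expect{e^{-tX}} \le e^{t(1-\delta)\mu}\,e^{(e^{-t}-1)\mu}$, and $t = -\ln(1-\delta) > 0$ yields $\prob{X \le (1-\delta)\mu} \le \big(e^{-\delta}/(1-\delta)^{1-\delta}\big)^{\mu}$. Here the needed scalar fact $-\delta - (1-\delta)\ln(1-\delta) \le -\delta^2/2$ is immediate from the power series $-\delta - (1-\delta)\ln(1-\delta) = -\sum_{k\ge2}\frac{\delta^k}{k(k-1)}$ (every term negative, leading term $-\delta^2/2$), so the lower tail is at most $e^{-\mu\delta^2/2} \le e^{-\mu\delta^2/3}$. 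Finally, a union bound over the events $\{X \ge (1+\delta)\mu\}$ and $\{X \le (1-\delta)\mu\}$ gives $\prob{|X-\mu| \ge \delta\mu} \le 2e^{-\mu\delta^2/3}$, as claimed.

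I expect the only genuine work to be the upper-tail scalar inequality $(1+\delta)\ln(1+\delta) \ge \delta + \delta^2/3$ on $(0,1)$; the rest is bookkeeping. I would prove it from the Taylor expansion $(1+\delta)\ln(1+\delta) - \delta = \sum_{k\ge2}\frac{(-1)^k}{k(k-1)}\delta^k = \frac{\delta^2}{2} - \frac{\delta^3}{6} + \frac{\delta^4}{12} - \cdots$: subtracting $\delta^2/3$ leaves $\frac{\delta^2}{6}(1-\delta) + \sum_{k\ge4}\frac{(-1)^k}{k(k-1)}\delta^k$, where the first summand is nonnegative for $\delta \in (0,1)$ and the remaining tail is an alternating series whose terms $\delta^k/(k(k-1))$ are nonnegative and decreasing in $k$ for $\delta \le 1$ and whose first term is positive, hence nonnegative. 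A differentiation argument on $\phi(\delta) := (1+\delta)\ln(1+\delta) - \delta - \delta^2/3$, using $\phi(0) = 0$ and tracking the sign of $\phi'$, is an equivalent fallback. This completes the argument.
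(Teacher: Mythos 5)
Your proof is correct, but there is nothing in the paper to compare it against: the paper states Lemma~\ref{l:chernoff} as a standard, off-the-shelf concentration bound (used later for the masking property and for sample-size estimates) and gives no proof of it. Your argument is the classical exponential-moment (Chernoff/Bernstein) derivation: Markov's inequality applied to $e^{tX}$ and $e^{-tX}$, the bound $\mathbf{E}[e^{tX_i}] = 1+p_i(e^t-1) \le e^{p_i(e^t-1)}$ via independence and $1+x\le e^x$, the optimal choices $t=\ln(1+\delta)$ and $t=-\ln(1-\delta)$, the scalar inequalities $(1+\delta)\ln(1+\delta)-\delta \ge \delta^2/3$ and $-\delta-(1-\delta)\ln(1-\delta) \le -\delta^2/2$ on $(0,1)$, and a union bound producing the factor $2$; all of these steps, including your series verifications of the two scalar inequalities, check out. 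Two small remarks: the lemma as printed has a typo ($\Pr(X_i=1)=1$ should read $\Pr(X_i=1)=p_i$, under which the statement would be vacuous), and your proof silently and correctly handles the intended heterogeneous Bernoulli case; also, your bound actually only needs the upper-tail constant $1/3$, since the lower tail gives the stronger $e^{-\mu\delta^2/2}$. So the proposal supplies a valid self-contained proof of a fact the paper simply imports.
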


\begin{lemma}
\label{l:masking}
Consider the construction above for masking elements, then the masking on large samples property holds for $t = n^{3/5 + \epsilon}$.
\end{lemma}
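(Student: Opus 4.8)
The goal is to show that the masking-on-large-samples property holds for $t = n^{3/5+\epsilon}$: for a fixed set $S$ with $|S| \geq t$, with probability $1 - n^{-\omega(1)}$ over the partition $P \sim \mathcal{U}(\mathcal{P})$, we have $m(S \cap M) = 1$. Recall that $m(S \cap M)$ is the fraction of indices $j \in [n^{3/5}]$ that are ``hit'' by at least one masking element in $S \cap M$, where each masking element independently draws its index $j \sim \mathcal{U}([n^{3/5}])$. So $m(S \cap M) = 1$ precisely when every index $j \in [n^{3/5}]$ is drawn by at least one element of $S \cap M$; this is a coupon-collector event with $|S \cap M|$ trials over $n^{3/5}$ coupons.

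**The plan.** First I would control the size of $S \cap M$. Since $M$ consists of $n - rk$ elements out of $n$, and $rk = n^{1/5-\epsilon} \cdot n^{2/5-\epsilon} = n^{3/5 - 2\epsilon} = o(n)$, the masking part $M$ contains a $(1 - o(1))$-fraction of the ground set. For a fixed $S$ with $|S| \geq t = n^{3/5+\epsilon}$, the intersection $|S \cap M|$ is a hypergeometric random variable with mean $|S| \cdot |M|/n \geq (1-o(1)) t$; I would apply a Chernoff/Hoeffding bound for hypergeometric distributions (or the Chernoff bound stated in Lemma~\ref{l:chernoff}, after noting the hypergeometric concentrates at least as well as the binomial) to conclude that $|S \cap M| \geq t/2 = n^{3/5+\epsilon}/2$ with probability $1 - n^{-\omega(1)}$ over $P \sim \mathcal{U}(\mathcal{P})$.

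**The coupon-collector step.** Conditioned on $|S \cap M| \geq t/2$, each of these at least $t/2$ masking elements independently picks $j \sim \mathcal{U}([n^{3/5}])$. For a fixed index $j_0 \in [n^{3/5}]$, the probability that no element of $S \cap M$ picks $j_0$ is at most $(1 - n^{-3/5})^{t/2} \leq \exp(-t/(2n^{3/5})) = \exp(-n^{\epsilon}/2)$. Taking a union bound over all $n^{3/5}$ indices, the probability that some index is missed is at most $n^{3/5} \exp(-n^{\epsilon}/2) = n^{-\omega(1)}$. Hence with probability $1 - n^{-\omega(1)}$ every index is hit, i.e.\ $m(S \cap M) = 1$. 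Combining with the previous paragraph via a union bound over the two failure events gives the claim.

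**Main obstacle.** The only genuinely delicate point is the concentration of $|S \cap M|$ over the random partition: the partition $P$ determines $M$ jointly with the parts $T_1,\dots,T_r$, so $|S \cap M|$ is hypergeometric rather than binomial. I would handle this by invoking the standard fact that hypergeometric random variables are at least as concentrated as the corresponding binomial (so the Chernoff bound applies verbatim), or equivalently by a negative-association argument; this is routine but must be stated carefully since everything else in the proof is conditioned on it. The coupon-collector calculation is then entirely elementary, and the key reason it goes through is the slack $t/n^{3/5} = n^{\epsilon}$, which makes the per-index miss probability $\exp(-n^{\epsilon}/2)$ beat the $n^{3/5}$ union bound by a super-polynomial margin.
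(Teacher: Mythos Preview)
Your proposal is correct and follows essentially the same two-step structure as the paper's proof: first use a Chernoff-type bound to show $|S\cap M|$ is large (the paper states the threshold $n^{3/5+\epsilon/2}$ rather than your $t/2$, but either works), then a coupon-collector/union-bound argument over the $n^{3/5}$ indices to conclude every index is hit. Your discussion of the hypergeometric-versus-binomial issue is actually more careful than the paper, which simply invokes the Chernoff bound without comment.
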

\begin{proof}
First, we show that a large set $S$ contains a large number of masking elements with exponentially high probability.\footnote{Formally, with exponentially high probability means with probability at least $1 - e^{-\Omega(n^{\epsilon})}$ for some constant $\epsilon > 0$.} We then show that a large number of masking elements covers all the children with exponentially high probability, thus $m(S \cap M) = 1$.

Masking elements are a $1 - o(1)$ fraction of $N$ since there are $n - rk= n - n^{1/5 - \epsilon}  n^{2/5 - \epsilon}$ masking elements. By Chernoff bound (Lemma~\ref{l:chernoff}), a set of size at  least $n^{3/5 + \epsilon} $ contains at least $n^{3/5 + \epsilon/2}  $ masking elements with exponentially high probability. By another Chernoff bound, with $n^{3/5 + \epsilon/2}  $ masking elements, at least one of these elements cover a fixed child $a_{i,j}$ with exponentially high probability. By a union bound, this holds for all $j \in [n^{3/5}]$. Finally, note that if a set of masking elements cover $a_{i,1}, \ldots, a_{i,n^{3/5}}$ for some $i$, this set covers $a_{i,1}, \ldots, a_{i,n^{3/5}}$ for all $i$. Thus w.p. at least $1 - n^{-\omega(1)}$, $m(S \cap M) = 1$.
\end{proof}

\begin{lemma}
\label{l:gapmasking}
The masking function $m$ has a gap $\alpha = n^{1/5 + \epsilon}$ with the good function $g$.
\end{lemma}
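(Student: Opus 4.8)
The plan is to prove a \emph{deterministic} upper bound on $m^+(S)$ for every $S$ with $|S|\le k$ and then compare it to $g(k)$, the value of the good function on a full part $T_i$, using the $g$-versus-$b$ gap already established in Lemma~\ref{l:gap}. Unlike the masking-on-large-samples property (Lemma~\ref{l:masking}), no concentration argument is needed here: by construction each child $a_i$ of the masking-free coverage function $g+b$ is split into the $n^{3/5}$ equal-weight children $a_{i,1},\dots,a_{i,n^{3/5}}$, and a masking element with drawn index $j$ covers exactly the strip $\{a_{i,j}\}_i$, of total weight $\frac{1}{n^{3/5}}\sum_i w(a_i)$. Hence a set $S$ of at most $k$ masking elements meets at most $k$ distinct strips, so regardless of the random draws
\[
m^+(S)\ \le\ \frac{k}{n^{3/5}}\,\sum_i w(a_i)\ =\ \frac{k}{n^{3/5}}\Bigl(g(k)+(r-1)\,b'(k)+o(1)\Bigr),
\]
where I use that the children of $g+b$ are the $g$-children over $T_i$ together with the $b'$-children over each of the $r-1$ parts $T_j$, $j\neq i$ (these families of children are pairwise disjoint), and that the total weight of the children of a coverage function equals its value on the whole ground set up to the negligible weight of the "empty" children $a_\emptyset$ of each primitive $C_p$.

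The second step is to absorb the $b$-contribution to this total weight. This is precisely the content of Lemma~\ref{l:gap}: it gives $g(k)\ge n^{1/5-\epsilon}\cdot b(S')$ for every size-$k$ set $S'$, and in particular $r\cdot b'(k)\le g(k)/n^{1/5-\epsilon}=o\bigl(g(k)\bigr)$, so $\sum_i w(a_i)\le (1+o(1))\,g(k)\le 2g(k)$ for all large $n$. Combining the two steps,
\[
\frac{g(k)}{m^+(S)}\ \ge\ \frac{g(k)}{2k\,g(k)/n^{3/5}}\ =\ \frac{n^{3/5}}{2k},
\]
so plugging in $k=n^{2/5-\epsilon}$ gives $g(k)\ge \tfrac12\,n^{1/5+\epsilon}\,m^+(S)$ for every $|S|\le k$, which is the asserted $\alpha=n^{1/5+\epsilon}$ gap (and comfortably implies the weaker $n^{1/5}$ bound invoked in the text); the identical computation with $k=2^{\sqrt{\log n}}$ yields the polynomial-size version, where in fact the $c_p$-children all have at least one parent so the "empty child" correction disappears entirely.

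I do not anticipate a genuine obstacle: the design of the masking elements is exactly what forces $k$ of them to deterministically reach only a $k/n^{3/5}$ fraction of the total child weight, so the only real content is (i) the bookkeeping identity "total child weight $=$ value of $g+b$ on the whole ground set, up to negligible terms" and (ii) invoking Lemma~\ref{l:gap} to kill the $(r-1)b'(k)$ term. The one place to be slightly careful is to apply the bookkeeping identity to the masking-free function $g(S\cap T_i)+b(S\cap T_{-i})$ — whose universe is precisely the collection of children being subdivided — so that there is no circular dependence on the masking part of the construction.
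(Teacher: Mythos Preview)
Your approach is essentially the paper's: deterministically bound $m^+(S)\le (k/n^{3/5})\cdot(\text{total child weight of }g+b)$, then show that total weight is $(1+o(1))\,g(k)$ by killing the bad contribution. The only difference is that the paper re-derives $b'(k)\le \ell^{O(\ell^4)}$ directly from Lemmas~\ref{l:boundCoeff} and~\ref{lem:invertible} (giving $r\cdot b'(k)=o(k)=o(g(k))$), whereas you invoke Lemma~\ref{l:gap}. One small slip in that citation: the \emph{statement} of Lemma~\ref{l:gap} bounds $b(S')$ only for $|S'|=k$, and $r\cdot b'(k)$ is not $b(S')$ for any such $S'$; what you actually get (applying it with $S'=T_j$) is $b'(k)\le g(k)/n^{1/5-\epsilon}$, hence $r\cdot b'(k)\le g(k)$ since $r=n^{1/5-\epsilon}$ --- not $o(g(k))$ as you wrote, but still enough for $\sum_i w(a_i)\le 2g(k)$, so the rest of your argument goes through unchanged.
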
 
\begin{proof}
We first bound the value of all good and bad elements, and then bound the fraction of that total value obtained by $k$ masking elements. The value of all bad elements is
\begin{align*}
  \sum_{j=1,j \neq j}^m b^T(T_j)  &=  (r-1) \sum_{1\leq j \leq \ell \given x_j^{\star} \geq 0} x_j^{\star}C_{p_j}(k) && \text{(Definition of $b^T$)} \\
& \leq    r   \sum_{1\leq j \leq \ell} \ell^{O(\ell^4)} C_{p_j}(k)  && \text{(Lemma \ref{l:boundCoeff})} \\
& \leq   r  \sum_{1\leq j \leq \ell} \ell^{O(\ell^4)} p_j  &&\text{($C_{p}, c_{p} \leq p$)} \\
& \leq  r \sum_{1\leq j \leq \ell} \ell^{O(\ell^4)}   &&\text{(Lemma \ref{lem:invertible})} \\
& \leq   o(k)   && (\ell = \log \log n, r  = n^{1/5 - \epsilon}, k=n^{2/5 - \epsilon}) \\
& \leq    o(g(k))  && 
\end{align*}
Now note that a masking element covers a $1/ n^{3/5}$ fraction of the value of all good and bad elements by the above construction. Thus, $k = n^{2/5 - \epsilon}$ masking elements cover at most a $1/ n^{1/5 + \epsilon}$ fraction of the total value of all good and bad elements, combining this with the total value of bad elements that is upper bounded by $o(g(k))$ concludes the proof. 
\end{proof}

Combining Lemmas~\ref{l:abconstruction}, \ref{l:masking}, and \ref{l:gapmasking}, we obtain an $(n^{1/5 - \epsilon}, o(1))$-gap. The main result for exponential size coverage functions then follows from Theorem~\ref{t:gap}.

\cexpo*

\subsubsection*{From Exponential to Polynomial Size Coverage Functions}

We modify $C_{p}$ to use primitives $c_{p}$ which are coverage with \emph{polynomially} many children.  The function class $\mathcal{F}(g,b,m,m^+)$ obtained are then coverage functions over a polynomial-size universe. The matrix $A$ for polynomial size coverage functions is identical as in the general case.  We lower the cardinality constraint to $k = 2^{\sqrt{\log n}} = |T_j|$ so that the functions $c_{p}(S \cap T_j)$ need to be defined over only $2^{\sqrt{\log n}}$ elements. We also lower the number of parts to $r = 2^{\sqrt{\log n}/2} $.

The main technical challenge is to obtain symmetric coverage functions for sets of size at most $\ell$ with polynomially many children. We start by reducing the problem to constructing such functions with certain properties in Lemma~\ref{l:zeta}. We then construct such functions and prove they satisfy these properties in Lemma~\ref{l:symm}. Combining these Lemmas, we obtain a $(2^{\Omega(\sqrt{\log n})}, o(1))$-gap (Lemma~\ref{l:poly}).

\lzeta*
\begin{proof}
The proof starts from the construction for $C_{p}$ with exponentially many children over a ground set of size $p$ and modifies it into a coverage function with polynomially many children while satisfying the desired conditions.   For each  $z \leq k$, replace all children in $C_{p}$ that are covered by exactly $z$ elements with  $\zeta^z(\cdot)$. Define $C_{p}^z$ to be $C_{p}$ but only with these children that are covered by exactly $z$ elements.  Let the new children from $\zeta^z(\cdot)$ be such that $\zeta^z(k) = C_{p}^z(k)$.

Clearly $c_{p}$ has polynomially many children in $n$ since each $\zeta^z(\cdot)$ has polynomially many children. Then, note that  
$$c_{p}(k) = \sum_{z=1}^{k} \zeta^z(k) = \sum_{z=1}^{k} C_{p}^z(k) =  C_{p}(k).$$
 Finally, we show that $c_{p}(S) = C_{p}(y)$ for all $S$ such that $|S| = y \leq l$. Note that it suffices to show that $\zeta^z(S) =  C_{p}^z(y)$ for all $z$, which we prove by induction on $y$. The base case $y=0$ is trivial. If $y > 0$, then consider some set $S$ such that $|S| = y$ and let $e \in S$. By the inductive hypothesis, $\zeta^z(S \setminus e) =  C_{p}^z(y-1)$.   Let $T$ be the set of all children in $\zeta^z(\cdot)$ not covered by $S \setminus e$. Define $\zeta^z_T(\cdot)$ to be $\zeta^z(\cdot)$ but only with children in $T$. Since all children in $T$ are covered by $ z$ distinct elements that are not in $S \setminus e$,
$$\sum_{e' \not \in S \setminus e} \zeta^z_T(e') = z \cdot (\zeta^z(k) - \zeta^z(y-1)).$$
 By the assumptions on  $\zeta^z$, $\zeta^z(S)= \zeta^z(S \setminus e \cup e')$. For any $e' \not \in S \setminus e$, by combining with both $\zeta^z(S) = \zeta^z(S \setminus e) + \zeta^z_T(e)$ and  $\zeta^z(S \setminus e \cup e') = \zeta^z(S \setminus e) + \zeta^z_T(e')$, 
 $$\zeta^z_T(e)=  \zeta^z_T(e') = z \cdot (\zeta^z(k) - \zeta^z(y-1)) / (k - y + 1).$$
  So, 
\begin{align*}
\zeta^z(S) & = \zeta^z(S \setminus e) + \zeta^z_T(e) \\
& = \zeta^z(y-1) + z \cdot (\zeta^z(k) - \zeta^z(y-1)) / (k - y + 1) \\
& = C_{p}^z(y-1) + z \cdot (C_{p}^z(k) - C_{p}^z(y-1)) / (k - y + 1) \\
& = C^z_{p}(y)
\end{align*}
where the last equality is obtained for $C^z_{p}$ similarly as how it was obtained for $\zeta^z$.
\end{proof}

We now construct such  $\zeta^z$. Assume without loss that $k$ is prime (o.w. pick some prime close to $k$).  Given $\ba \in [k]^{\ell}$, and $x \in [z]$,  let $h_{\ba}(x) := \sum_{i \in [\ell]} a_i x^i \mod k$. The children in $\zeta^z$ are $U = \{ \ba \in [k]^{\ell} \given h_{\ba}(x_1) \neq h_{\ba}(x_2) \text{ for all distinct } x_1, x_2 \in [z]\}$. The $k$ elements are $\{j \given 0 \leq j < k\}$. Child $\ba$ is covered by elements $\{h_{\ba}(x) \given x \in [z]\}$. Note that $|U| \leq k^{\ell} = 2^{\ell \sqrt{\log n}}$ and we pick $\ell = \log \log n$ as previously. The following lemma is useful to show the symmetricity of $\zeta^z$.

\begin{lemma}
\label{l:indep} Let $\ba$ be a uniformly random child, then $\prob{h_{\ba}(x_1) = j_1, \ldots, h_{\ba}(x_{\ell}) = j_{\ell}}$ is independent of distinct $x_1, \ldots, x_{\ell} \in [z]$ and $j_1, \ldots, j_{\ell} \in [p]$. More precisely, $$\prob{h_{\ba}(x_1) = j_1, \ldots, h_{\ba}(x_{\ell}) = j_{\ell}} = \prod_{i=1}^{\ell} \frac{1}{p + 1 -i}.$$
\end{lemma}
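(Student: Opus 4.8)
\textbf{Proof plan for Lemma~\ref{l:indep}.}

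The statement to prove is that for a uniformly random $\ba \in [k]^\ell$ (restricted to the universe $U$ of ``good'' children, i.e. those for which $h_{\ba}$ is injective on $[z]$), and for distinct evaluation points $x_1,\dots,x_\ell \in [z]$ and arbitrary target values $j_1,\dots,j_\ell$, the joint probability $\prob{h_{\ba}(x_1)=j_1,\dots,h_{\ba}(x_\ell)=j_\ell}$ depends only on $\ell$ (not on the specific $x$'s or $j$'s), and in fact equals $\prod_{i=1}^\ell \frac{1}{p+1-i}$ --- here $p = k$ in the polynomial-size construction, so this is $\prod_{i=1}^\ell \frac{1}{k+1-i}$, the probability of picking an ordered $\ell$-tuple of distinct residues.

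The plan is to exploit the standard fact that the map $\ba \mapsto (h_{\ba}(x_1),\dots,h_{\ba}(x_\ell))$ is a \emph{bijection} from $[k]^\ell$ (the full space of coefficient vectors, before restricting to $U$) onto $[k]^\ell$ (the space of value vectors at $\ell$ fixed distinct points), because $h_{\ba}(x) = \sum_{i\in[\ell]} a_i x^i \bmod k$ is, for fixed distinct $x_1,\dots,x_\ell$, a linear system whose matrix is (a scaled) Vandermonde matrix over the field $\mathbb{Z}_k$ ($k$ prime), hence invertible. Concretely, I would first state: over $\mathbb{Z}_k$ with $k$ prime, for any distinct $x_1,\dots,x_\ell \in \mathbb{Z}_k \setminus\{0\}$ (note $[z] \subseteq \{1,\dots,z\}$ with $z \le k$, and since $k$ is prime these are nonzero distinct residues), the Vandermonde-type matrix $V = (x_s^i)_{s,i\in[\ell]}$ is invertible, so for every target vector $(j_1,\dots,j_\ell) \in \mathbb{Z}_k^\ell$ there is exactly one $\ba \in \mathbb{Z}_k^\ell$ with $h_{\ba}(x_s) = j_s$ for all $s$. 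Therefore, over the \emph{unrestricted} space, $\prob{h_{\ba}(x_1)=j_1,\dots,h_{\ba}(x_\ell)=j_\ell} = 1/k^\ell$ uniformly --- in particular, the parents of a random child are exactly $\ell$-wise independent in the unrestricted model. The second step is to account for the conditioning on $\ba \in U$: observe that the event $\{(j_1,\dots,j_\ell)$ are all distinct$\}$ is (for $\ell$ distinct evaluation points) contained in --- actually it needs care: $U$ requires injectivity on \emph{all} of $[z]$, not just on $\{x_1,\dots,x_\ell\}$. I would handle this by computing the conditional probability: for the target vector with all $j_s$ distinct, I need $\prob{\ba \in U \mid h_{\ba}(x_s)=j_s\ \forall s}$. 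Here $\ell = z$ exactly is the clean case (then $U$-membership is equivalent to the $j$'s being distinct, giving the factor $\prod 1/(k+1-i)$ directly after renormalizing by $\prob{\ba\in U}$); for $z > \ell$ one argues that conditioning on $\ell$ of the values being fixed-and-distinct, the remaining $z-\ell$ values $h_{\ba}(x)$ for $x \in [z]\setminus\{x_1,\dots,x_\ell\}$ are a uniformly random point in an affine subspace, and the probability they avoid collisions is independent of \emph{which} distinct values $j_1,\dots,j_\ell$ were fixed (by a symmetry/relabeling argument: any permutation of $\mathbb{Z}_k$ composed appropriately preserves the structure). This gives that the conditional probability of $U$-membership depends only on $z$ and $\ell$, and combining with the unconditional $1/k^\ell$ and Bayes' rule yields a value depending only on $\ell$ (and $z$), which one then checks equals $\prod_{i=1}^\ell \frac{1}{k+1-i}$ by plugging in, e.g., the case $z=\ell$ or by a direct count of $|U|$.

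The main obstacle I anticipate is exactly the bookkeeping around the restriction to $U$: the clean ``$\ell$-wise independence from Vandermonde invertibility'' argument is for the unrestricted coefficient space, but the children are only those $\ba$ giving an injective $h_{\ba}$ on all of $[z]$, and one must argue the conditioning does not break the value-independence. The cleanest route is probably to note the problem only ever uses $\zeta^z$ evaluated on sets of size $\le \ell$ and with $\ell \le z$, set up the bijection for the $\ell$ chosen points, and then push all the $U$-dependence into a single normalizing constant that provably depends only on $(z,\ell)$ via the symmetry of $\mathbb{Z}_k^\times$ acting on residues --- so that when it is later fed into the inclusion--exclusion computation of $\zeta^z(S)$ in Lemma~\ref{l:symm}, only $|S|$ matters. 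An alternative, if the symmetry argument gets delicate, is to prove the weaker ``approximately $\ell$-wise independent'' statement the surrounding text alludes to and absorb the $1 \pm n^{-\omega(1)}$ slack into the high-probability statements elsewhere; but I would first try for the exact identity as stated, since the Vandermonde bijection makes the unrestricted case exactly uniform.
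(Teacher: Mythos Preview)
Your plan mirrors the paper's proof: establish exact $\ell$-wise independence over the unrestricted coefficient space via the Vandermonde/interpolation bijection, then pass to the conditioned space $U$ by a symmetry argument. The paper compresses the second step into one line (``symmetry of the children $\ba$ removed'' and ``symmetry induced by $a_0$''); you are more explicit about why this conditioning is the crux, and you correctly flag the $z>\ell$ case as the delicate one.

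That delicate step, however, does not go through as you describe it, and this is exactly where the paper's own argument is opaque. Once $h_\ba$ is pinned at $\ell$ distinct points, the degree-$(\ell{-}1)$ polynomial is \emph{completely determined}: the remaining $z-\ell$ values are not ``a uniformly random point in an affine subspace'' but a single deterministic tuple, and whether that tuple causes a collision on $[z]$ genuinely depends on which distinct $(j_1,\dots,j_\ell)$ were chosen. The only bijections of $U$ acting on the value side are the affine maps $j\mapsto cj+d$ (from $a_0\mapsto a_0+d$ and $\ba\mapsto c\ba$); for $\ell\ge 3$ these do not act transitively on ordered distinct $\ell$-tuples, so ``any permutation of $\mathbb{Z}_k$ preserves the structure'' is too strong. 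Concretely, with $k=5$, $\ell=3$, $z=4$, $[z]=\{0,1,2,3\}$ and $(x_1,x_2,x_3)=(0,1,2)$: the target $(0,1,2)$ forces $h_\ba(x)=x$, injective on $[z]$, so $\ba\in U$; but the target $(0,1,4)$ forces $h_\ba(x)=x^2$, with $h_\ba(2)=h_\ba(3)=4$, so $\ba\notin U$. Thus the conditional probability is $1/|U|$ in one case and $0$ in the other, and neither equals $\prod_i 1/(k{+}1{-}i)$. Your fallback---settle for an approximate statement and absorb the slack---is the more honest route, but note that for $z$ close to $k$ the event $\ba\in U$ is \emph{not} a $1-o(1)$ event, so the conditioning is not automatically negligible and still needs a real argument tailored to what Lemma~\ref{l:symm} actually consumes.
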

\begin{proof}
It is well-known that the random variables $(h_{\ba}(0), \ldots, h_{\ba}(z -1))$ where $\ba$ is chosen uniformly at random from $[p]^{\ell}$ are $\ell$-wise independent since  $h_{\ba}(\cdot)$ is a polynomial of degree $\ell - 1$, so 
$$ \textnormal{Pr}_{\ba \sim \mathcal{U}([p]^{l})}(h_{\ba}(x_1) = j_1, \ldots, h_{\ba}(x_{\ell}) = j_{\ell}) = \prod_{i=1}^{\ell} \prob{ h_{\ba}(x_i) = j_i }.
$$
  By throwing away all children $\ba$ such that there exists distinct $x_1$, $x_2$ with  $h_{\ba}(x_1) = h_{\ba}(x_2)$, we obtain the following by combining with the symmetry of the children $\ba$ removed (there exists exactly one polynomial defined by some $\ba$ passing through any collection of $\ell$ points):
$$ \prob{h_{\ba}(x_1) = j_1, \ldots, h_{\ba}(x_{\ell}) = j_{\ell}} = \prod_{i=1}^{\ell} \prob{ h_{\ba}(x_i) = j_i | h_{\ba}(x_i) \not \in \{j_1, \ldots, j_{i-1}\}}. $$
Finally, note that 
$$ \prod_{i=1}^{\ell} \prob{ h_{\ba}(x_i) = j_i | h_{\ba}(x_i) \not \in \{j_1, \ldots, j_{i-1}\}} = \prod_{i=1}^{\ell} \frac{1}{p + 1 -i}.
$$
by the symmetry induced by $a_0$.
\end{proof}
We are now ready to show the main lemma for the coverage functions $\zeta^z(\cdot)$ .
\lsymm*
\begin{proof}
Let $\ba$ be a child chosen uniformly at random and $S$ be a set of size at most $\ell$. Then, $\zeta^z(S)  = k \cdot  \prob{\cup_{j \in S} (\exists x \in [z] \text{ s.t. } h_{\ba}(x)=j)}$ and
$$
 \prob{\cup_{j \in S} (\exists x \in [z] \text{ s.t. } h_{\ba}(x)=j)}  = \sum_{T \subseteq S} (-1)^{|T| + 1} \prob{T \subseteq \{h_{\ba}(x) \given x \in [z]\}} $$ 
 by inclusion-exclusion. Note that $\prob{T \subseteq \{h_{\ba}(x) \given x \in [z]\}}$ only depends on the size of $T$ by Lemma \ref{l:indep}. Therefore $\zeta^z(S)$ only depends on the size of $S$ and $\zeta^z(\cdot)$ is symmetric for all sets of size at most $\ell$.
\end{proof}

We obtain an $(\alpha = 2^{\Omega(\sqrt{\log n})}, \beta = o(1))$-gap for polynomial sized coverage functions by using the primitives $c_p$.

\lpoly*
\begin{proof}
We construct $g,b,m,m^+$ as in the general case but in terms of primitives $c_{p}$ instead of $C_{p}$. By Lemmas~\ref{l:zeta} and \ref{l:symm}, we obtain the same matrix $A$ and coefficients $\bx^{\star}$ as in the general case, so the identical on small samples property holds. The masking on large samples holds identically as for general coverage functions. The gap and curvature properties are shown in Lemmas~\ref{l:gap} and \ref{l:curv}.
\end{proof}

We conclude with the main result for coverage functions by combining Claim~\ref{c:expo}, Lemma~\ref{l:poly}, and Theorem~\ref{t:gap}.
\tcover*

\newpage
\section{Algorithms for OPS}
\label{s:mpalgorithms}
\subsection*{OPS via Estimates of Expected Marginal Contributions}
We denote by $\mathcal{S}_i$ and $\mathcal{S}_{-i}$ the collections of all samples containing and not containing element $e_i$ respectively. The estimate $\hat{v}_i$ is then the difference in the average value of a sample in $\mathcal{S}_i$ and  the average value of a sample in $\mathcal{S}_{-i}$. By standard concentration bounds (Hoeffding's inequality, Lemma~\ref{l:hoeffding}), these are good estimates of $\expect[S \sim \DC|e_i \not \in S]{f_S(e_i)}$ for product distributions $\DC$ (Lemma~\ref{l:est}).

\begin{lemma}[Hoeffding's inequality]
\label{l:hoeffding}
Let $X_1, \dots, X_n$ be independent random variables with values in $[0,b]$. Let $X = \frac{1}{m}\sum_{i=1}^m X_i$ and $\mu = \expect{X}$. Then for every $0 < \epsilon < 1$,
$$\prob{|\bar{X} - \expect{\bar{X}}| \geq \epsilon} \leq 2 e^{- 2 m \epsilon^2 /  b^2}.$$
\end{lemma}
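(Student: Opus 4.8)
The plan is to establish Hoeffding's inequality through the standard exponential-moment (Chernoff) method. Writing $\bar X := \frac1m\sum_i X_i$, I would first set $Y_i := X_i - \expect{X_i}$, so that the $Y_i$ are independent, mean zero, and (since $X_i\in[0,b]$) supported in an interval of length at most $b$. The target event $\{\bar X - \expect{\bar X}\ge \epsilon\}$ is exactly $\{\sum_i Y_i \ge m\epsilon\}$, so it suffices to bound this one-sided probability by $e^{-2m\epsilon^2/b^2}$, handle the lower tail symmetrically, and union bound.

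For the upper tail, for any $t>0$ I would apply Markov's inequality to the nonnegative random variable $e^{t\sum_i Y_i}$ and factor using independence:
$$\prob{\textstyle\sum_i Y_i \ge m\epsilon} \;\le\; e^{-tm\epsilon}\,\expect{e^{t\sum_i Y_i}} \;=\; e^{-tm\epsilon}\prod_{i=1}^m \expect{e^{tY_i}}.$$
The key ingredient is Hoeffding's lemma: a mean-zero random variable $Y$ supported in an interval of length $b$ satisfies $\expect{e^{tY}}\le e^{t^2b^2/8}$. I would prove this by using convexity of $y\mapsto e^{ty}$ to dominate $e^{tY}$ by the linear interpolant through the two endpoints of the support interval, take expectations (the linear term vanishes because $\expect{Y}=0$), and then show that the resulting univariate function $\varphi(s) = -\theta s + \log\!\left(1-\theta+\theta e^{s}\right)$ satisfies $\varphi(0)=\varphi'(0)=0$ and $\varphi''(s)\le \tfrac14$, so that a second-order Taylor expansion gives $\varphi(s)\le s^2/8$ when $s=tb$.

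Combining the two, $\prob{\sum_i Y_i \ge m\epsilon}\le e^{-tm\epsilon + mt^2b^2/8}$; optimizing the quadratic in $t$ with the choice $t = 4\epsilon/b^2$ yields the bound $e^{-2m\epsilon^2/b^2}$. Running the identical argument on the $-Y_i$ (whose support intervals also have length $b$) bounds $\prob{\sum_i Y_i \le -m\epsilon}$ by the same quantity, and a union bound over the two tails produces the stated factor of $2$ and the constraint $0<\epsilon<1$ is immaterial. The only step that is not pure bookkeeping is Hoeffding's lemma — specifically the calculus estimate $\varphi''(s)\le\tfrac14$ — which I expect to be the main (and only real) obstacle; everything else is routine manipulation of the moment generating function.
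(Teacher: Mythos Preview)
Your proposal is correct and is exactly the standard textbook proof of Hoeffding's inequality. The paper itself does not prove this lemma --- it is merely stated as a classical concentration inequality and then invoked in the subsequent arguments --- so there is nothing to compare against; your Chernoff-method derivation via Hoeffding's lemma is precisely what one would supply if a proof were required.
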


\lest*
\begin{proof}
Let $\epsilon \geq f(N) / n^{c}$ for some constant $c$. Since $\mathcal{D}$ is a product distribution with marginals bounded away from $0$ and $1$,  there are at least $2n^{2c + 1}$ samples containing element $e_i$ and at least $2n^{2c + 1}$ samples  not containing $e_i$  for all $i$, with exponentially high probability, with a sufficiently large polynomial number of samples.  Then by Hoeffding's inequality (Lemma~\ref{l:hoeffding} with $m = 2n^{2c + 1}$ and $b = f(N)$), 
$$\textnormal{Pr}\left(\left|\frac{1}{|\mathcal{S}_i|}\sum_{S \in \mathcal{S}_i} f(S) -\expect[S \sim \DC|e_i \in S]{f(S)}\right| \geq \epsilon/2 \right) \leq 2 e^{- 4 n^{2c + 1} (\epsilon/2)^2 / f(N)^2} \leq 2e^{- n^{2c +1}  / n^{2c}} \leq 2e^{-n} $$
and similarly,
$$\textnormal{Pr}\left(\left|\frac{1}{|\mathcal{S}_{-i}|}\sum_{S \in \mathcal{S}_{-i}} f(S)- \expect[S \sim \DC | e_i \not \in S]{f(S)}\right|  \geq \epsilon/2 \right)\leq 2e^{-n} .$$
 Since $$\hat{v}_i = \frac{1}{|\mathcal{S}_i|}\sum_{S \in \mathcal{S}_i} f(S) - \frac{1}{|\mathcal{S}_{-i}|}\sum_{S \in \mathcal{S}_{-i}} f(S)$$ and $$\expect[S \sim \DC|e_i \not \in S]{f_S(e_i)} = \expect[S \sim \DC|e_i \not \in S]{f(S \cup e_i)} - \expect[S \sim \DC|e_i \not \in S]{f(S)} = \expect[S \sim \DC|e_i  \in S]{f(S)} - \expect[S \sim \DC|e_i \not \in S]{f(S)}$$where the second equality is since $\DC$ is a product distribution, the claim then holds  with probability at least $1 - 4 e^{-n}$. \end{proof}

\subsection*{A Tight Approximation for Submodular Functions}

Let $\mathcal{D}_i$ be the uniform distribution over all sets of size $i$. Define the distribution $\mathcal{D}^{sub}$ to be the distribution which draws from $\mathcal{D}_k$, $\mathcal{D}_{\sqrt{n}}$, and $\mathcal{D}_{\sqrt{n}+1}$ at random.  This section is devoted to show that Algorithm~\ref{alg:ubsub} is an $\tilde{\Omega}(n^{-1/4})$-\texttt{OPS} algorithm over $\mathcal{D}^{sub}$ (Theorem~\ref{t:ubsubmodular}). Define $\mathcal{S}_{i,j}$ and $\mathcal{S}_{-i,j}$ to be the collections of samples of size $j$ containing and not containing $e_i$ respectively. For Algorithm~\ref{alg:ubsub}, we use a slight variation of \textsc{EEMC}($\mathcal{S}$) where the estimates are $$\hat{v}_i = \frac{1}{|\mathcal{S}_{i,\sqrt{n}+1}|} \sum_{S \in \mathcal{S}_{i,\sqrt{n}+1}}f(S) - \frac{1}{|\mathcal{S}_{-i,\sqrt{n}}|} \sum_{S \in \mathcal{S}_{-i,\sqrt{n}}}f(S).$$
These are good estimates of $\expect[S \sim \DC_{\sqrt{n}}|e_i\not \in S]{f_S(e_i)}$, as shown by the following lemma. The proof follows almost identically as the proof for Lemma~\ref{l:est}.
 \begin{lemma}
\label{l:est2}
With probability at least $1 - O(e^{-n})$, the estimations $\hat{v}_i$ defined above are $\epsilon$ accurate, for any $\epsilon \geq f(N) / \poly(n)$ and for all $e_i$,  i.e., 
$$ | \hat{v}_i - \expect[S \sim \DC_{\sqrt{n}}|e_i\not \in S]{f_S(e_i)} | \leq \epsilon.$$
\end{lemma}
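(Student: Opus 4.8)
The plan is to follow the proof of Lemma~\ref{l:est} essentially verbatim, adding one observation that handles the shift between sizes $\sqrt{n}$ and $\sqrt{n}+1$. First I would record the identity
$$\expect[S \sim \DC_{\sqrt{n}}|e_i \not \in S]{f_S(e_i)} \;=\; \expect[S \sim \DC_{\sqrt{n}+1}|e_i \in S]{f(S)} \;-\; \expect[S \sim \DC_{\sqrt{n}}|e_i \not \in S]{f(S)},$$
which holds because drawing $S$ uniformly among the sets of size $\sqrt{n}$ that avoid $e_i$ and then adjoining $e_i$ yields a uniformly random set of size $\sqrt{n}+1$ containing $e_i$; this plays the role of the ``product distribution'' step in Lemma~\ref{l:est}. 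Given this, it suffices to show that the first empirical average in the definition of $\hat v_i$, taken over $\mathcal{S}_{i,\sqrt{n}+1}$, is within $\epsilon/2$ of $\expect[S \sim \DC_{\sqrt{n}+1}|e_i \in S]{f(S)}$, and that the second, over $\mathcal{S}_{-i,\sqrt{n}}$, is within $\epsilon/2$ of $\expect[S \sim \DC_{\sqrt{n}}|e_i \not \in S]{f(S)}$; the bound then follows by the triangle inequality.

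Second, I would show that with a sufficiently large polynomial number of samples from $\mathcal{D}^{sub}$, with probability $1 - O(e^{-n})$ we have $|\mathcal{S}_{i,\sqrt{n}+1}| \geq 2n^{2c+1}$ and $|\mathcal{S}_{-i,\sqrt{n}}| \geq 2n^{2c+1}$ for every $i$ simultaneously, where $c$ is the constant with $\epsilon \geq f(N)/n^{c}$. This is because $\mathcal{D}^{sub}$ draws from $\DC_{\sqrt{n}+1}$ with constant probability and, conditioned on that, includes $e_i$ with probability $(\sqrt{n}+1)/n \geq 1/\poly(n)$, and likewise draws from $\DC_{\sqrt{n}}$ with constant probability and, conditioned on that, excludes $e_i$ with probability $1 - \sqrt{n}/n \geq 1/2$; a Chernoff bound then gives each of the two sample-count lower bounds with exponentially small failure probability, and a union bound over the $n$ elements preserves an $O(e^{-n})$ bound.

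Third, conditioned on those counts, the sets in $\mathcal{S}_{i,\sqrt{n}+1}$ are i.i.d.\ draws from $\DC_{\sqrt{n}+1}$ conditioned on containing $e_i$, and those in $\mathcal{S}_{-i,\sqrt{n}}$ are i.i.d.\ draws from $\DC_{\sqrt{n}}$ conditioned on avoiding $e_i$. Since $f$ takes values in $[0, f(N)]$, applying Hoeffding's inequality (Lemma~\ref{l:hoeffding}) with $m = 2n^{2c+1}$ and $b = f(N)$ shows that each of these two empirical averages deviates from its mean by more than $\epsilon/2$ with probability at most $2e^{-n}$, using $\epsilon \geq f(N)/n^{c}$. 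A final union bound over the $2n$ events (two per element), together with the identity from the first step, yields the stated bound with probability $1 - O(e^{-n})$. I do not expect a genuine obstacle here; the only step that merits any care is the size-shift identity, and everything else is the same concentration bookkeeping as in Lemma~\ref{l:est}.
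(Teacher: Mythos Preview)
Your proposal is correct and follows essentially the same approach as the paper's own proof: establish the size-shift identity, argue that the two sample collections are polynomially large with exponentially high probability, and apply Hoeffding's inequality to each empirical average. If anything, you are slightly more explicit than the paper in justifying the identity and in carrying out the union bound over all $n$ elements.
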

\begin{proof}
Let $\epsilon \geq f(N) / n^{c}$ for some constant $c$. With a sufficiently large polynomial number of samples,  $\mathcal{S}_{i,\sqrt{n}+1}$ and  $\mathcal{S}_{-i,\sqrt{n}}$ are of size at least $2n^{2c + 1}$ with exponentially high probability.  Then by Hoeffding's inequality (Lemma~\ref{l:hoeffding} with $m = 2n^{2c + 1}$ and $b = f(N)$), 
$$\textnormal{Pr}\left(\left|\frac{1}{|\mathcal{S}_{i,\sqrt{n}+1}|}\sum_{S \in \mathcal{S}_{i,\sqrt{n}+1}} f(S) -\expect[S \sim \DC_{\sqrt{n}+1}|e_i \in S]{f(S)}\right| \geq \epsilon/2 \right) \leq 2 e^{- 4 n^{2c + 1} (\epsilon/2)^2 / f(N)^2}  \leq 2e^{-n} $$
and similarly,
$$\textnormal{Pr}\left(\left|\frac{1}{|\mathcal{S}_{-i,\sqrt{n}}|}\sum_{S \in \mathcal{S}_{-i,\sqrt{n}}} f(S)- \expect[S \sim \DC_{\sqrt{n}}|e_i \not \in S]{f(S)}\right|  \geq \epsilon/2 \right)\leq 2e^{-n} .$$
By the definition of $\hat{v}_i$ and since $$\expect[S \sim \DC_{\sqrt{n}}|e_i \not \in S]{f_S(e_i)}  = \expect[S \sim \DC_{\sqrt{n}+1}|e_i  \in S]{f(S)} - \expect[S \sim \DC_{\sqrt{n}}|e_i \not \in S]{f(S)},$$ the claim then holds  with probability at least $1 - 4 e^{-n}$. \end{proof}

Next, we show a simple lemma that is useful when returning random sets (Lemma~\ref{l:k/n}). The analysis is then divided into two cases, depending if a random set $S \sim \DC_{\sqrt{n} }$ has low value or not. If a set has low value, then we obtain an $\tilde{\Omega}(n^{-1/4})$-approximation by Corollary~\ref{c:1}. Corollary~\ref{c:1} combines Lemmas \ref{l:t/kn1/2} and \ref{l:k/t} which respectively obtain $t  / (4k\sqrt{n})$ and $k/t$ approximations. If a random set has high value,  then we obtain an $n^{-1/4}$-approximation by Corollary~\ref{c:2}. Corollary~\ref{c:2} combines Lemmas \ref{l:kn1/2} and \ref{l:1/k} which respectively obtain $k / (4\sqrt{n})$ and $1/k$ approximations.
 
  \begin{lemma}
  \label{l:k/n}
For any monotone submodular function $f(\cdot)$,  the value of a uniform random set $S$ of size $k$ is a $k/n$-approximation to $f(N)$.
\end{lemma}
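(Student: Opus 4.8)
The plan is to use the standard averaging argument: a uniformly random size-$k$ subset of $N$ captures, in expectation, a $k/n$ fraction of the total value $f(N)$, and this follows from monotonicity and submodularity alone. First I would order the elements of $N$ as $e_1,\dots,e_n$ and write $f(N) = \sum_{i=1}^n f_{\{e_1,\dots,e_{i-1}\}}(e_i)$ as a telescoping sum of marginal contributions. The key step is to relate these marginals to the expected contribution of a random element to a random size-$k$ set: for a uniformly random $S$ with $|S|=k$ and a uniformly random element $e$, one shows $\expect{f_S(e) \cdot \mathds{1}[e \notin S]}$ is controlled from below, or alternatively one directly lower-bounds $\expect{f(S)}$ by summing marginals of the $k$ elements of $S$ in a random order and invoking submodularity to compare each marginal in this small set to the corresponding marginal in the full chain $e_1,\dots,e_n$.

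Concretely, I would argue as follows. Write $S = \{s_1,\dots,s_k\}$ in the random order in which a uniform random permutation of $N$ visits them, so $f(S) = \sum_{j=1}^k f_{\{s_1,\dots,s_{j-1}\}}(s_j)$. By submodularity, $f_{\{s_1,\dots,s_{j-1}\}}(s_j) \geq f_{N \setminus \{s_j\}}(s_j)$ is the wrong direction; instead I would use the cleaner route: take a uniformly random permutation $\pi$ of $N$, let $R$ be its first $k$ elements, and observe that each element $e$ appears in position $\leq k$ with probability exactly $k/n$, and when it does, its marginal contribution to the set of elements preceding it within $R$ is at least its marginal contribution $f_{P_e}(e)$ to the set $P_e$ of \emph{all} elements preceding it in $\pi$ (submodularity, since the prefix within $R$ is a subset of $P_e$). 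Hence
$$\expect{f(R)} = \sum_{e \in N} \Pr[e \in R] \cdot \expect{f_{\text{prefix in } R}(e) \mid e \in R} \geq \frac{k}{n}\sum_{e\in N} \expect{f_{P_e}(e)} = \frac{k}{n} f(N),$$
where the last equality is because $\sum_e f_{P_e}(e)$ telescopes to $f(N)$ for every fixed permutation, hence also in expectation. Since a uniformly random permutation's first $k$ elements form a uniformly random size-$k$ set, this is exactly the claim. (Monotonicity is used to ensure all marginals are nonnegative, so the conditioning manipulations are valid and $f(N) \ge f(\emptyset) = 0$.)

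The main obstacle, such as it is, is purely bookkeeping: making the conditional expectation step rigorous, i.e. carefully checking that conditioned on $e \in R$, the set of elements of $R$ preceding $e$ is indeed (distributed as) a subset of the set $P_e$ of all $\pi$-predecessors of $e$, so that submodularity applies pointwise before taking expectations. This is immediate once one notes that $R = \{$first $k$ elements of $\pi\}$ and conditioning on $e$'s position being $\le k$ does not change the fact that $R$-predecessors of $e$ are a subset of $\pi$-predecessors of $e$. No nontrivial calculation is required beyond the telescoping identity.
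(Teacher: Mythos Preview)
Your permutation approach works in principle, but the displayed chain has a real gap. The pointwise claim you give --- that when $e\in R$ the $R$-prefix of $e$ is a subset of $P_e$, so by submodularity the marginal to the $R$-prefix dominates $f_{P_e}(e)$ --- only yields
\[
\expect{f_{\text{$R$-prefix}}(e)\mid e\in R}\ \geq\ \expect{f_{P_e}(e)\mid e\in R}.
\]
(In fact this is an equality: if $e$ lands among the first $k$ positions of $\pi$, then \emph{all} its $\pi$-predecessors are also among the first $k$, so the $R$-prefix equals $P_e$.) What your inequality actually requires is the unconditioned bound $\expect{f_{P_e}(e)\mid e\in R}\geq \expect{f_{P_e}(e)}$, and your ``main obstacle'' paragraph does not address this. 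The fix is easy but is a second, distinct use of submodularity: conditioned on $e\in R$, the size $|P_e|$ is uniform on $\{0,\dots,k-1\}$ rather than $\{0,\dots,n-1\}$, and the expected marginal of $e$ to a uniformly random set of a given size is nonincreasing in that size, so averaging over the smaller sizes only helps. Equivalently and more cleanly, set $h(j):=\expect{f_{\{\pi(1),\dots,\pi(j-1)\}}(\pi(j))}$; then $\expect{f(R)}=\sum_{j\le k}h(j)$, $f(N)=\sum_{j\le n}h(j)$, and submodularity makes $h$ nonincreasing, so the first $k$ terms average at least $f(N)/n$.

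For comparison, the paper's proof avoids permutations and conditioning altogether: partition $N$ uniformly at random into $n/k$ blocks of size $k$; by submodularity (subadditivity) $\sum_i f(B_i)\ge f(N)$, so a random block has expected value at least $\tfrac{k}{n}f(N)$; and a random block of a random balanced partition is a uniform random $k$-set. This three-line argument only needs subadditivity, whereas your route relies on the full diminishing-returns property and requires the extra monotonicity-of-$h$ step you omitted.
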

\begin{proof}
Partition the ground set into sets of size $k$ uniformly at random. A uniform random set of this partition is a $k/n$-approximation to  $f(N)$ in expectation by submodularity. A uniform random set of this partition is also a uniform random set of size $k$.
\end{proof}
In the first case of the analysis, we assume that $\expect[S \sim \DC_{\sqrt{n} }]{f(S)} \leq f(S^{\star}) / 4$. Let $j'$ be the largest $j$ such that bin $B_{j}$ contains at least one element $e_i$ such that $\hat{v}_i  \geq f(S^{\star}) / (2k)$. So any element $e_i \in B_j$, $j \leq j'$ is such that $\hat{v}_i  \geq f(S^{\star}) / (4k)$. Define $B^{\star} = \argmax_{B_j: j \leq j'} f(S^{\star} \cap B_j)$ to be the bin $B$  with high marginal contributions that has the highest value from the optimal solution. Let $t$ be the size of $B^{\star}$.

\begin{lemma}
\label{l:t/kn1/2}
If $\expect[S \sim \DC_{\sqrt{n}}]{f(S)} \leq f(S^{\star}) / 4$, then a uniformly random subset of bin $B^{\star}$ of size $\min\{k, t\}$  is a $(1-o(1))\cdot\min( 1/4, t  / (4k\sqrt{n}))$-approximation to $f(S^{\star})$.
\end{lemma}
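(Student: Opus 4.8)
The plan is to route everything through a uniformly random set of size $\sqrt n$, combining the binning guarantee with submodularity. First I would extract a per-element lower bound from the way $B^{\star}$ was selected: every $e_i\in B^{\star}$ lies in some bin $B_j$ with $j\le j'$, hence $\hat v_i\ge f(S^{\star})/(4k)$. Taking $\epsilon=f(S^{\star})/(k\log n)$ — admissible in Lemma~\ref{l:est2} since subadditivity gives $f(N)\le n\max_i f(\{e_i\})\le n\,f(S^{\star})$, so $\epsilon\ge f(N)/\poly(n)$ — we obtain, for every $e_i\in B^{\star}$,
$$\expect[S\sim\DC_{\sqrt n}|e_i\notin S]{f_S(e_i)}\;\ge\;\hat v_i-\epsilon\;\ge\;(1-o(1))\frac{f(S^{\star})}{4k},$$
and also $f(\{e_i\})\ge\expect[S\sim\DC_{\sqrt n}|e_i\notin S]{f_S(e_i)}$ by monotonicity.

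Next I would bound $\expect[R\sim\DC_{\sqrt n}]{f(R\cap B^{\star})}$. For any set $A$, submodularity gives $f(A)\ge\sum_{e\in A}f_{A\setminus e}(e)$; applying this with $A=R\cap B^{\star}$ and using $(R\cap B^{\star})\setminus e\subseteq R\setminus e$ yields $f(R\cap B^{\star})\ge\sum_{e\in R\cap B^{\star}}f_{R\setminus e}(e)$. Taking expectations over $R$ and reindexing by $e\in B^{\star}$: $\prob{e\in R}=1/\sqrt n$, and conditioned on $e\in R$ the set $R\setminus e$ is a uniform $(\sqrt n-1)$-subset of $N\setminus e$, whose marginal contribution to $e$ has expectation at least that of a uniform $\sqrt n$-subset (couple the smaller set inside the larger and use submodularity), hence $\ge(1-o(1))f(S^{\star})/(4k)$ by the previous paragraph. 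Summing,
$$\expect[R\sim\DC_{\sqrt n}]{f(R\cap B^{\star})}\;\ge\;(1-o(1))\frac{|B^{\star}|}{\sqrt n}\cdot\frac{f(S^{\star})}{4k}\;=\;(1-o(1))\frac{t\,f(S^{\star})}{4k\sqrt n}.$$

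It remains to transfer this to the returned set $S$, a uniform $s$-subset of $B^{\star}$ with $s=\min(k,t)$. Conditioned on $|R\cap B^{\star}|=m$, the set $R\cap B^{\star}$ is a uniform $m$-subset of $B^{\star}$, and $m\mapsto\expect{f(\text{uniform }m\text{-subset of }B^{\star})}$ is nondecreasing (couple and use monotonicity). If $s\ge\min(\sqrt n,t)$ — equivalently $t\le k$, or $k\ge\sqrt n$ — then every $m$ in the support of $|R\cap B^{\star}|$ satisfies $m\le\min(\sqrt n,t)\le s$, so $\expect{f(S)}\ge\expect[R\sim\DC_{\sqrt n}]{f(R\cap B^{\star})}\ge(1-o(1))\min(\tfrac14,\tfrac{t}{4k\sqrt n})f(S^{\star})$, using $x\ge\min(\tfrac14,x)$. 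In the remaining regime $k<\sqrt n$ and $t>k$, where $|R\cap B^{\star}|$ may overshoot $s=k$, I would instead combine the subadditivity bound $\expect{f(\text{uniform }m\text{-subset})}\le\lceil m/k\rceil\,\expect{f(S)}$ (split an $m$-set into $\lceil m/k\rceil$ pieces of size $\le k$, use subadditivity and monotonicity of $\expect{f(\cdot)}$ in the size) with $\expect{\lceil|R\cap B^{\star}|/k\rceil}\le 1+t/(k\sqrt n)$, and — when $t\le\sqrt n$ — the crude bound $\expect{f(S)}\ge(1-o(1))f(S^{\star})/(4k)$ from the first paragraph ($f(S)\ge f(\{e\})$ for any $e\in S\cap B^{\star}$). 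Note also that the hypothesis $\expect[S\sim\DC_{\sqrt n}]{f(S)}\le f(S^{\star})/4$ forces $\expect[R\sim\DC_{\sqrt n}]{f(R\cap B^{\star})}\le\expect[R\sim\DC_{\sqrt n}]{f(R)}\le f(S^{\star})/4$, hence $t\le(1+o(1))k\sqrt n$, so the $\tfrac14$ clamp is active only near that boundary.

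I expect the main obstacle to be precisely this last case, $k<\sqrt n$: the random intersection $|R\cap B^{\star}|$ need not fall below the returned size $\min(k,t)$, so one cannot simply inflate $R\cap B^{\star}$ into the returned set and must trade a size-mismatch factor against the concentration of $|R\cap B^{\star}|$ about its mean $t/\sqrt n$; the assumption on $\expect_{\DC_{\sqrt n}}[f]$ is exactly what keeps $t/(k\sqrt n)$, and therefore that loss, bounded. The first two steps, by contrast, are routine submodularity bookkeeping layered on Lemma~\ref{l:est2}.
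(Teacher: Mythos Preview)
Your first two steps --- the per-element lower bound via Lemma~\ref{l:est2} and the lower bound on $\expect[R\sim\DC_{\sqrt n}]{f(R\cap B^{\star})}$ --- are correct and match the paper (your coupling of $R\setminus e$ with a uniform $\sqrt n$-set is in fact cleaner than the paper's justification of the analogous step). The problem is in the transfer step, precisely the case you flag. Your subadditivity bound gives
\[
\expect{f(S)}\;\ge\;\frac{(1-o(1))\,\tfrac{t}{4k\sqrt n}}{1+\tfrac{t}{k\sqrt n}}\,f(S^{\star}),
\]
and your observation that the hypothesis forces $t/(k\sqrt n)\le 1+o(1)$ only caps the denominator at $2+o(1)$. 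In the sub-regime $k<\sqrt n<t$ with $t/(k\sqrt n)=\Theta(1)$ you genuinely lose a constant factor, and the crude singleton bound $\expect{f(S)}\ge(1-o(1))f(S^{\star})/(4k)$ does not rescue you there since it falls short of the target $t/(4k\sqrt n)$ once $t>\sqrt n$. So the lemma as stated, with its $(1-o(1))$ constant, is not established.

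The paper avoids this loss by inserting one Jensen step \emph{before} the case split. Let $g(m)$ denote the expected value of a uniform $m$-subset of $B^{\star}$; submodularity makes $g$ concave (couple nested uniform subsets), so
\[
\expect[R\sim\DC_{\sqrt n}]{f(R\cap B^{\star})}\;=\;\expect_m\big[g(m)\big]\;\le\;g\big(\expect[m]\big)\;=\;g(t/\sqrt n).
\]
Your chain then yields $g(t/\sqrt n)\ge(1-o(1))\tfrac{t}{4k\sqrt n}f(S^{\star})$, and now the comparison is between two \emph{deterministic} sizes. If $\min(k,t)\ge t/\sqrt n$, monotonicity of $g$ gives the $t/(4k\sqrt n)$ bound immediately. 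If $k<t/\sqrt n$, sample the returned $k$-set in two stages --- first a uniform $(t/\sqrt n)$-subset $T$ of $B^{\star}$, then a uniform $k$-subset of $T$ --- and apply Lemma~\ref{l:k/n} inside $T$ to get a $k\sqrt n/t$ fraction of $g(t/\sqrt n)$, which is exactly $(1-o(1))/4$. No constant is lost in either branch.
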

\begin{proof}
Note that
\begin{align*}
\expect[S\sim \DC_{t / \sqrt{n}}|S \subseteq B^{\star}]{f(S)} & \geq \expect[S \sim \DC_{\sqrt{n}}]{f(S \cap B^{\star})} && \text{(submodularity)}\\
 &\geq \expect[S \sim \DC_{\sqrt{n}}]{\sum_{e_i \in S \cap B^{\star}} f_{(S \cap B^{\star}) \setminus e_i}(e_i)} && \text{(submodularity)}\\
&\geq \expect[S \sim \DC_{\sqrt{n}}]{\sum_{e_i \in S \cap B^{\star}} \expect[S' \sim \DC_{\sqrt{n}}|e_i \not \in S']{f_{S'}(e_i)} } && \text{(submodularity)}\\
&\geq \expect[S \sim \DC_{\sqrt{n}}]{\sum_{e_i \in S \cap B^{\star}} (\hat{v}_i - f(N) / n^3)} && \text{(Lemma~\ref{l:est})}\\
&= \expect[S \sim \DC_{\sqrt{n}}]{| S \cap B^{\star}|} (1- o(1))\frac{f(S^{\star})}{4k} && (\hat{v}_i  \geq f(S^{\star}) / (4k) \text{ for } e_i \in B^{\star}) \\
&= (1-o(1))\frac{t  f(S^{\star})}{4k\sqrt{n}} 
\end{align*} 

 If $t  / \sqrt{n}\geq k $, then a uniformly random subset of bin $B^{\star}$ of size $k$ is a $k \sqrt{n} / t$ approximation to $\expect[S\sim \DC_{t / \sqrt{n}}|S \subseteq B^{\star}]{f(S)}$ by Lemma~\ref{l:k/n}, so a $(1 - o(1))/ 4$ approximation to $f(S^{\star})$ by the above inequalities. Otherwise, if $t  / \sqrt{n} < k $, then a uniformly random subset of bin $B^{\star}$ of size $\min\{k, t\}$ has value at least  $\expect[S\sim \DC_{t / \sqrt{n}}|S \subseteq B^{\star}]{f(S)}$ by monotonicity, and is thus a $(1-o(1)) t /(4k\sqrt{n})$ approximation to $f(S^{\star})$.

\end{proof}


\begin{lemma}
\label{l:k/t}
If $\expect[S \sim \DC_{\sqrt{n}}]{f(S)} \leq f(S^{\star}) / 4$, a uniformly random subset of bin $B^{\star}$ of size $\min(k, t)$  is an $\tilde{\Omega}(\min(1, k/t))$-approximation to $f(S^{\star})$.
\end{lemma}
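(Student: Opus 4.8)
The plan is to show that the bin $B^{\star}$ already captures a $\tilde{\Omega}(1)$ fraction of $f(S^{\star})$ by itself, and then to pass to a random subset of size $\min(k,t)$ via Lemma~\ref{l:k/n}. I would take $|S^{\star}| = k$ without loss of generality (monotonicity) and condition on the high-probability event of Lemma~\ref{l:est2} that every estimate $\hat{v}_i$ is within $\epsilon := f(N)/n^3$ of $\expect[S \sim \DC_{\sqrt{n}} | e_i \notin S]{f_S(e_i)}$; note $\epsilon = o(f(S^{\star})/k)$ since $f(S^{\star}) \geq (k/n)f(N)$ by Lemma~\ref{l:k/n}. Write $H := S^{\star} \cap \bigcup_{j \leq j'} B_j$ and $L := S^{\star} \setminus H$. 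Recall from the surrounding text that every element in a bin $B_j$ with $j \leq j'$ has $\hat{v}_i \geq f(S^{\star})/(4k)$; conversely, every $e_i \in L$ has $\hat{v}_i < f(S^{\star})/(2k)$, since it lies either in a bin $B_j$ with $j > j'$ (which by the choice of $j'$ contains no element of estimate $\geq f(S^{\star})/(2k)$) or in no bin at all (so $\hat{v}_i < \hat{v}_{max}/n^3 < f(S^{\star})/(2k)$).

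The crux is to show $f(H) = \tilde{\Omega}(f(S^{\star}))$, and the key step is an \emph{upper} bound on $f(L)$ derived from the hypothesis $\expect[S \sim \DC_{\sqrt{n}}]{f(S)} \leq f(S^{\star})/4$. For every set $S$, ordering $L \setminus S$ and telescoping, submodularity gives $f(S \cup L) - f(S) \leq \sum_{e_i \in L \setminus S} f_S(e_i)$. Taking expectations over $S \sim \DC_{\sqrt{n}}$ and using $f_S(e_i) \geq 0$, Lemma~\ref{l:est2}, $\hat{v}_i < f(S^{\star})/(2k)$ on $L$, and $|L| \leq k$:
\begin{align*}
\expect[S]{f(S \cup L)} - \expect[S]{f(S)}
&\leq \sum_{e_i \in L} \prob{e_i \notin S}\cdot\expect[S | e_i \notin S]{f_S(e_i)} \\
&\leq \sum_{e_i \in L}(\hat{v}_i + \epsilon) \;\leq\; \Big(\tfrac{1}{2} + o(1)\Big) f(S^{\star}).
\end{align*}
Hence $\expect[S]{f(S \cup L)} \leq (3/4 + o(1))f(S^{\star})$. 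Now comes the cheap but essential observation: by monotonicity $f(L) \leq f(S \cup L)$ for every $S$, so $f(L) \leq \expect[S]{f(S \cup L)} \leq (3/4 + o(1))f(S^{\star})$ --- this sidesteps the fact that $\emptyset$ is not in the support of $\DC_{\sqrt{n}}$. Finally, subadditivity of normalized monotone submodular functions gives $f(S^{\star}) \leq f(H) + f(L)$, so $f(H) \geq (1/4 - o(1))f(S^{\star})$.

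To finish, note that $H = \bigcup_{j \leq j'}(S^{\star} \cap B_j)$ is a union of at most $3\log n$ sets, so by subadditivity and averaging $f(S^{\star} \cap B^{\star}) = \max_{j \leq j'} f(S^{\star} \cap B_j) \geq f(H)/(3\log n) = \tilde{\Omega}(f(S^{\star}))$, and hence $f(B^{\star}) = \tilde{\Omega}(f(S^{\star}))$ by monotonicity. Applying Lemma~\ref{l:k/n} to $f$ restricted to subsets of the ground set $B^{\star}$ (which has size $t$), a uniformly random subset of $B^{\star}$ of size $\min(k,t)$ has expected value at least $\min(1, k/t)\cdot f(B^{\star}) = \tilde{\Omega}\big(\min(1, k/t)\cdot f(S^{\star})\big)$, which is the claim.

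The main obstacle is the middle paragraph: converting the hypothesis that a random $\sqrt{n}$-set has small value into a lower bound on $f(H)$. The right move is to combine the submodular telescoping bound on $f(S \cup L) - f(S)$ --- which rewrites the relevant part of $f(L)$ in terms of marginals $f_S(e_i)$ that the samples estimate through the $\hat{v}_i$ --- with the trivial monotonicity inequality $f(L) \leq \expect[S \sim \DC_{\sqrt{n}}]{f(S \cup L)}$; everything else is submodularity bookkeeping and a union bound over the $O(\log n)$ bins.
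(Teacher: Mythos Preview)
Your proof is correct and follows essentially the same approach as the paper: bound $f(L)$ via monotonicity ($f(L)\le \expect[S]{f(S\cup L)}$), the submodular telescoping estimate combined with the $\hat v_i$ bounds on $L$, and the hypothesis on $\expect[S]{f(S)}$; then use subadditivity to get $f(H)=\Omega(f(S^\star))$, average over the $O(\log n)$ bins, and finish with Lemma~\ref{l:k/n}. Your handling of the expectation step (writing $\sum_{e_i\in L}\Pr[e_i\notin S]\cdot\expect[S\mid e_i\notin S]{f_S(e_i)}$ explicitly before invoking Lemma~\ref{l:est2}) is in fact a bit more careful than the paper's corresponding line.
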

\begin{proof}
We start by bounding the value of optimal elements not in bin $B_j$ with $j \leq j'$.
\begin{align*}
& f(S^{\star} \setminus (\cup_{B_j : j \leq j'} B_j)) \\
  \leq &\expect[S \sim \DC_{\sqrt{n}}]{f(S \cup S^{\star} \setminus (\cup_{B_j : j \leq j'} B_j))} && \text{(monotonicity)}\\ 
  \leq &\expect[S \sim \DC_{\sqrt{n}}]{f(S) + \sum_{e_i \in  (S^{\star} \setminus (\cup_{B_j : j \leq j'} B_j)) \setminus S} \expect[S' \sim \DC_{\sqrt{n}}|e_i \not \in S']{f_{S'}(e_i)} } && \text{(submodularity)}\\
\leq &f(S^{\star})/ 4 + \expect[S \sim \DC_{\sqrt{n}}]{\sum_{e_i \in  (S^{\star} \setminus (\cup_{B_j : j \leq j'} B_j)) \setminus S} (\hat{v}_i + f(N) / n^3)} && \text{(assumption and Lemma~\ref{l:est})}\\
 \leq &f(S^{\star})/ 4 + f(S^{\star})/ 2 + f(S^{\star})/ n&& \text{(definition of } j')
\end{align*}
Since $f(S^{\star}) \leq f(S^{\star} \cap (\cup_{B_j : j \leq j'} B_j)) + f(S^{\star} \setminus (\cup_{B_j : j \leq j'} B_j))$ by submodularity, we get that $f(S^{\star} \cap (\cup_{B_j : j \leq j'} B_j))  \geq f(S^{\star}) / 5$. Since there are $3 \log n$ bins, $f(S^{\star} \cap B^{\star})$ is a $ 3 \log n$ to $f(S^{\star}) / 5$ by submodularity and the definition of $B^{\star}$. By monotonicity, $f(B^{\star})$ is a $ 15 \log n$ to $f(S^{\star})$.   Thus, by Lemma~\ref{l:k/n},  a random subset $S$ of size $k$ of bin $B^{\star}$ is an $\tilde{\Omega}(\min(1, k/t))$ approximation to $f(S^{\star})$.
 
\end{proof}

\begin{corollary}
\label{c:1}
 If $\expect[S \sim \DC_{\sqrt{n}}]{f(S)} \leq f(S^{\star}) / 4$,  a uniformly random subset of size $\min\{k, |B_j|\}$ of a random bin $B_j$ is an $\tilde{\Omega}(n^{-1/4})$-approximation to $f(S^{\star})$.
\end{corollary}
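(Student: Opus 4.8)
\textbf{Proof proposal for Corollary~\ref{c:1}.}
The plan is simply to take the better of the two guarantees already established under the hypothesis $\expect[S\sim\DC_{\sqrt n}]{f(S)}\le f(S^\star)/4$, namely Lemma~\ref{l:t/kn1/2} and Lemma~\ref{l:k/t}, and then absorb into the $\tilde\Omega(\cdot)$ notation the $1/(3\log n)$ loss incurred because Algorithm~\ref{alg:ubsub} selects one of the $3\log n$ bins uniformly at random rather than selecting $B^\star$ directly. Write $t=|B^\star|$. Conditioning on the event that the bin chosen by the algorithm is $B^\star$ (which happens with probability $1/(3\log n)$), the returned set is a uniformly random subset of $B^\star$ of size $\min\{k,t\}$, so Lemmas~\ref{l:t/kn1/2} and \ref{l:k/t} jointly give
$$\expect{f(S)\,\big|\,B_j=B^\star}\ \ge\ \max\Big((1-o(1))\cdot\min\big(\tfrac14,\tfrac{t}{4k\sqrt n}\big),\ \tilde\Omega\big(\min(1,\tfrac{k}{t})\big)\Big)\cdot f(S^\star).$$

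The core of the argument is then a three-way case analysis on $t$ showing the right-hand maximum is $\tilde\Omega(n^{-1/4})\cdot f(S^\star)$. If $t\le k$, then $\min(1,k/t)=1$ and the second bound already gives $\tilde\Omega(1)$. If $t\ge k\sqrt n$, then $\min(\tfrac14,\tfrac{t}{4k\sqrt n})=\tfrac14$ and the first bound gives $(1-o(1))/4$. In the remaining regime $k<t<k\sqrt n$, both minima are attained at their nontrivial arguments, so the two lower bounds are $(1-o(1))\cdot\tfrac{t}{4k\sqrt n}$ and $\tilde\Omega(k/t)$; their product is $\tilde\Omega\big(\tfrac{t}{k\sqrt n}\cdot\tfrac{k}{t}\big)=\tilde\Omega(n^{-1/2})$, hence by AM--GM the larger of the two is at least the square root of the product, i.e.\ $\tilde\Omega(n^{-1/4})$. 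In all three cases the conditional expectation is $\tilde\Omega(n^{-1/4})\cdot f(S^\star)$.

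Finally, since $f$ is monotone and normalized we have $f(S)\ge 0$ always, so dropping the contribution of the other bins only decreases the expectation:
$$\expect{f(S)}\ \ge\ \frac{1}{3\log n}\cdot\expect{f(S)\,\big|\,B_j=B^\star}\ =\ \tilde\Omega(n^{-1/4})\cdot f(S^\star),$$
where the factor $1/(3\log n)$ is swallowed by $\tilde\Omega$. This completes the plan.

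\textbf{Main obstacle.} There is essentially no hard computation here — the two lemmas have already done the work. The one idea that has to be gotten right is the way the middle regime is set up: one must notice that when $k<t<k\sqrt n$ the two bounds depend on $t$ (and $k$) only through reciprocal factors $\tfrac{t}{k\sqrt n}$ and $\tfrac{k}{t}$, so that their product is $\tilde\Omega(n^{-1/2})$ independently of $t$ and $k$, and then invoke AM--GM to conclude $\tilde\Omega(n^{-1/4})$. The boundary cases $t\le k$ and $t\ge k\sqrt n$ are exactly the cases where one of the two $\min$'s saturates to a constant, so they need to be peeled off first.
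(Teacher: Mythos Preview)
The proposal is correct and takes essentially the same approach as the paper: condition on the random bin being $B^\star$ (losing only a $1/(3\log n)$ factor absorbed into $\tilde\Omega$), then combine Lemmas~\ref{l:t/kn1/2} and \ref{l:k/t}. The only cosmetic difference is in the middle regime $k<t<k\sqrt n$: the paper splits at the threshold $t/k=n^{1/4}$ and picks the appropriate lemma on each side, whereas you observe that the product of the two bounds is $\tilde\Omega(n^{-1/2})$ and take the square root---both arguments are equivalent.
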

\begin{proof}
With probability $1/ (3 \log n)$, the random bin is $B^{\star}$ and we assume this is the case. By Lemma~\ref{l:t/kn1/2} and \ref{l:k/t}, a random subset of $B^{\star}$ of size $\min(k, t)$ is both an $\Omega(\min( 1, t / (k\sqrt{n})))$ and an  $\tilde{\Omega}(\min(1, k/t))$ approximation to $f(S^{\star})$. Assume $ t / (k\sqrt{n}) \leq 1$ and  $k/t \leq 1$, otherwise we are done. Finally, note that if $t/k \geq n^{1/4}$, then $ \Omega(t / (k\sqrt{n})) \geq \Omega(n^{-1/4})$,  otherwise, $\tilde{\Omega}(k/t) \geq \tilde{\Omega}(n^{-1/4})$.
\end{proof}

In the second case of the proof, we assume that $\expect[S \sim \DC_{\sqrt{n} }]{f(S)} \geq f(S^{\star}) / 4$
\begin{lemma}
\label{l:kn1/2}
For any monotone submodular function $f$, if $\expect[S \sim \DC_{\sqrt{n}}]{f(S)} \geq f(S^{\star}) / 4$, then a uniformly random set of size $k$ is a $\min(1/4, k / (4\sqrt{n}))$ approximation to $f(S^{\star})$.
\end{lemma}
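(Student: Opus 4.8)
The plan is to compare a uniformly random set of size $k$ with a uniformly random set of size $\sqrt{n}$, exploiting the hypothesis $\expect[S \sim \mathcal{D}_{\sqrt{n}}]{f(S)} \geq f(S^{\star})/4$ together with monotonicity and submodularity. The argument splits into the two regimes encoded by the $\min$, namely $k \geq \sqrt{n}$ and $k < \sqrt{n}$.

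In the regime $k \geq \sqrt{n}$, I would use the elementary coupling that if $S \sim \mathcal{D}_k$ and $T$ is a uniformly random subset of $S$ of size $\sqrt{n}$, then $T$ is distributed as $\mathcal{D}_{\sqrt{n}}$. Monotonicity gives $f(S) \geq f(T)$ pointwise, so $\expect[S \sim \mathcal{D}_k]{f(S)} \geq \expect[T \sim \mathcal{D}_{\sqrt{n}}]{f(T)} \geq f(S^{\star})/4$; since $k/(4\sqrt{n}) \geq 1/4$ in this regime, this is exactly a $\min(1/4, k/(4\sqrt{n}))$-approximation to $f(S^{\star})$.

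In the regime $k < \sqrt{n}$, I would first establish a \emph{relative} version of Lemma~\ref{l:k/n}: for any fixed set $T$, since $f$ restricted to subsets of $T$ is still monotone submodular, randomly partitioning $T$ into $|T|/k$ blocks of size $k$ and averaging over blocks gives $\expect[S \subseteq T,\, |S| = k]{f(S)} \geq (k/|T|)\, f(T)$ by submodularity (equivalently subadditivity), exactly as in the proof of Lemma~\ref{l:k/n}. Applying this with a random $T \sim \mathcal{D}_{\sqrt{n}}$, and using that a uniformly random size-$k$ subset of a uniformly random size-$\sqrt{n}$ set is itself a uniformly random size-$k$ set, I obtain $\expect[S \sim \mathcal{D}_k]{f(S)} = \expect[T \sim \mathcal{D}_{\sqrt{n}}]{\expect[S \subseteq T,\, |S|=k]{f(S)}} \geq (k/\sqrt{n})\, \expect[T \sim \mathcal{D}_{\sqrt{n}}]{f(T)} \geq (k/(4\sqrt{n}))\, f(S^{\star})$, which is the claimed bound here.

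I do not anticipate a genuine obstacle: the only points requiring mild care are making the subset-of-a-uniform-random-set-is-uniform couplings precise, and the integrality of $\sqrt{n}$ and of $\sqrt{n}/k$, which I would handle by rounding to the nearest integer and absorbing the lower-order loss into a $(1-o(1))$ factor as is done elsewhere in the paper. Combining the two regimes yields the desired $\min(1/4, k/(4\sqrt{n}))$-approximation.
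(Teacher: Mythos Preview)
Your proposal is correct and follows essentially the same approach as the paper: split on $k \gtrless \sqrt{n}$, use monotonicity (via the subset coupling) for $k \geq \sqrt{n}$, and for $k < \sqrt{n}$ apply Lemma~\ref{l:k/n} to a random $T \sim \mathcal{D}_{\sqrt{n}}$ together with the observation that a uniform size-$k$ subset of a uniform size-$\sqrt{n}$ set is uniform of size $k$. The only difference is cosmetic---you spell out the ``relative'' form of Lemma~\ref{l:k/n} explicitly, whereas the paper simply cites it.
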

\begin{proof}
If $k \geq \sqrt{n}$, then a uniformly random set of size $k$ is a $1/4$-approximation to $f(S^{\star})$ by monotonicity. Otherwise, a uniformly random subset of  size $k$ of $N$ is a uniformly random subset of size $k$ of a uniformly random subset of size $\sqrt{n}$ of $N$. So by Lemma~\ref{l:k/n},
$$\expect[S  \sim \DC_k]{f(S)} \geq \frac{k}{\sqrt{n}} \cdot\expect[S \sim \DC_{\sqrt{n}}]{f(S)} \geq  \frac{k}{4 \sqrt{n}} \cdot  f(S^{\star}).$$
\end{proof}

  \begin{lemma}
  \label{l:1/k}
For any monotone submodular function $f(\cdot)$,  the sample $S$ with the largest value among at least $n \log n$ samples of size $k$ is a $1/k$-approximation to $f(S^{\star})$ with high probability.
\end{lemma}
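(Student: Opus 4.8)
The plan is to reduce the whole question to a single well-chosen element of the optimal set $S^{\star}$. Since $f$ is monotone submodular and normalized ($f(\emptyset)=0$), it is subadditive, so $f(S^{\star}) \le \sum_{e \in S^{\star}} f(\{e\})$; as $|S^{\star}| \le k$, there must exist an element $e^{\star} \in S^{\star}$ with $f(\{e^{\star}\}) \ge f(S^{\star})/k$. By monotonicity, every sample $S$ of size $k$ with $e^{\star} \in S$ already satisfies $f(S) \ge f(\{e^{\star}\}) \ge f(S^{\star})/k$, so it suffices to show that with high probability over the samples at least one of the $n\log n$ size-$k$ samples contains $e^{\star}$.

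For this step, observe that a uniformly random set of size $k$ contains the fixed element $e^{\star}$ with probability exactly $k/n$. Hence the probability that none of the (at least) $n\log n$ independent size-$k$ samples contains $e^{\star}$ is at most $(1-k/n)^{n\log n} \le e^{-k\log n} = n^{-k} \le n^{-1}$, using $k \ge 1$. Therefore, with probability at least $1-1/n$, some sample contains $e^{\star}$, and on that event the best size-$k$ sample returned by the algorithm has value at least $f(S^{\star})/k$, i.e.\ it is a $1/k$-approximation to $f(S^{\star})$.

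There is essentially no obstacle here: the argument is a one-element union bound, and the only bookkeeping is to note that among the polynomially many draws from $\mathcal{D}^{sub}$ there are indeed at least $n\log n$ samples of size $k$ with exponentially high probability (a Chernoff bound, since each draw has size $k$ with constant probability), which is exactly what the hypothesis of the lemma encodes. I would present the subadditivity observation and the union bound in that order, keeping the probability of failure at $1/n$ so that the statement holds ``with high probability'' as claimed.
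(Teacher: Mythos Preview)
Your proof is correct and follows essentially the same argument as the paper: identify an element $e^{\star}$ with $f(\{e^{\star}\}) \geq f(S^{\star})/k$ via subadditivity, use monotonicity so that any sample containing $e^{\star}$ suffices, and bound the probability that no sample contains $e^{\star}$. Your version is in fact more explicit than the paper's, which simply asserts that the failure probability is ``polynomially small'' without writing out the $(1-k/n)^{n\log n} \le n^{-k}$ calculation.
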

\begin{proof}
By submodularity, there exists an element $e_i^{\star}$ such that $\{e_i^{\star}\}$ is a $1/k$-approximation to the optimal solution. By monotonicity, any set which contains $e_i^{\star}$ is a $1/k$-approximation to the optimal solution. After observing $n \log n$ samples, the probability of never observing a set that contains $e_i^{\star}$ is polynomially small.
\end{proof}

\begin{corollary}
\label{c:2}
If $\expect[S \sim \DC_{\sqrt{n}}]{f(S)} \geq f(S^{\star}) / 4$, then the sample of size $k$ with the largest value is a $\min(1/4,n^{-1/4}/4)$ approximation to $f(S^{\star})$.
\end{corollary}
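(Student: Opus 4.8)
The plan is to combine the two lower bounds on the value of the best size‑$k$ sample, one coming from Lemma~\ref{l:kn1/2} and one from Lemma~\ref{l:1/k}, and then finish with an elementary case analysis on the size of $k$. Throughout we work under the standing assumption $\expect[S' \sim \mathcal{D}_{\sqrt{n}}]{f(S')} \geq f(S^{\star})/4$, which is precisely the hypothesis of both Corollary~\ref{c:2} and Lemma~\ref{l:kn1/2}, and we let $S := \argmax_{S' \in \mathcal{S},\, |S'| = k} f(S')$ be the set returned in this case.

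First I would note that, since $\mathcal{D}^{sub}$ draws from $\mathcal{D}_k$ with constant probability, a Chernoff bound shows that with high probability at least $n\log n$ of the $\mathrm{poly}(n)$ samples have size exactly $k$; hence Lemma~\ref{l:1/k} applies and $f(S) \geq f(S^{\star})/k$ with high probability. Next I would show that $S$ also has value essentially that of a random size‑$k$ set: $f(S)$ is at least the average of the sample values over the size‑$k$ samples, and by Hoeffding's inequality (Lemma~\ref{l:hoeffding}, exactly as in the proof of Lemma~\ref{l:est}) this average is within $f(N)/\mathrm{poly}(n)$ of $\expect[S' \sim \mathcal{D}_k]{f(S')}$ with high probability. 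Since $f(S^{\star}) \geq \max_i f(\{e_i\}) \geq f(N)/n$ by monotonicity and submodularity, this additive error is $o(f(S^{\star}))$, so by Lemma~\ref{l:kn1/2} we get $f(S) \geq (1 - o(1)) \min(1/4,\, k/(4\sqrt{n})) \cdot f(S^{\star})$.

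It then remains to verify the purely arithmetic fact that $\max\big(\min(1/4,\, k/(4\sqrt{n})),\, 1/k\big) \geq \min(1/4,\, n^{-1/4}/4)$ for every value of $k$. If $k \geq \sqrt{n}$ the first term equals $1/4$ and we are done; otherwise, if $k \geq 2n^{1/4}$ then $k/(4\sqrt{n}) \geq n^{-1/4}/2$, while if $k < 2n^{1/4}$ then $1/k > n^{-1/4}/2$, so in either case the maximum is at least $n^{-1/4}/2 \geq n^{-1/4}/4$. A union bound over the constantly many high‑probability events above then gives that the returned sample is a $\min(1/4,\, n^{-1/4}/4)$‑approximation to $f(S^{\star})$ (up to the harmless $1-o(1)$ factor, which can be absorbed as in Corollary~\ref{c:1}). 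I do not expect a real obstacle: the only mildly delicate point is the concentration estimate showing $f(S)$ is within a $1-o(1)$ factor of $\expect[S' \sim \mathcal{D}_k]{f(S')}$, and even that could be replaced by a Paley–Zygmund‑style argument exploiting $f(S^{\star}) \geq f(N)/n$; everything else is the two cited lemmas together with the inequality above.
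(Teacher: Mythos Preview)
Your proposal is correct and follows essentially the same route as the paper: invoke Lemma~\ref{l:1/k} for the $1/k$ bound, invoke Lemma~\ref{l:kn1/2} for the $\min(1/4,k/(4\sqrt{n}))$ bound, and finish with a case split on $k$ relative to $n^{1/4}$. The only difference is that the paper simply asserts that the best size-$k$ sample inherits the bound from Lemma~\ref{l:kn1/2}, whereas you explicitly justify this via concentration of the empirical mean together with $f(S^{\star}) \geq f(N)/n$; your added detail is correct and fills a step the paper leaves implicit.
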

\begin{proof}
By  \ref{l:kn1/2} and Lemma~\ref{l:1/k},  the sample of size $k$ with the largest value $k$ is a $\min(1/4, k / (4\sqrt{n}))$ and a $1/k$ approximation to $f(S^{\star})$. If $k \geq n^{1/4}$, then $\min(1/4, k / (4\sqrt{n})) \geq \min(1/4,n^{-1/4}/4)$, otherwise, $1/k \geq 1/ n^{-1/4}$.
\end{proof}

By combining Corollaries \ref{c:1} and \ref{c:2}, we obtain the main result for this section.
\tubsubmodular*

\subsection*{Bounded Curvature and Additive Functions}
The algorithm \textsc{MaxMargCont} simply returns the $k$ elements with the largest estimate $\hat{v}_i$.
\tcurv*
\begin{proof}
Let $S^{\star} = \{e^{\star}_1, \ldots, e^{\star}_k\}$ be the optimal solution and $S = \{e_1, \ldots, e_k\}$ be the set returned by Algorithm~\ref{a:curv}.  Let $S_i^{\star} := \{e^{\star}_1, \ldots, e^{\star}_i\}$ and $S_{i} := \{e_{1}, \ldots, e_{i}\}$. If   $e_j \not \in S$, then
\begin{align*}
f_{S_{i-1}}(e_i)  & \geq (1-c) \cdot \expect[S \sim \DC|e_i \not \in S]{f_S(e_i)} && \text{(curvature)} \\
& \geq (1-c) \hat{v}_{i} - \frac{f(N)}{n^2} && \text{Lemma~\ref{l:est} with $\epsilon = \frac{f(N)}{(1-c)n^2}$} \\
& \geq (1-c) \hat{v}_{j} - \frac{f(N)}{n^2} && ( e_i \in S, e_j \not \in S \text{ and by Algorithm~\ref{a:curv}}) \\
& \geq (1-c)^2 f_T(e_j) - \frac{f(N)}{n^2} && \text{(curvature)} 
\end{align*}
for any set $T$ which does not contain $e_j$ and we conclude that
$$f(S) =  \sum_{i=1}^k f_{S_{i-1}}(e_i) \geq (1-c)^2\left(\sum_{i=1}^k f_{S_{i-1}^{\star}}(e^{\star}_i)\right) - \frac{k f(N)}{n^2} = ((1-c)^2-o(1))f(S^{\star}). $$
\end{proof}

\newpage

\section{Recoverability}
\label{s:mprecoverability}
A function $f$ is recoverable for distribution $\mathcal{D}$ if  given samples drawn from $\mathcal{D}$, it is possible to output a function $\tilde{f}(\cdot)$ such that for all sets $S$,
$\left(1 - 1/n^2\right) f(S) \leq \tilde{f}(S) \leq \left(1 +  1/ n^2 \right) f(S)$
with high probability over the samples and the randomness of the algorithm.
\greedyrecThm*
\begin{proof}
We show that the greedy algorithm with $\tilde{f}(\cdot)$ for a recoverable function performs well. The proof follows similarly as the classical analysis of the greedy algorithm. We start with submodular functions and denote by $S_i = \{e_1, \cdots, e_i\}$ the set obtained after the $i$th iteration. Let $S^{\star}$ be the optimal solution, then by submodularity,
\begin{align*}
f(S^{\star}) & \leq f(S_{i-1}) + \sum_{e \in S^{\star} \setminus S_{i-1}} f_{S_{i-1}}(e) \\
& \leq f(S_{i-1}) + \sum_{e \in S^{\star} \setminus S_{i-1}} \left(\left(\frac{1+1/n^2}{1-1/n^2}\right)f(S_i) - f(S_{i-1})\right)
\end{align*}
where the second inequality follows from $\tilde{f}(S_i) \geq \tilde{f}(S_{i-1} \cup \{e\})$ for all $e \in S^{\star} \setminus S_{i-1}$ by the greedy algorithm, so $(1+ 1/n^2) f(S_i) \geq (1 - 1/n^2) f(S_{i-1} \cup\{e\})$. We therefore get that 
$$f(S^{\star}) \leq (1-k) f(S_{i-1}) + k \left(\frac{1+1/n^2}{1-1/n^2} \right)f(S_i) .$$
By induction and similarly as in the analysis of the greedy algorithm, we then get that 
$$f(S_k) \geq \left(\frac{1-1/n^2}{1+1/n^2}\right)^k \left(1 - (1- 1/k)^k\right)f(S^{\star}).$$

Since 
$$\left(\frac{1-1/n^2}{1+1/n^2}\right)^k \geq \left(1 - \frac{2}{n^2}\right)^k \geq 1 - 2k / n^2 \geq  1 - 2/n$$
and $\left(1 - (1- 1/k)^k\right) \geq 1 - 1/e$, the greedy algorithm achieves an $(1 - 1/e - o(1))$-approximation for submodular functions.

For additive functions, let $S$ be the set returned by the greedy algorithm and $\hat{v}_i = \tilde{f}(\{e_i\})$, then 
$$f(S) = \sum_{e_i \in S}  f(\{e_i\}) \geq  \left(\frac{1}{1+1/n^2}\right)  \sum_{e_i \in S}  \hat{v}_i \geq \left(\frac{1}{1+1/n^2}\right)  \sum_{e_i \in S^{\star}}  \hat{v}_i \geq \left(\frac{1-1/n^2}{1+1/n^2}\right) f(S^{\star}).$$
We therefore get a $(1 - o(1))$-approximation for additive functions.

\end{proof}

\laddrec*
\begin{proof}
We have already shown that the expected marginal contribution of an element to a random set of size $k-1$ can be estimated from samples for submodular functions\footnote{For simplicity, this proof uses estimations that we know how to compute. However, The values $f(\{e_i\})$ can be recovered exactly by solving a system of linear equations where each row corresponds to a sample, provided that the matrix for this system is invertible, which is the case with a sufficiently large number of samples by using results from random matrix theory such as in the survey by \citet{BS06}.}. In the case of additive functions, this marginal contribution of an element is its value $f(\{e_i\})$. 

We apply Lemma~\ref{l:est} with $\epsilon = f(\{e_i\}) / n^2$ to compute $\hat{v}_i$ such that $|\hat{v}_i - f(\{e_i\})| \leq f(\{e_i\}) / n^2$. Note that $\epsilon = f(\{e_i\}) / n^2$ satisfies $\epsilon \geq f(S^{\star}) / \poly(n)$ since $v_{min} \geq  v_{max} / \poly(n)$. Let $\tilde{f}(S) = \sum_{e_i \in S} \hat{v}_i$, then $$\tilde{f}(S) \leq \sum_{e_i \in S} (1 + 1/n^2)f(\{e_i\}) = (1 + 1/n^2)f(S)$$ and  $$\tilde{f}(S) \geq \sum_{e_i \in S} (1 - 1/n^2)f(\{e_i\})  = (1 - 1/n^2)f(S).$$
\end{proof}

\lud*
\begin{proof}
We first show that unit demand functions are not recoverable. Define a hypothesis class of functions $\mathcal{F}$ which contains $n$ unit demand functions $f_j(\cdot)$ with $f(\{e_1\})  = j / n$ and $f(\{e_i\})  = 1$ for $i \geq 2$, for all integers $1 \leq j \leq n$. We wish to recover function $f_j(\cdot)$ with $j$ picked uniformly at random. With high probability, the sample $\{e_1\}$ is not observed when $k \geq n^{\epsilon}$, so the values of all observed samples are independent of $j$.  Unit demand functions are therefore not recoverable.

Unit demand functions, on the other hand, are $1$-optimizable from samples. With at least $n\log n$ samples, at least one sample contains, with high probability, the best element $e^{\star} := \argmax_{e_i} f(\{e_i\}) $. Any set containing the best element is an optimal solution. Therefore, an algorithm which returns the sample with the largest value obtains an optimal solution with high probability. 
\end{proof}

\newpage

\section{Learning Models}
\label{s:learningmodels}
As a model for statistical learnability we use the notion of \texttt{PAC} learnability due to Valiant~\cite{Valiant84-PAC} and its generalization to real-valued functions \texttt{PMAC} learnability, due to Balcan and Harvey~\cite{BH11-PMAC}.  
Let $\mathcal{F}$ be a hypothesis class of functions $\{f_{1},f_{2},\ldots\}$ where $f_i : 2^N \rightarrow \mathbb{R}$.  
Given precision parameters $\epsilon>0$ and $\delta>0$, the input to a learning algorithm is samples $\{S_{i},f(S_i)\}_{i=1}^{m}$ where the $S_i$'s are drawn i.i.d. from from some distribution $\mathcal{D}$, and the number of samples $m$ is polynomial in $1/\epsilon,1/\delta$ and $n$. The learning algorithm outputs a function $\widetilde{f}: 2^N \rightarrow \mathbb{R}$ that should approximate $f$ in the following sense.
\begin{itemize}
\item $\mathcal{F}$ is \texttt{PAC}-learnable on distribution $\mathcal{D}$ if there exists a (not necessarily polynomial time) learning algorithm  such that for every $\epsilon, \delta >0$:
$$	\Pr_{S_1,\dots,S_m \sim \mathcal{D}} \left[
 		\Pr_{S \sim \mathcal{D}} \left [ \widetilde{f}(S) \neq f(S) \right ] \geq 1-\epsilon
 		\right] \geq 1-\delta$$
\item $\mathcal{F}$ is $\alpha$-\texttt{PMAC}-learnable on distribution $\mathcal{D}$ if there exists a (not necessarily polynomial time) learning algorithm such that for every $\epsilon, \delta >0$:
$$	\Pr_{S_1,\dots,S_m \sim \mathcal{D}} \left[
 		\Pr_{S \sim \mathcal{D}} \Big [  
 			\alpha \cdot \widetilde{f}(S) \leq f(S) \leq   \widetilde{f}(S) 
 		\Big ] \geq 1-\epsilon
 	\right] \geq 1-\delta$$
\end{itemize}
A class $\mathcal{F}$ is \texttt{PAC} (or $\alpha$-\texttt{PMAC}) learnable if it is \texttt{PAC}- ($\alpha$-\texttt{PMAC})-learnable on every distribution $\mathcal{D}$.

\newpage
\section{Discussion}
\label{s:discussion}

\paragraph{Beyond set functions.}   Thinking about models as set functions is a useful abstraction, but optimization from samples can be considered for general optimization problems.  Instead of the max-$k$-cover problem, one may ask whether samples of spanning trees can be used for finding an approximately minimum spanning tree.  Similarly, one may ask whether shortest paths, matching, maximal likelihood in phylogenetic trees, or any other problem where crucial aspects of the objective functions are learned from data, is optimizable from samples.

\paragraph{Coverage functions.} In addition to their stronger learning guarantees, coverage functions have additional guarantees that distinguish them from general monotone submodular functions.  

\begin{itemize}
\item Any polynomial-sized coverage function can be exactly \emph{recovered}, i.e., learned exactly for \emph{all} sets, using polynomially many (adaptive) queries to a value oracle~\cite{chakrabarty2012testing}.  In contrast, there are monotone submodular functions for which no algorithm can recover the function using fewer than exponentially many value queries~\cite{chakrabarty2012testing}.  It is thus interesting that despite being a distinguished class within submodular functions with enough structure to be exactly recovered via adaptive queries, polynomial-sized coverage functions are inapproximable from samples.  
\item In mechanism design, one seeks to design polynomial-time mechanisms which have desirable properties in equilibrium (e.g. truthful-in-expectation).  Although there is an impossibility result for general submodular functions \cite{dughmi2015limitations}, one can show that for coverage functions there is a mechanism which is truthful-in-expectation  \cite{dughmi2011truthful, dughmi2011convex}.
\end{itemize}

\newpage

\section{Hardness of Submodular Functions}
\label{s:lbsubmodular}

Using the hardness framework from Section~\ref{s:framework}, it is relatively easy to show that submodular functions are not $n^{-1/4 + \epsilon}$-\texttt{OPS} over any distribution $\mathcal{D}$.   The good, bad, and masking functions $g,b,m,m^+$ we use are:
\begin{align*}
 g(S) &= |S|,& \\
 b(S) &= \min(|S|, \log n),& \\
  m(S)& =  \min(1,  |S| / n^{1/2}),& \\
 m^+(S) &= n^{-1/4} \cdot \min(n^{1/2}, |S|).& 
\end{align*}
It is easy to show that $\mathcal{F}(g,b,m,m^+)$ is a class of monotone submodular functions (Lemma~\ref{l:monsub}).
To derive the optimal $n^{-1/4+\epsilon}$ impossibility we consider the cardinality constraint $k = n^{1/4 - \epsilon/2}$ and the size of the partition to be $r = n^{1/4}$. We show that $\mathcal{F}(g,b,m,m^+)$ has an $(n^{1/4 - \epsilon}, 0)$-gap. 
\begin{restatable}{rLem}{lgapsub}
\label{l:gapsub}
The class $\mathcal{F}(g,b,m,m^+)$ as defined above has an $(n^{1/4 - \epsilon}, 0)$-gap with $t = n^{1/2+\epsilon/4}$.
\end{restatable}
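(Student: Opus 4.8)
The plan is to verify the four conditions in the definition of an $(\alpha,\beta)$-gap for $\alpha = n^{1/4-\epsilon}$, $\beta = 0$, and $t = n^{1/2+\epsilon/4}$, recalling that $k = n^{1/4-\epsilon/2}$, $r = n^{1/4}$, that $T_1,\dots,T_r$ are the parts of size $k$, and that $M = N \setminus \bigcup_{j} T_j$ has $n - rk = n - n^{1/2-\epsilon/2}$ elements. Since the four properties only involve the cardinality function $g(S)=|S|$ and the truncations $b$, $m$, $m^+$, each is a short calculation; the only step requiring a genuine idea is the ``identical on small samples'' property, and I would do it first.

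\textbf{Identical on small samples.} Fix $S$ with $|S| \le t$. The point to get right is that $b$ is applied to the \emph{whole} of $T_{-i}$, so I would not bound the $|S \cap T_j|$ individually (their sum over $j \neq i$ could exceed $\log n$ and make $b$ saturate); instead I would bound $|S \cap \bigcup_j T_j| = |S \setminus M|$ directly. When $P \sim \mathcal{U}(\mathcal{P})$, the set $\bigcup_{j=1}^{r} T_j$ is a uniformly random subset of $N$ of size $rk = n^{1/2-\epsilon/2}$, and $rk \cdot |S| / n \le rk \cdot t / n = n^{-\epsilon/4}$, so Lemma~\ref{lem:skewed-concentration} applied with this set and $\ell = \log n$ gives $\prob[P]{|S \setminus M| \ge \log n} = n^{-\Omega(\log n)} = n^{-\omega(1)}$. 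On the complementary event, $|S \cap T_{-i}| \le |S \setminus M| < \log n$ for every $i$, hence $b(S \cap T_{-i}) = \min(|S \cap T_{-i}|, \log n) = |S \cap T_{-i}|$, and therefore $g(S \cap T_i) + b(S \cap T_{-i}) = |S \cap T_i| + |S \cap T_{-i}| = |S \setminus M|$, which does not depend on $i$. This is where the choice $t = n^{1/2+\epsilon/4}$ against $rk = n^{1/2-\epsilon/2}$ is used, and I expect it to be the main (and essentially only) obstacle.

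\textbf{Masking on large samples.} Fix $S$ with $|S| \ge t$. Here no probabilistic argument over $P$ is needed: deterministically $|S \cap M| \ge |S| - |N \setminus M| = |S| - rk \ge n^{1/2+\epsilon/4} - n^{1/2-\epsilon/2} \ge n^{1/2}$ for $n$ large enough, so $m(S \cap M) = \min(1, |S \cap M| / n^{1/2}) = 1$.

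\textbf{Gap and curvature.} For $|S| = k$ I would compute $g(S) = k$, $b(S) = \min(k, \log n) = \log n$ (since $k = n^{1/4-\epsilon/2} \gg \log n$), and $m^+(S) = n^{-1/4}\min(n^{1/2}, k) = n^{-1/4}k$ (since $k < n^{1/2}$); then $g(S)/b(S) = k/\log n \ge n^{1/4-\epsilon}$ (using $n^{\epsilon/2} \ge \log n$) and $g(S)/m^+(S) = n^{1/4} \ge n^{1/4-\epsilon}$, which is the $n^{1/4-\epsilon}$-gap. For the curvature, with $|S_1| = k$ and $|S_2| = k/r$ we get $g(S_1) = k = r \cdot (k/r) = r \, g(S_2)$, so the $\beta$-curvature property holds with $\beta = 0$. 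Combining the four items gives the claimed $(n^{1/4-\epsilon}, 0)$-gap with $t = n^{1/2+\epsilon/4}$, proving the lemma.
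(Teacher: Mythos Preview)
Your proof is correct and follows the same four-step verification as the paper. Two minor simplifications relative to the paper's argument: for the first property you bound $|S\setminus M|=|S\cap\bigcup_j T_j|$ once (the paper instead bounds $|S\cap T_{-i}|$ via Lemma~\ref{lem:skewed-concentration} and implicitly union-bounds over $i$), and for the masking step you observe the deterministic inequality $|S\cap M|\ge |S|-rk\ge n^{1/2}$, whereas the paper appeals to a Chernoff bound over $P$.
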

\begin{proof}
We show that these functions satisfy the properties to have an $(n^{1/4 - \epsilon}, 0)$-gap.
\begin{itemize}
\item \textbf{Identical on small samples.} Assume $|S| \leq n^{1/2 + \epsilon/4}$. Then $|T_{-i}|\cdot |S| / n \leq n^{1/2- \epsilon/2} \cdot n^{1/2 + \epsilon/4} / n \leq n^{-\epsilon/4}$, so by Lemma~\ref{lem:skewed-concentration}, $|S \cap T_{-i}| \leq \log n$ w.p.  $1- \omega(1)$ over $P \sim \UC(\PC)$. Thus $$g(S \cap T_{i}) + b(S \cap T_{- i}) = |S \cap (\cup_{j=1}^r T_j)| $$ with probability $1- \omega(1)$  over $P$.
\item \textbf{Identical on large samples.} Assume $|S| \geq n^{1/2 +\epsilon/4}$. Then $|S \cap M|   \geq n^{1/2}$ with exponentially high probability over $P \sim \UC(\PC)$ by Chernoff bound (Lemma~\ref{l:chernoff}), and $m(S \cap M) = 1$ w.p. at least $1- \omega(1)$.
\item \textbf{Gap $n^{1/4 - \epsilon}$.} Note that $g(S) = k = n^{1/4 - \epsilon/2}$, $b(S) = \log n$, $m^+(S) = n^{-\epsilon/2}$ for $|S| = k$, so $g(S) \geq n^{1/4 - \epsilon} b(S)$ for $n$ large enough and $g(S) = n^{1/4} m^+(S)$.
\item{Curvature $\beta = 0$.} The curvature $\beta = 0$  follows from $g$ being linear.
\end{itemize}
\end{proof}

We show that that we obtain monotone submodular functions.
\begin{lemma}
\label{l:monsub}
The class of functions $\mathcal{F}(g,b,m,m^+)$ is a class of monotone submodular functions.
\end{lemma}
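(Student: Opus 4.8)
The plan is to fix a partition $P \in \mathcal{P}$ and an index $i \in [r]$, write $A := S \cap T_i$, $B := S \cap T_{-i}$, and $C := S \cap M$, and verify directly that
$$f^{P,i}(S) = (1-m(C))\bigl(g(A)+b(B)\bigr) + m^+(C)$$
is monotone and submodular. The starting observation is that, for the chosen functions, $m^+ = n^{1/4}\cdot m$: indeed $n^{1/4}\min(1,|S|/n^{1/2}) = \min(n^{1/4},n^{-1/4}|S|) = n^{-1/4}\min(n^{1/2},|S|)$. Hence
$$f^{P,i}(S) = \bigl(g(A)+b(B)\bigr) + m(C)\bigl(n^{1/4} - g(A) - b(B)\bigr).$$
The second ingredient is the cardinality bound $g(A) + b(B) = |S\cap T_i| + \min(|S\cap T_{-i}|,\log n) \le |T_i| + \log n = n^{1/4-\epsilon/2} + \log n < n^{1/4}$ for $n$ large enough, so the factor $n^{1/4} - g(A) - b(B)$ is always strictly positive. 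Throughout I would use that $g$, $b$, and $m$ are themselves monotone submodular (being $|S|$ and truncations of cardinality) and that the parts $T_i$, $T_{-i}$, $M$ are disjoint.

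For monotonicity I would compute the marginal $f^{P,i}_S(e)$ according to which part contains $e$: for $e \in T_i$ it equals $1 - m(C)\ge 0$; for $e \in T_{-i}$ it equals $(1-m(C))\,b_B(e)\ge 0$ since $b$ is monotone; and for $e \in M$ it equals $m_C(e)\bigl(n^{1/4}-g(A)-b(B)\bigr)\ge 0$, using the positivity of the second factor together with $m_C(e)\ge 0$.

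For submodularity I would use the diminishing-returns characterization: it suffices to check $f^{P,i}_S(e) \ge f^{P,i}_{S\cup\{e'\}}(e)$ for all distinct $e,e'\notin S$, according to the part of $e$ and of $e'$. Using the three expressions for $f^{P,i}_S(e)$ above, most of the nine cases are immediate: adding $e'\in T_i\cup T_{-i}$ leaves $m(C)$ unchanged while it can only shrink $b_B(e)$ (submodularity of $b$) and can only decrease the factor $n^{1/4}-g(A)-b(B)$; adding $e'\in M$ can only increase $m(C)$ (monotonicity of $m$) and can only decrease $m_C(e)$ (submodularity of $m$). The only case that is not completely trivial is $e,e'\in M$, where $f^{P,i}_{S\cup\{e'\}}(e) = m_{C\cup\{e'\}}(e)\bigl(n^{1/4}-g(A)-b(B)\bigr) \le m_C(e)\bigl(n^{1/4}-g(A)-b(B)\bigr) = f^{P,i}_S(e)$, and here the inequality $g(A)+b(B)<n^{1/4}$ is genuinely needed to guarantee the factor is nonnegative so that submodularity of $m$ can be invoked. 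Since $f^{P,i}$ ranges over all $P\in\mathcal{P}$ and $i\in[r]$, this shows $\mathcal{F}(g,b,m,m^+)$ is a class of monotone submodular functions. I expect the $e,e'\in M$ case, and correspondingly the identity $m^+ = n^{1/4}m$ and the cardinality bound, to be the only genuinely substantive points; everything else is bookkeeping with the scalar truncation functions.
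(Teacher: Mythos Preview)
Your proposal is correct and follows essentially the same approach as the paper: both compute the marginal $f^{P,i}_S(e)$ according to which part contains $e$, and both hinge on the same cardinality bound $g(A)+b(B)\le |T_i|+\log n < n^{1/4}$ to handle the masking case. Your identity $m^+ = n^{1/4}\cdot m$ is a clean way to package what the paper computes directly (its piecewise expression for $f_S(e)$ when $e\in M$ is exactly your $m_C(e)\bigl(n^{1/4}-g(A)-b(B)\bigr)$), and your nine-case diminishing-returns check is a more explicit version of what the paper compresses into two sentences.
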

\begin{proof}
We show that the marginal contributions $f_S(e)$ of an element $e \in N$ to a set $S \subseteq N$ are such that $f_S(e) \geq f_T(e) $ for $S \subseteq T$  (submodular) and $f_S(e) \geq 0$ for all $S$ (monotone) for all elements $e$. For $e \in T_j$, for all $j$, this follows immediately from $g$ and $b$ being monotone submodular. For $e \in M$, note that $$f_S(e) = \begin{cases}  - \frac{1}{n^{1/2}}(|S \cap T_i| +\min(|S \cap T_{-i}|, \log n)) + n^{-1/4} & \text{ if } |S \cap M| < n^{1/2} \\ 0 & \text{ otherwise } \end{cases}$$ Since $|S \cap T_i| +\min(|S \cap T_{-i}|, \log n) \leq n^{1/4}$, $f_S(e) \geq f_T(e) $ for $S \subseteq T$ and $f_S(e) \geq 0$ for all $S$.
\end{proof}

  Together with Theorem~\ref{t:gap}, these two lemmas imply the hardness result.

\begin{restatable}{rThm}{tsubmodular}
\label{t:submodular}
For every constant $\epsilon > 0$,  monotone submodular functions are not $n^{-1/4 + \epsilon}$-\texttt{OPS} over any distribution $\mathcal{D}$.
\end{restatable}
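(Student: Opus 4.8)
The plan is to obtain Theorem~\ref{t:submodular} as an immediate consequence of the hardness framework: Lemmas~\ref{l:monsub} and~\ref{l:gapsub} have already done all the real work, so what remains is to feed them into Theorem~\ref{t:gap} and chase the constants. Concretely, Lemma~\ref{l:gapsub} establishes that the class $\mathcal{F}(g,b,m,m^+)$ built from $g(S)=|S|$, $b(S)=\min(|S|,\log n)$, $m(S)=\min(1,|S|/n^{1/2})$, and $m^+(S)=n^{-1/4}\min(n^{1/2},|S|)$ has an $(n^{1/4-\epsilon},0)$-gap, with threshold $t=n^{1/2+\epsilon/4}$, for the choices $k=n^{1/4-\epsilon/2}$ of the cardinality bound and $r=n^{1/4}$ of the number of parts; and Lemma~\ref{l:monsub} certifies that every function in this class is monotone and submodular. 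Since possessing an $(\alpha,\beta)$-gap is exactly the hypothesis of Theorem~\ref{t:gap}, I would invoke that theorem with $\alpha=n^{1/4-\epsilon}$ and $\beta=0$.

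Theorem~\ref{t:gap} then yields that $\mathcal{F}(g,b,m,m^+)$ is not $2\max\!\big(1/(r(1-\beta)),\,2/\alpha\big)$-optimizable from samples over any distribution $\mathcal{D}$. Plugging in $\beta=0$, $r=n^{1/4}$, $\alpha=n^{1/4-\epsilon}$ turns this into: not $2\max\!\big(n^{-1/4},\,2n^{-1/4+\epsilon}\big)$-optimizable. For every constant $\epsilon>0$ and all sufficiently large $n$ the gap term dominates, so this quantity equals $4n^{-1/4+\epsilon}$, which is in turn at most $n^{-1/4+2\epsilon}$ once $n^{\epsilon}\ge 4$. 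Hence monotone submodular functions, which contain $\mathcal{F}(g,b,m,m^+)$, are not $n^{-1/4+2\epsilon}$-\texttt{OPS} over any distribution. Running the whole argument with $\epsilon$ replaced throughout by $\epsilon/2$ — legitimate because Lemmas~\ref{l:gapsub} and~\ref{l:monsub} hold for every positive constant — gives the stated bound $n^{-1/4+\epsilon}$.

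There is essentially no substantive obstacle here; the difficulty lived entirely in constructing $g,b,m,m^+$ and in proving Theorem~\ref{t:gap}. The only points deserving a second glance are (i) verifying that $2n^{-1/4+\epsilon}$ really exceeds $n^{-1/4}$, so the maximum resolves to the gap term rather than the curvature term, (ii) absorbing the leading constant into the exponent via $4\le n^{\epsilon}$ for large $n$, and (iii) the $\epsilon\mapsto\epsilon/2$ bookkeeping so that the final exponent reads exactly $-1/4+\epsilon$; one should also confirm that the threshold $t=n^{1/2+\epsilon/4}$ supplied by Lemma~\ref{l:gapsub} is the common $t$ demanded by both the ``identical on small samples'' and ``masking on large samples'' clauses in the definition of an $(\alpha,\beta)$-gap, which it is.
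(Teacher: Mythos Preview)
Your proposal is correct and follows exactly the paper's approach: the paper's entire proof is the one-line remark that Lemmas~\ref{l:gapsub} and~\ref{l:monsub} together with Theorem~\ref{t:gap} imply the result. Your additional constant-chasing (resolving the $\max$, absorbing the factor $4$ into $n^{\epsilon}$, and the $\epsilon\mapsto\epsilon/2$ substitution) is accurate and simply spells out details the paper leaves implicit.
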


\end{document}